\theoremstyle{plain}
\newtheorem{definition}{Definition}
\newtheorem{theorem}{Theorem}
\newtheorem{problem}{Problem}
\newtheorem{lemma}{Lemma}
\newtheorem{observation}{Observation}
\newtheorem{question}{Question}
\newcommand{\ignore}[1]{} 
\newcommand{\comment}[1]{}
\newcommand{\spc}{\mathscr{S}}
\newcommand{\M}{\mathscr{M}}
\newcommand{\I}{\mathcal{I}}
\newcommand{\A}{\mathcal{A}}
\renewcommand{\S}{\mathcal{S}}
\renewcommand{\P}{\mathscr{P}}
\newcommand{\Q}{\mathcal{Q}}
\newcommand{\tp}[1]{\mathbf{p}^{(#1)}}
\newcommand{\ts}[1]{\mathbf{s}^{(#1)}}
\newcommand{\tw}[1]{w^{(#1)}}
\newcommand{\ta}[1]{\alpha^{(#1)}}
\newcommand{\br}{{\boldsymbol{\psi}}}
\newcommand{\rr}{\boldsymbol{\psi}}
\newcommand{\bp}{{\bf{p}}}
\newcommand{\bs}{{\bf{s}}}
\newcommand{\Pre}[1]{\ensuremath{P_{#1}}}
\newcommand{\allbad}{\#_{\mbox{\tiny{bad}}}}
\newcommand{\C}{{\mathscr{C}}}
\newcommand{\pl}{ {\mbox{polylog} }}
\newcommand{\poly}{ {\mbox{poly} }}
\renewcommand{\Box}{ {\mbox{Box\/} }}
\newcommand{\E}{\mathbb{E}}
\newcommand{\R}{\mathbb{R}}
\DeclareMathAlphabet{\mathpzc}{OT1}{pzc}{m}{it}
\renewcommand{\O}{\mathcal{O}}
\newcommand{\vol}{\mbox{Vol\/}}
\newcommand{\disk}[2]{\operatorname{Disk}(#1,#2)}
\newcommand{\dual}[1]{\operatorname{dual}(#1)}
\def\T{\mathcal{T}}
\def\D{\mathcal{D}}
\newcommand{\Frechet}{Fr\'echet\xspace}
\newcommand{\distFr}[2]{\ensuremath{d_F\pth{#1,#2}}}
\providecommand{\eps}{{\varepsilon}}%
\providecommand{\pth}[2][\!]{#1\left({#2}\right)}
\renewcommand{\Re}{\mathbb{R}}
\newcommand{\etal}{\textit{e{}t~a{}l.}\xspace}
\newcommand{\seclab}[1]{\label{sec:#1}}
\newcommand{\secref}[1]{Section~\ref{sec:#1}}
\newcommand{\lemlab}[1]{{\label{lem:#1}}}
\newcommand{\lemref}[1]{Lemma~\ref{lem:#1}}
\newcommand{\figlab}[1]{\label{fig:#1}}
\newcommand{\figref}[1]{Figure~\ref{fig:#1}}
\newcommand{\obslab}[1]{\label{obs:#1}}
\newcommand{\obsref}[1]{Observation~\ref{obs:#1}}
\title{On the complexity of range searching among curves}
\begin{document}
\author{Peyman Afshani \and Anne Driemel}
\date{}
\maketitle
\begin{abstract}
Modern tracking technology has made the collection of large numbers of densely sampled trajectories of moving objects widely available.  We consider a fundamental problem encountered when analysing such data: Given $n$ polygonal curves $S$ in $\Re^d$, preprocess $S$ into a data structure that answers queries with a query curve $q$ and radius $\rho$ for the curves of $S$ that have \Frechet distance at most $\rho$ to $q$.  

We initiate  a comprehensive analysis of the space/query-time trade-off for this data structuring problem.  Our lower bounds imply that any data structure in the pointer model model that achieves $Q(n) + O(k)$ query time, where $k$ is the output size, has to use roughly $\Omega\pth{(n/Q(n))^2}$ space in the worst case, even if queries are mere points (for the discrete \Frechet distance) or line segments (for the continuous \Frechet distance).  More importantly, we show that more complex queries and input curves lead to additional logarithmic factors in the lower bound.  Roughly speaking, the number of logarithmic factors added is linear in the number of edges added to the query and input curve complexity. This means that the space/query time trade-off worsens by an {\em exponential} factor of input and query complexity.  This behaviour addresses an open question (see~\cite{a12,Chan.ParTree}) in the range searching literature concerning multilevel partition trees which may be of independent interest, namely, whether it is possible to avoid the additional logarithmic factors in the space and query time of a multilevel partition tree.  We answer this question negatively.

On the positive side, we show we can build data structures for the \Frechet distance by using semialgebraic range searching.  The space/query-time trade-off of our data structure for the discrete \Frechet distance  is in line with the lower bound, as the number of levels in the data structure is $O(t)$,  where $t$ denotes the maximal number of vertices of a curve. For the continuous \Frechet distance, the number of levels increases to $O(t^2)$.

\end{abstract}
\thispagestyle{empty}

\setcounter{page}{0}
\pagebreak

\section{Introduction}

Recent technological advances have made it possible to collect trajectories of
moving objects, indoors and outdoors, on a large scale using various
technologies, such as GPS~\cite{Laurila_MDC_2012}, WLAN, Bluetooth,
RFID~\cite{rfid-2016} or video analysis~\cite{stats-2017}.  In this paper we
study time-space trade-offs for data structures that store trajectory data and
support similarity retrieval.
In particular we focus on the case where the similarity or distance between two
curves is measured by the \Frechet distance. This distance measure is widely
studied in computational geometry and gives high-quality results for trajectory
data.  We focus on the case where the query should return all input curves in a
specified \emph{range}, given by a query curve $q$ and a radius $\rho$.  The range
is defined as the set of curves that have \Frechet distance at most $\rho$ to $q$,
i.e., the metric ball of radius $\rho$ centered at $q$.  Our study is timely
as it coincides with the 6th GIS-focussed algorithm competition hosted by ACM
SIGSPATIAL\footnote{ 6th ACM SIGSPATIAL GISCUP 2017, see also
\url{http://sigspatial2017.sigspatial.org/giscup2017/}} drawing attention to this 
very problem from the practical domain. 

At the same time, our results address a broader question
concerning multilevel partition trees, a very important classical tool from the
range searching literature. See the following survey for more
background~\cite{Agarwal.Erickson.survey98}, but briefly, in range searching the
goal is to store a set of $n$ points such that the points in a query region can be
found efficiently. 
One of the most prominent problems is when the queries are simplicies in $\R^d$,
a problem known as {\em simplex range searching\/}.
Interestingly, the known solutions for simplex range searching can be easily 
repackaged into multilevel data structures that can even solve more difficult problems,
such as simplex-simplex searching: store a set of $n$ simplicies such that the simplicies
that are entirely contained in a query simplex can be found efficiently. 
For some illustrative examples on the versatility and power of multilevel data structures
see~\cite{Chan.ParTree}.

The concept of multilevel partition tree based data structures is broad and
mathematically not well-defined. 
Roughly speaking, in a multilevel data structure, first a base data structure
is built that defines some notion of first generation canonical sets. 
Then on the first generation canonical sets, a secondary set of data structures
are built which in turn defines a second generation canonical sets. 
Continuing this ``nested'' structure  for $t$ levels would yield a multilevel data structure
with $t$ levels.
This flexibility, allows more complex problems (such as simplex-simplex searching problem mentioned
above) to be solved and with very little effort and by only degrading space or query time by small factors.
It seems intuitively obvious that each additional level  should blow up the space or the query time of the
data structure and in fact all known data structures suffer an exponential factor in $t$ (often a $\log^{O(t)} n$ factor).
It has been conjectured that this should be the case but not even a polynomial blow up was proven before
(see~\cite{a12,Chan.ParTree}).

\paragraph{Exponential vs polynomial dependency.}
To better understand the situation, let us momentarily
focus on planar data structures with linear or near-linear space. 
For the main problem of simplex range reporting, there exist data structures with
$O(n)$ space and $O(n^{1-1/d}+k)$ query time where $k$ is the output size. 
This query time is conjectured to be optimal and there exist lower bounds that almost match it
up to a $2^{O(\sqrt{\log n})}$ factor (\cite{a12}) or $\log n$ factor (\cite{Chazelle.LB.JAMath89}). 
Thus, the base problem of simplex range reporting is well-understood.
However, beyond this, things are less clear. 
In particular, we would like to know what happens if the query regions or the input are more 
complex objects.
Assume the input is a set of $n$ points but the query is a tuple of $t$ hyperplanes that define a 
polytope.
To report the set of points inside the query polytope, 
we can triangulate the query polytope into $O(t^d)$ simplices
and then we can ask a simplex range searching query for each resulting simplex. 
This will not alter the space consumption at all and it will only blow up the
query time by a factor $O(t^d)$ but for a constant $d$,
this factor is a fixed {\em polynomial} of the query complexity, $t$.
This example shows that such ``obviously more complex'' queries can actually be handled very efficiently.
Now consider what happens if both queries and input are complex objects.
Consider a problem in which the input is a set containing $n$ tuples where the
$i$-th tuple $\bp_i$ is composed of 
$t$ points, i.e., $\bp_i = (\tp{1}_i, \dots, \tp{t}_i)$ where each $\tp{j}_i \in \R^d$,
and the query is a tuple
of $t$ simplices $(\sigma_1, \dots, \sigma_t)$.  
The goal is to report all the
input tuples $\bp_i$ such that $\tp{j}_i$ lies inside $\sigma_j$ for every $1 \le j \le t$.  
In this case, seemingly, the best thing to do is to build a multilevel
data structure composed of $t$ levels.  
Such a data structure will consume $O(n\log^{t-1}n)$ space and will have the
query time of $O(n^{1-1/d}\log^{t-1}n)$ using the best known results in the 
literature on multilevel data structures~\cite{Chan.ParTree}.
The crucial difference here is that both space and query time degrade
\emph{exponentially} in $t$ as opposed to the polynomial dependency in the previous case.
The main open question here is whether this exponential factor is required. 

The picture becomes more interesting once one looks at the history of multilevel data structures
and once one realizes that there are many ways to build them.
In Matou\v sek's~\cite{Matousek.RS.hierarchal.93} original paper, one would
sacrifice a $\log^2 n$ factor for space
and a $\log n$ factor for the query time but this comes at a larger pre-processing time.
If one wishes to reduce the preprocessing time, then the loss increases to an
unspecified number of $\log n$ factors per level.
Chan~\cite{Chan.ParTree} offers the currently best known way to build multilevel data structures
at only one $\log n$ factor loss for space and query time per level (in fact, in some cases, we can
do even better).
This brings us to the main lower bound question regarding multilevel data structures.

\begin{question}
Is it possible to avoid the additional logarithmic factors for every level in the space and query time of a multilevel data structure?
\end{question}

We at least partially settle this open question by showing that the space/query time tradeoff must blow up by at least a
roughly $\log n$ factor for every increase in $t$.
To do that, we show that a particular problem that can be solved using
multilevel data structures is hard.

We remark that the above question is ambiguous since we did not provide a
mathematically precise definition of a ``multilevel data structure''. Such a
definition would have to capture the versatility of the multilevel approach to
data structuring. For instance, multilevel partition trees can have different
fan-outs at different levels, they can selectively use duality restricted to
individual levels, or they can use different auxiliary data structures mixed in
with them.  A crucial and arguably most useful property of the multilevel
structures is that different levels can handle completely independent
subproblems. By lack of a precise definition that is commonly agreed upon and
perhaps in the hope to prove an even stronger statement, we take a different
approach: We prove a lower bound for a concrete relevant multilevel data
structuring problem (Problem~\ref{prob:mlsp} on page \pageref{prob:mlsp}).

The problem only involves independent points and simplicies (the basic
components of a simplex range reporting problem) and thus any multilevel data
structure must be able to solve the  problem.  This means, a lower bound for
this  problem gives a lower bound for the general class of multilevel data
structures.  From this point of view, our lower bound is in fact stronger:
it shows that the multilevel stabbing  problem is strictly more difficult than
the ordinary simplex range searching problem, even if we are not restricted to
use ``multilevel data structures.''

Not only that, but we also show that the lower bound also generalizes to geometric search
structures based on \Frechet distance: preprocess a set of $n$ polygonal curves of complexity
$t$ such that given a query polygonal curve of complexity $t$, we can find all input curves
within some distance $\rho$ of the query (Problems~\ref{prob:dds} and
$\ref{prob:fds}$ on page \pageref{prob:dds}).  This lower bound is not obvious
and it also provides additional motivation to study multilevel data structures.
The fact that we can extend our lower bound to such a practically relevant
problem emphasizes the relevance of our lower bounds.

\section{Definitions and Problem Statement}
A polygonal chain $s$ is a sequence of vertices $s_1,\dots,s_t \in \Re^d$. The
discrete \Frechet distance of two chains $s$ and $q$ is defined using the concept of traversals.
A \emph{traversal} is a sequence of pairs of indices $\{(i_1,j_1), (i_2,j_2), \dots, (i_k,j_k)\}$ 
such that $i_1=1, j_1=1, i_k=t_s$, and $j_k=t_q$ and one of the following holds
for each pair $(i_{m}, j_{m})$ with $m>1$:
\begin{inparaenum}[(i)]
\item $i_{m} = i_{m-1}$ and $j_m=j_{m-1}+1$, or
\item $i_{m} = i_{m-1}+1$ and $j_m=j_{m-1}+1$, or
\item $i_{m} = i_{m-1}+1$ and $j_m=j_{m-1}$.
\end{inparaenum}
The discrete \Frechet distance is defined as 
\begin{equation}
\distFr{s}{q} = \min_{T \in \T} \quad \max_{(i,j) \in T} \quad \| s_{i} - q_{j}\|  
\label{eq:discrete-frechet-def}
\end{equation}

Finding the traversal that minimizes the \Frechet distance is called the \emph{alignment problem}.

\medskip 
The continuous \Frechet distance is defined for continuous curves.
For a polygonal chain $s$, we obtain a polygonal curve by linearly interpolating $s_i$ and $s_{i+1}$, 
i.e., adding the edge $\overline{s_i s_{i+1}} = \{\alpha s_i + (1-\alpha)s_{i+1} ~|~ \alpha \in [0,1]\} $ 
in between each pair of consecutive vertices. 
The curve $s$ has a uniform parametrization that allows us to view it as a
parametrized curve $s: [0,1] \rightarrow \Re^d$.  The \Frechet distance between
two such parametrized curves is defined as 
\begin{equation} 
\distFr{s}{q} = \min_{f: [0,1] \rightarrow [0,1]} \quad \max_{\alpha \in [0,1]} \quad \| s(\alpha) - q(f(\alpha))\|,  
\label{eq:frechet-def}
\end{equation}
where $f$ ranges over all continuous and monotone bijections with $f(0)=0$ and $f(1)=1$.

\medskip
In this paper we consider the following two problems based on the \Frechet distance.

\begin{problem}[Discrete Frechet Queries]\label{prob:dds} Let $S$ be an input
set of $n$ polygonal chains in $\R^d$ where each polygonal chain
has size at most $t_s$.  Given a parameter $\rho$, we would like to store $S$ in a
data structure such that given a query polygonal chain $q$ of length at most
$t_q$, we can find all the chains in $S$ that are within the discrete Frechet
distance $\rho$ of $q$, see Equation \eqref{eq:discrete-frechet-def}.
\end{problem}

\begin{problem}[Continuous Frechet Queries]\label{prob:fds} Let $S$ be an input
set of $n$ polygonal curves in $\R^d$  where each polygonal curve 
consists of at most $t_s$ vertices.  Given a parameter $\rho$, we would like to store $S$ in a
data structure such that given a query polygonal curve $q$ consisting of 
$t_q$ vertices, we can find all the curves in $S$ that are within the continuous Frechet
distance $\rho$ of $q$, see Equation \eqref{eq:frechet-def}.
\end{problem}

For both problems, the output size is the number of input curves that match the query
requirements. 

Since we will be working with tuples of points and geometric ranges, we introduce the following
notations to simplify the description of our results. 

  A \emph{$t$-point} $\bf{p}$ in $\R^d$  is a tuple of $t$ points 
  $(\tp{1},\tp{2}, \dots, \tp{t})$ where each $\tp{i}$ is a point in
  $\R^d$.
  The concepts of $t$-hyperplanes and $t$-halfspaces, and etc. are defined similarly.
    A slab $s$ is the region between two parallel hyperplanes.
    The {\em thickness} of $s$ is the distance between the
    hyperplanes and it is denoted by $\tau(s)$.
  A \emph{$t$-slab} $\bf{s}$ in $\R^d$ is a tuple of $t$ slabs 
  $(\ts{1},\ts{2}, \dots, \ts{t})$ where each $\ts{i}$ is a slab. 
  The {\em thickness} of $\bs$ is defined as
  $\prod_{j=1}^t \tau(\ts{j})$.
  A $t$-point $\bf{p}$ is inside a $t$-slab $\bf{s}$ if
  $\tp{i}$ is inside $\ts{i}$ for every $1 \le i \le t$.
  We will adopt the convention that the $i$-th point $\bp$ in a $t$-point
  is  denoted by $\tp{i}$.
  The same applies to the other definitions.

We will also show a lower bound for the following concrete problem.
\begin{problem}[Multilevel Stabbing Problem]
\label{prob:mlsp}
  Let $S$ be an input set containing $n$ $t$-slabs.
  We would like to store $S$ in a data structure such that given a query $t$-point
  $\bp$ we can find all the $t$-slabs $\bs \in S$ 
  such that $\bs$ contains $\bp$.
\end{problem}

\paragraph{The pointer machine model.}
The model of computation that we use for our lower bound is the pointer machine model.
This model is very suitable for proving lower bounds for range reporting problems. 
Consider an abstract data structure problem where the input is a set $S$ of elements and where
a query $q$ (implicitly) specifies a subset $S_q \subset S$ that needs to be output by the data structure.
In the pointer machine model, the storage of the data structure is represented using 
a directed graph $M$ with constant out-degree where each vertex in $M$ corresponds to one memory cell.
Each memory cell can store one element of $S$.
The constant out-degree requirement means that each memory cell can point to at most a constant 
number of other memory cells.  
The elements of $S$ are assumed to be atomic, meaning,
to answer a query $q$, for each element $v \in S_q$, the data structure must visit
at least one vertex (i.e., cell) that stores $v$.
To visit that subset of cells, the data structure starts from a special vertex of $M$ (called the root) and 
follows pointers: the data structure can visit a memory cell $u$ only if it has already visited a cell
$v$ such that $v$ points to $u$. 
There is no other restriction on the data structure, i.e., it can have unlimited information and
computational power. 
The size of the graph $M$ lower bounds the storage usage of the data structure 
and the number of nodes visited while answering a query lower bounds the query time of the data structure.
Thus, when proving lower bounds in the pointer machine model, it is sufficient to show that if the data
structure operates on a small graph $M$, then during the query time, it has to visit a lot of cells
(or vice versa).

\section{Related Work on \Frechet Queries}
Few data structures are known which support \Frechet queries of some type.
We review the space and query time obtained by these data structures.
In the following, let $n$ denote the number of curves in the data structure
and let $t$ denote the maximum number of vertices of each curve.
The data structures can be distinguished by the type of queries answered:
\begin{inparaenum}[(i)]
\item nearest neighbor queries \cite{i-approxnn-02, ds-lshf-17},
\item range counting queries \cite{de2013fast, gs-fqt-15},
\end{inparaenum}

Before we discuss these data structures, we would like to point out that under
certain complexity-theoretic assumptions  both (i) and (ii) above become much
harder for long curves, and in particular for $t \in \omega\pth{\log n}$.  More
specifically, there is a known reduction from the orthogonal vectors problem
which implies that, unless the orthogonal vectors hypothesis fails, there
exists no data structure for range searching or nearest neighbor searching
under the (discrete or continuous) \Frechet distance that can be built in
$O\pth{n^{2-\eps}\poly(t)}$ time and achieves query time in $O\pth{n^{1-\eps}
\poly(t)}$ for any $\eps>0$ (see also the discussion in \cite{ds-lshf-17}).

A data structure by Indyk supports approximate nearest-neighbor searching under
the discrete \Frechet distance~\cite{i-approxnn-02}. The query time is in
$O\pth{t^{O(1)}\log n}$ and the approximation factor is in $O(\log t + \log\log
n)$. The data structure uses
space in $O\pth{|X|^{\sqrt{t}} (t^{\sqrt{t}} n)^2}$, where $|X|$ is the size of
the domain on which the curves are defined.  The data structure precomputes all
answers to queries with curves of length $\sqrt{t}$, leading to a very high
space consumption.   A recent result by
Driemel and Silvestri~\cite{ds-lshf-17} shows that it is possible to construct
locality-sensitive hashing schemes for the discrete \Frechet distance.  One
of the main consequences is a $O(t)$-approximate near-neighbor data structure
that achieves $O(t \log n)$ query time and $O(n \log n + tn)$  space. 

As for the continuous \Frechet distance, de Berg, Gudmundsson and Cook study
the problem of preprocessing a single polygonal curve into a data structure to
support range counting queries among the subcurves of this
curve~\cite{de2013fast}.  The data structure uses a multilevel partition tree to store
compressed subcurves. This representation incurs an approximation factor of
$2+3\sqrt{2}$ in the query radius.  For any parameter $n\leq s\leq n^2$,
the space used by the data structure is in $O\pth{s~\pl(n)}$. The queries are
computed in time in $O\pth{\frac{n}{\sqrt{s}}~\pl(n) }$.  However, the data
structure does not support more complex query curves than line segments.

The motivation to study the subcurves of a single curve originated from the
application of analyzing single trajectories of individual team sports players 
during the course of an entire game. A different application, namely   the map
matching of trajectories onto road maps~\cite{giscup2012} led Gudmundsson and
Smid to study slightly more general input---consider the geometric graph that
represents a road map and a range query among the set of paths in the graph.
Gudmundsson and Smid study the case where the input belongs to a certain class
of geometric trees~\cite{gs-fqt-15}. Based on the result of de Berg,
Gudmundsson and Cook they describe a data structure which supports approximate
range emptiness queries and can report a witness path if the range is
non-empty. Furthermore, the queries can be more complex than mere line
segments.  The data structure has size $O(n~\pl(n))$, preprocessing time in
$O(n~\pl(n))$ and answers queries with polygonal curves of $t$ vertices in
$O(t~\pl(n))$ time.  

It should be noted that the latter two data structures~\cite{de2013fast,
gs-fqt-15} make strict assumptions on the length of the edges of the query
curves with respect to the query radius which
seems to simplify the problem. While it is widely believed, based on
complexity-theoretic assumptions, that there is no $O(t^{2-\eps})$-time
algorithm for any $\eps>0$ that can decide if the discrete or continuous
\Frechet distance between two curves is at most a given value of
$\delta$~(see Bringmann~\cite{b-seth-14}), this problem drastically simplifies 
if $\delta$ is smaller
than half of the maximal length of an edge of the two curves. In particular, a
simple linear scan can solve the decision problem in $O(t)$ time.
Our results do not make any assumptions on the length of the edges of the
curves or the distribution of the edges.

%

\section{Our Results}
We show the first upper and lower bounds for exact range searching under the discrete and
continuous \Frechet distance. 
Our lower bounds are in fact obtained for the multilevel stabbing problem
and it proves that the space $S(n)$ required for  answering the multilevel stabbing queries in $Q(n) + O(k)$ time
must obey
\begin{align}
  S(n) = \Omega(( \frac{n}{Q(n)} )^2) \cdot \frac{\left( \frac{\log(n/Q(n))}{\log\log n}\right)^{t-1}}{2^{O(2^t)}} \mbox{\; as well as\; }
  S(n) = \Omega\left(  \frac{n}{Q(n)}\right)^2 { \Theta\left( \frac{\log(n/Q(n))}{t^{3+o(1)}\log\log n}\right)^{t-1-o(1)}}.\label{eq:results}
\end{align}
Here $k$ is the size of the output and $t$ is the number of levels.
Based on what we have discussed, not only this proves the first separation between the simplex range reporting data structures
and multilevel data structures, but it also shows space should increase exponentially in $t$, as long as $t \le (\log n)^{1/3 - \varepsilon}$.

For the \Frechet distance queries, a set of $n$ polygonal curves in
$\R^d$ is given as input, where each input curve consist of at most $t_s$
vertices.  A query with a curve of $t_q$ vertices and query radius $\rho$
returns the set of input curves that have \Frechet distance at most $\rho$ to $q$.

\newcounter{r-counter}
\begin{compactenum}[(i)]
\item Assume there exists a data structure that achieves $Q(n)$ query time  and
uses $S(n)$ space in the pointer model. We show that the $S(n)$ must obey the same
lower bound in Eq.~\ref{eq:results}
where $t \leq \min\pth{t_s/4, t_q/2}$.
\setcounter{r-counter}{\value{enumi}}
\end{compactenum}

In addition, we show how to build multilevel partition trees for the discrete
and the continuous \Frechet distance using semi-algebraic range searching:
\begin{compactenum}[(i)]
\setcounter{enumi}{\value{r-counter}}
\item For the discrete \Frechet distance we descibe a data structure which uses space in
$\O\pth{n (\log\log n)^{t_s -1}}$ and achieves query time in $\O\pth{n^{1-1/d}
\cdot \log^{O(t_s)} n \cdot t_q^{\O(d)}}$, assuming $t_q=\log^{O(1)} n$.  

\item For the continuous \Frechet distance we describe a data structure for $d=2$
which uses space in $\O\pth{n (\log\log n)^{O(t_s^2)}}$ and achieves query
time in $\O\pth{\sqrt{n} \log^{O(t_s^2)} n }$, assuming $t_q=\log^{O(1)} n$.
For the second data structure, the query radius has to be known at preprocessing time.
\end{compactenum}

\section{Outlines of the Technical Proofs}

\subsection{Outline of the lower bounds}
We first prove lower bounds for the reporting variant of the multilevel stabbing problem
in the pointer machine model. 
By what we discussed, this gives  a lower bound for multilevel data structures.
Next, we build sets of input curves and query curves that show the same lower bounds can 
be realized under the \Frechet distance.
Before we sketch the lower bound construction, we say a few words about the
lower bound framework we use.

\subsubsection{The framework of the proofs}
\seclab{framework}
Our reporting lower bound uses a volume argument by Afshani~\cite{a12}. 
This argument can be used to
show lower bounds for stabbing reporting queries, i.e., the input is a set of ranges and
a query with a point returns all ranges that contain this point. 
Imagine, we want to answer any query in $Q(n) + O(k)$ time where $k$ is the size of the output.
In order to set up the volume argument we need to define a set of queries $\Q$ that has volume one
and a set of input ranges, such that
\begin{inparaenum}[(i)]
\item each query point is covered by sufficiently many ranges (by at least $Q(n)$ ranges), and 
\item the volume of the intersection of any subset of $\ell$ ranges is sufficiently small, i.e.,
  at most $v$.
\end{inparaenum}
Then, the framework shows that the space is asymptotically lower bounded by
$Q(n) v^{-1}2^{-O(\ell)}$.
The intuition of why the framework works is the following: the intersection of
some subset of ranges is the locus 
of (query) points that must output those particular subset of ranges.
Thus, if the intersection of every subset of $\ell$ ranges is small, 
then our set of queries $\Q$ contains many different queries that each output
a different subset of input ranges. Thus, precomputing (and implicitly
storing) partial answers must increase the space according to these volumes.

\subsubsection{Multilevel Stabbing Problem}

We start with the unit cube $\Q$ in $\R^{2t}$ where $t$ denotes the number of levels.
In particular, we view $\Q$ as the Cartesian product of $t$ unit squares:
$\Q  = \Q_1 \times \dots \times \Q_t$.
The input is a set of $n$ $t$-slabs in $\Q$.
The query is a $t$-point in $\Q$.
The main part of the proof is an intricate construction of the $t$-slabs.

The main result here is Lemma~\ref{lem:construction} (see page~\pageref{lem:construction}).
We will not repeat the exact technical claim 
and instead we will focus on the general ideas and the intuition behind them. 
The first step is to build $r=Q(n)$ different sets of $t$-slabs $\S_1, \dots, \S_r$ of 
roughly $n/r$ size, such that the slabs in each set $\S_i$ tile $\Q$, i.e., any 
$t$-point is covered by exactly one $t$-slab.
This will directly satisfy condition (i) of the framework described in \secref{framework}. 
The difficult part is to find a good construction that guarantees that every subset of 
$\ell$ $t$-slabs have a small intersection.

To build $\S_i$, 
we build a set $\S_{i,j}$ of  two-dimensional slabs in each unit square $\Q_j$ such that
they together tile $\Q_j$.
Then, $\S_i$ is taken to be the set of $t$-slabs that one obtains by
creating the Cartesian product of all the slabs created in $\Q_1, \dots, \Q_t$.
See Figure~\ref{fig:hyperslabs} on page \pageref{fig:hyperslabs} for an illustration.
In order to obtain small intersection volume we would like to adjust the
thickness of the two-dimensional slabs.  While adjusting the thickness of the
slabs in each universe, we make sure that we 
create roughly $n/r$ $t$-slabs in $\S_i$:
This boils down to making sure that the product of the thicknesses of the two-dimensional
slabs is a fixed value $\tau$.
We have $t$ degrees of freedom to pick the orientation of the slabs and thus we can
represent the set of angles that define the orientation of the slabs in each $\Q_j$ 
by a point in $\R^t$; we call these points, ``parametric points''.
Thus, every set $\S_i$ has one parametric point and in our construction there are
$r$ parametric points in total.

The parametric points need to be placed very carefully. 
In particular our construction places the parametric point such that the volume of the smallest
axis-aligned rectangle created by any two parametric points is maximized (see Lemma~\ref{lem:orthc}
in page \pageref{lem:orthc}).
Intuitively, this means that if the points are ``well-spread'' so that no small axis-aligned
box can contain two points, then the volume of the intersection of the slabs is also going to be large. 

Regarding the thicknesses of the slabs, 
we have $t-1$ degrees of freedom since the product of the thicknesses is set to be a fixed
value $\tau$. 
However, we need to place more restrictions on the thicknesses. 
We make sure that the different thicknesses are sufficiently different.
In particular, we set the width to be in the form of $R^w$ for some fixed parameter $R$ and some
integer $w$. 
This means that for each slab we allow roughly a logarithmic number of different possible thicknesses. 
Thus, the $t-1$ degrees of freedom in choosing the thicknesses are translated to 
freedom in choosing $t-1$ integers in some narrow range (between 0 and roughly $\log_R n$).
Note that this freedom is only present for the first $t-1$ two-dimensional slabs,
that is, in $\Q_1, \dots, \Q_{t-1}$ and the thickness of the last slab is determined
based these values and the value of $\tau$.
This further implies that the sum of the integers that we choose 
should also be within the same narrow range. 
Nonetheless, unlike the case of angles, our choices in picking these integers
are represented {\em combinatorially\/} as a single value and
we treat it like a color. 
In other words, we define a set of all the available colors (roughly $((\log_R n)/t)^{t-1})$
and then associate each set $\S_i$ with a color; the color determines the thickness of slabs in 
$\Q_1, \Q_2, \dots, \Q_t$.

Thus, after placing $r$ parametric points in $\R^t$, we need to color each point with a color.
This coloring needs to be done carefully as well. 
The placement and the coloring of the points are done using one lemma (Lemma~\ref{lem:orthc}
in page \pageref{lem:orthc}).

However, more work is required to make the construction work. 
We need to impose some favorable combinatorial structure on the set of colors that we create
by removing some of the colors. 
This is done by sampling a small number of colors. 

Finally, we try to bound the volume of the intersection of any $\ell$ of the $n$ $t$-slabs that we created.
Any two slabs in $\S_i$ are disjoint and thus for a non-zero intersection, the $\ell$ slabs should come
from $\ell$ different sets, e.g., $\S_1, \dots, \S_\ell$. 
The straightforward argument gives us a bound on $v$ that ultimately gives the same lower bound
as simplex range reporting.
So we perform a non-obvious analysis. 
We look at two possible cases:
Either (i) two of the parametric points of $\S_1, \dots, \S_\ell$ have the same color 
and in this case we  use the properties of our coloring (see~Lemma~\ref{lem:orthc} in page \pageref{lem:orthc})
to ensure that such points are ``well-spread''; in particular, if we have
$n_c$ colors, we can make sure that the parametric points of each color are a factor roughly
$n_c$ ``better spread'', meaning, the volume of the smallest axis-aligned rectangle that contains
two points of the same color is a factor $n_c$ larger than the volume of the smallest rectangle that
contains two  points of different colors. 
Ultimately, this buys us a $n_c$ factor in our lower bound. 
Observe that the value $n_c$ grows exponentially on $t$ (up to some maximum value).
The other case is when (ii) all the parametric points of $\S_1, \dots, \S_\ell$ have distinct colors.
By  using the favorable combinatorial property that we had imposed earlier on the set of colors,
we find 3 colors  among the many different distinct colors and an index $j$
such that these colors have three distinct values at coordinate $j$.
This in turn implies that the slabs in $\Q_j$ have 3 distinct thicknesses.
However, thicknesses differ by at least a factor $R$ and thus further analysis 
buys us a factor $R$ on value of $v$.
By combining the two cases, we show that we can improve our lower bound by either a 
factor $n_c$ or factor $R$.
We set our parameters such that $R$ and  $n_c$ are roughly equal and we obtain the lower bound. 

\subsubsection{Constructions for the \Frechet Distance}
In order to apply the above construction to the \Frechet range searching we 
dualize the \Frechet query ranges to some extent.  Our dualization differs
significantly between the two variants of the problem.  For the discrete
\Frechet distance we observe that the set of points that lie within \Frechet
distance $\rho$ to a line segment are contained in the intersection of the two circles of
radius $\rho$ centered at the two endpoints. 
We call the intersection of two circles a {\em lens}.
Thus, we create a set  of lenses as input instead of a set of slabs and we let
the vertices of the query curve act as stabbing queries. 
Refer to \figref{lb_discrete} on page \pageref{fig:lb_discrete} for an
illustration of this straight-forward approach.  We observe that inside the
unit square, lenses can be made  to almost look like slabs, that is, for any
slab, we can create a lens that is fully contained in the slab such that the
area of the symmetric difference between the slab and the lens is made
arbitrarily small.  As a result, after a little bit more work, we can show that
the construction for the multilevel stabbing problem directly gives a lower
bound for the discrete \Frechet queries problem.

In contrast, our construction for the continuous \Frechet distance dualizes the
\emph{lines} supporting the edges of the query curve, creating a separate ``universe'' for
every odd edge (in lieu of a universe for every vertex).  Here, our
construction is such that the locus of query curves in the dual space that
lie within \Frechet distance $\rho$ to a specific input curve forms a set of
slabs---one in each universe. To this end we let the input curve follow a
zig-zag shape. We use one zig-zag curve per universe. Refer to \figref{zigzag_gadget}
on page \pageref{fig:zigzag_gadget}
for an example of a zig-zag used in the construction.   Our analysis uses the
basic fact that the set of lines intersecting a vertical interval in the
primal space corresponds to the set of points enclosed in a slab in the dual
space. We combine this fact  with a well-known connection between the
\Frechet distance and ordered line-stabbing initially observed by Guibas
\etal~\cite{guibas1993approximating}. This observation says that  the line
supporting the query edge needs to stab the disks of radius $\rho$ centered at the
input curve in their correct order. For our zig-zag curves this has the effect
that the line needs to intersect the vertical interval formed by the two
intersection points of the circles of radius $\rho$ centered at the two corners of
the zig-zag. We can control the width, orientation and position of the
resulting slab in the dual space by varying the length and the position of this
vertical interval. Using these proof elements, we can show that the lower bound of
the multilevel stabbing problem which is analyzed in the beginning, carries
over to the continuous \Frechet distance as well. 

\begin{figure}\centering
\includegraphics{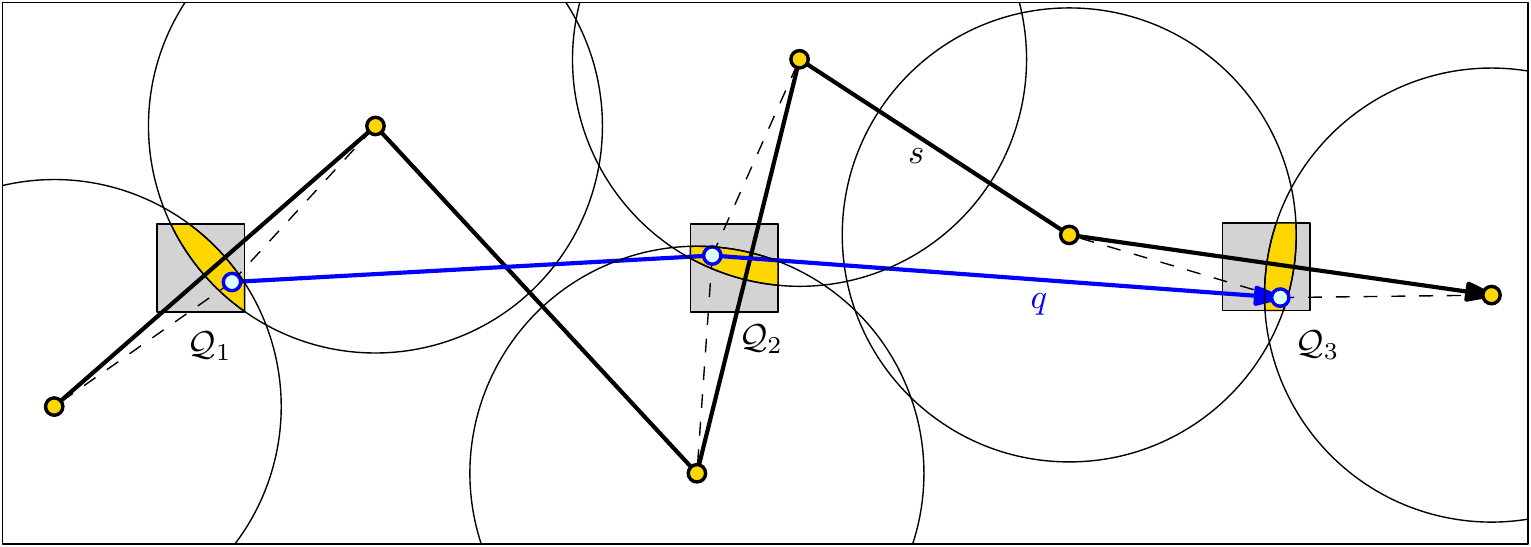}
\caption{Illustration of the lower bound construction for the discrete \Frechet
distance showing universes $\Q_1$, $\Q_2$ and $\Q_3$. For every $i$, a query curve $q$ has its $i$th
vertex inside $\Q_i$. The intersection of two disks centered at the vertices 
of the $i$th odd edge of an input curve $s$ forms a ``near-slab'' and needs to contain 
the $i$th vertex of a query curve if $s$ is contained in the query range centered at $q$.  } 
\figlab{lb_discrete}
\end{figure}

\subsection{Outline of the data structures}

To obtain our upper bounds, we perform an extensive analysis of the definition
of the \Frechet distance that allows us to restate the alignment problem using
a sequence of semialgebraic range queries.
One of the challenges here is to design a set of filters that do not create
duplicates in the output across the different range queries that need to be
performed. We first focus on the discrete \Frechet distance, where the analysis is
significantly cleaner and simpler. The dynamic programming algorithm which is commonly used to
compute the discrete \Frechet distance uses a Boolean matrix, the so-called
free space matrix, to decide which alignments between the curves are feasible.
The entry $(i,j)$ of this matrix indicates if the Euclidean distance between
the  $i$th vertex of one curve and the $j$th vertex of the other curve is at
most $\rho$. The two curves have \Frechet distance at most $\rho$ if and only if
there exists a traversal that only uses the 1-entries in the free-space matrix.
The set of possible truth assignments to this matrix induces a partition on the
input curves with respect to their free space matrix with the query curve.
Furthermore, each set in this partition is either completely contained in the
query range or it is completely disjoint from the query range.  We show how to
construct a multilevel data structure that allows us to query independently
for each of those sets which are contained in the query range. 

Our query processing works in three phases.  First, we compute all feasible
free space matrices based on the arrangement of balls centered at vertices of
the query. Next, we refine this arrangement to obtain cells of constant
complexity that can be described by the zero set of a polynomial function. In
the third phase we query the data structure with each free space matrix
separately, using semialgebraic range searching in each level of the data
structure to filter the input curves that have their $i$th vertex inside a
specific cell of the refined arrangement.  To see how this works, consider the
set of $i$th vertices of the input curves that lie in a fixed cell of the
arrangement of balls centered at the vertices of the query curve.  The
corresponding input curves share the same truth assignment in the $i$th column
of the free-space matrix with $q$. Refer to \figref{query_plan} for an example. 
\begin{figure}\centering
\includegraphics{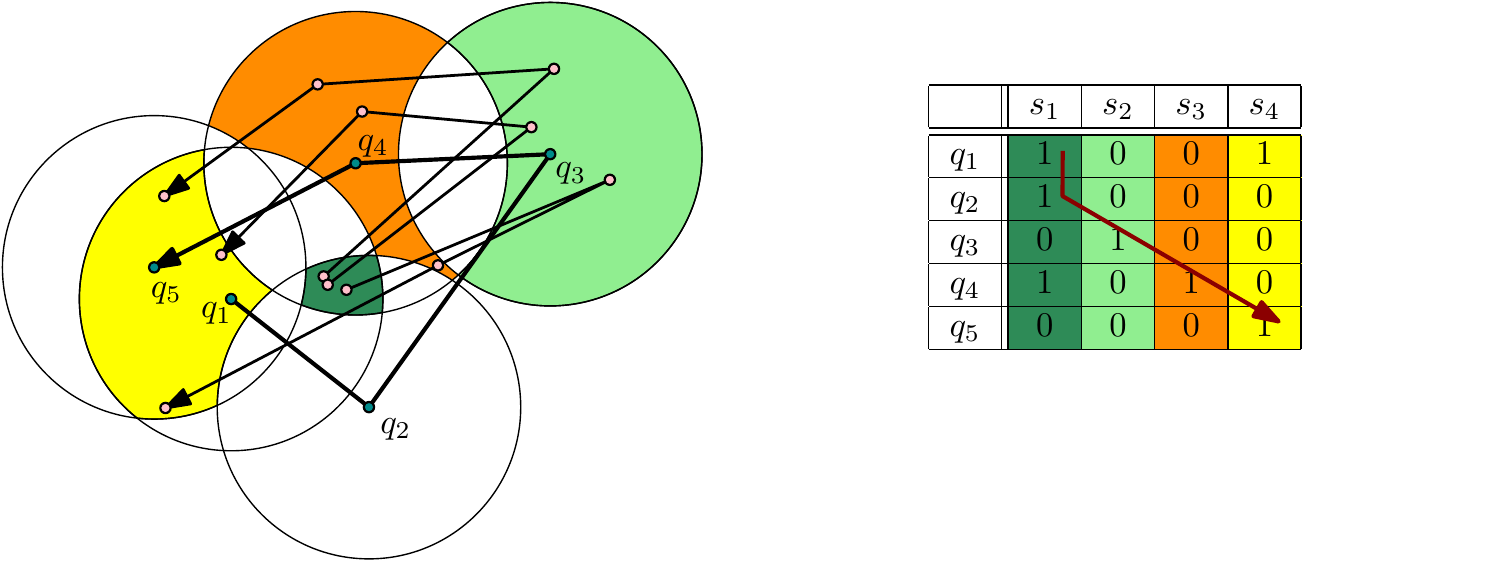}
\caption{Example of a query matrix for the discrete \Frechet distance with a
feasible traversal (right). The truth assignment in a fixed column corresponds
to a cell in the arrangement of balls centered at the vertices of the query
curve (left).  The figure also shows three input curves that have this
free-space matrix with the query curve and would thus be reported. } 
\figlab{query_plan}
\end{figure}

We now build a standard multilevel partition tree on the polygonal chains. In
the $i$th level we store the $i$th points of the input curves.  Our query
algorithm processes the free-space matrix in a column-by-column fashion, where we use
the convention that the column index refers to a point on the input curves and a row
index refers to a point on the query curve.  This makes the storage layout of the
data structure independent of the number of vertices of the query curve.

For the continuous \Frechet distance the approach is similar, at least on a
high level. The main difference is that the Boolean matrix that guides the queries
is more complicated, since we operate on the continuous free-space diagram
instead of the discrete free-space matrix. We first define high-level
predicates that carry enough information to decide the \Frechet distance. Each
predicate involves a constant number of edges and vertices from the input and
query curves, e.g., testing the feasibility of a monotone path for a
combination of a row and two vertical lines in the free-space diagram.  Next we
show how to represent these predicates using more basic operations, e.g.,
above-below relationships between points and lines that can be dualized.
Finally, the query algorithm will test groups of these predicates for each
feasible truth assignment separately.  
Also here we manage to keep the layout of the data structure independent of the
complexity of the query curve.

There are two main challenges in dealing with the continuous case. One is to 
obtain the more complicated discrete matrix that captures all possible free-space
diagrams of the fixed query curve with any {\em arbitrary} possible input curve. 
The second challenge is to make sure we can express all our predicates in the
framework of semialgebraic range searching in two dimensions.  Our solution is
non-obvious since the \Frechet distance is not defined as a closed-form algebraic
expression.  This second challenge is the main issue that prevents us from
directly generalizing our data structure to higher dimensional queries.

\subsection{Organization}

We prove the lower bounds in \secref{lower-bound}. 
We first show the lower bound for the multilevel stabbing problem.
The construction is given in \secref{lb-construction}. We discuss the range
reporting lower bound in \secref{lb-reporting}.
In \secref{lower-bound-frechet} we show how to
implement the construction for the two variants of \Frechet queries. 
We describe our data structures in \secref{data-structure}.  In
\secref{ml-data-structure} we describe the machinery that we use to build our
data structures.  In \secref{dfq} we develop a data structure for discrete
\Frechet queries. In \secref{cfq} we extend these ideas and develop a data
structure for continuous \Frechet queries.   We conclude with some open
problems in \secref{conclusions}.

\section{Lower Bounds}
\seclab{lower-bound}
As discussed, we prove lower bounds for a concrete {\em problem\/}, that is, the multilevel stabbing problem.
To do that, we need to construct a ``difficult'' input instance of 
$n$ $t$-slabs with certain desirable properties.
This construction is at the heart of our lower bounds and this is what we are going to attempt in this section.

\subsection{Definitions}
Our lower bounds for the multilevel stabbing problem is based on an intricate construction
that we outline in this subsection. 
Define the space $\Q = \Q_1 \times \Q_2 \times \dots \Q_t$ where each $\Q_i$ is
the unit cube in the plane.
$\Q$ now represents the set of all possible queries:
a query $t$-point $\bp$ is represented by the point $\bp \in \Q$ which corresponds to
a point $\tp{i}$ in $\Q_i$, for every $1 \le i \le t$.
Observe that the points $\tp{i}$ are completely independent. 
Similarly, an input $t$-slab $\bs$ is represented by picking $t$ independent slabs, one slab $\ts{i}$ in $\Q_i$
for each $1 \le i \le t$.

Consider a (measurable) subset $f \subset \R^{D'}$ that lies in a $D$-dimensional flat $V$
of $\R^{D'}$, $D \le D'$.
We denote the $D$-dimensional Lebesgue measure of $f$ with $\vol_D(f)$.
For a set of points $q_1, \dots, q_s \in \R^D$, we denote the smallest
axis-aligned box that contains  them all with $\Box_D(q_1, \dots, q_s)$.
Finally, for two $t$-slabs $\bs_1$ and $\bs_2$, we say $\bs_1$ is a {\em translation} of $\bs_2$
if for every index $j$, the slabs $\ts{j}_1$ and  $\ts{j}_2$ are parallel and have the same thickness.

\subsection{The 2D Construction} \seclab{lb-construction}
\begin{restatable}{lemma}{lemconstruction} \label{lem:construction}
    Consider parameters $t, \tau, r, R, n_c$ and $\ell$ under constraints to be specified shortly. 
    We can build a set $\S$ of $r$ $t$-slabs such that $\tau(\bs) = \tau$, for every $\bs \in \S$.
    Furthermore, 
    (i) for any $\ell$ $t$-slabs $\bs_1, \bs_2, \dots, \bs_\ell \in \S$ and any
    $\ell$ $t$-slabs $\bs'_1, \dots, \bs'_t$ such that $\bs'_j$ is a translation of
    $\bs_j$, we have $\vol_{2t}(\bs'_1 \cap \bs'_2 \cap \dots \cap \bs'_\ell) \le 
  \max\left\{{\frac{\tau^2 rt^{t+o(t)}}{n_c}, \frac{\tau^2 rt^{t+o(t)}}{R}} \right\}$.

  The constraints are that  $\Omega(1) \le r \le n/2$, $\Omega(1) \le R \le n^{1/(2t)}$, $2\le \ell <r$
  and that $n_c$ is defied as $n_c = (\frac{\log_R n}{t})^{t-1}$ when $\ell \ge 2^t$ and
  $n_c = \Theta( (\frac{\log_R n}{t})^{t-1}2^{-t}(\frac{\log_R n}{t})^{-t/\ell})$ when $\ell < 2^t$.
\end{restatable}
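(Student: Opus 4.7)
My plan is to follow the outline of Section 4 and construct each $t$-slab in $\S$ as a Cartesian product of $2$D slabs, one in each unit square $\Q_j$. The $2$D slab in $\Q_j$ will be specified by an angle $\theta_j$ and a thickness of the form $R^{w_j}$ for a non-negative integer $w_j$, with the requirement $\prod_j R^{w_j}=\tau$. Each slab in $\S$ is therefore encoded by a pair (\emph{orientation point}, \emph{color}), where the orientation point $(\theta_1,\ldots,\theta_t)\in\R^t$ fixes the $t$ angles and the color $(w_1,\ldots,w_{t-1})$ fixes the first $t-1$ thicknesses; $w_t$ is forced by the thickness constraint. The admissible colors form a discrete grid of size $\Theta\bigl((\log_R n/t)^{t-1}\bigr)$.

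The first step is a purely geometric reduction. In $\Q_j$, two slabs of thicknesses $\tau^{(j)}_a,\tau^{(j)}_b$ whose angles differ by $\alpha^{(j)}_{ab}$ intersect in area $\Theta\bigl(\tau^{(j)}_a\tau^{(j)}_b/\sin\alpha^{(j)}_{ab}\bigr)$. Taking products across $j=1,\ldots,t$ and absorbing the thicknesses via $\prod_j R^{w_j}=\tau$, the $2t$-volume $\vol_{2t}(\bs'_1\cap\cdots\cap\bs'_\ell)$ is, up to $t^{O(t)}$ factors, $\tau^\ell$ divided by a product of angle gaps equal to $\vol_t(\Box_t(q_1,\ldots,q_\ell))$, where $q_1,\ldots,q_\ell$ are the orientation points of the $\ell$ chosen slabs. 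Hence it suffices to place and color the orientation points so that no $\ell$ of them sit in a small axis-aligned box, with an additional gain whenever two selected points share a color.

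Both properties will be supplied by a single invocation of Lemma~\ref{lem:orthc}: a placement plus coloring of $r$ points in $\R^t$ for which $\vol_t(\Box_t(q_a,q_b))\ge\Omega(r/t^{t+o(t)})\cdot 2^{-O(\ell)}$ for any two distinct points, with an extra factor $n_c$ when they share a color. Given this, I split the analysis of any $\ell$-tuple into two cases. In Case A two of the $\ell$ orientation points share a color and the improved spread directly yields the bound $\tau^2 r t^{t+o(t)}/n_c$. In Case B all $\ell$ colors are distinct; here I first restrict the palette by a random sampling inside the $(w_1,\ldots,w_{t-1})$-grid so that every size-$\ell$ subfamily of surviving colors contains three colors and a coordinate $j$ on which they take three pairwise distinct values. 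Such a coordinate witnesses three $2$D slabs in $\Q_j$ whose thicknesses are pairwise separated by a factor of at least $R$, which combined with the basic spread bound produces the second term $\tau^2 r t^{t+o(t)}/R$ of the $\max$.

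The main obstacle will be calibrating this color sampling so that enough colors survive in both regimes of $\ell$: when $\ell\ge 2^t$ a constant-fraction subsampling should suffice to enforce the three-distinct-values property while keeping $n_c=\Theta((\log_R n/t)^{t-1})$; when $\ell<2^t$ the property is strictly harder to guarantee, and the resulting loss should account exactly for the additional $2^{-t}(\log_R n/t)^{-t/\ell}$ factor in the stated $n_c$. The rest of the argument is careful bookkeeping of the $2^{O(\ell)}$, $t^{O(t)}$, and logarithmic factors, together with the verification that the various constraints on $r$, $R$, $\ell$, and $n_c$ are consistent with the stated ranges.
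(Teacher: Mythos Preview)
Your overall architecture matches the paper's: parametric points for angles, colors for the thickness exponents, Lemma~\ref{lem:orthc} for the colored placement, a same-color case and a distinct-color case, and random sampling of colors to force the ``three distinct values in some coordinate'' property. However, two of your steps contain genuine gaps.

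First, your geometric reduction is wrong. You claim that $\vol_{2t}(\bs'_1\cap\cdots\cap\bs'_\ell)$ is, up to lower-order factors, $\tau^\ell/\vol_t(\Box_t(q_1,\ldots,q_\ell))$. This is not an upper bound on the intersection. In a single coordinate the intersection of $\ell$ slabs has area at most $\min_{a\ne b}\tau_a\tau_b/|\alpha_a-\alpha_b|$ (Observation~\ref{ob:slabint}), and this is not dominated by $(\prod_a\tau_a)/(\max_a\alpha_a-\min_a\alpha_a)$; for instance with three equal thicknesses $\tau_0\ll 1$ and angles $0,\tfrac12,1$ your formula gives $\tau_0^3$ while the true intersection is $\Theta(\tau_0^2)$. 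The paper never bounds the $\ell$-fold intersection directly; it only uses the pairwise bound $\vol_{2t}(\bs_a\cap\bs_b)=O(\tau^2/\vol_t(\Box_t(\phi_a,\phi_b)))$ (Lemma~\ref{lem:twoint}) in Case~I, and the per-coordinate $\min$ over pairs (Lemma~\ref{lem:lint}) in Case~II. Your Case~A is fine once you switch to the pairwise bound, but the $\ell$-point box volume plays no role.

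Second, your Case~B is too optimistic. Having three distinct thickness exponents at coordinate~$j$ does not by itself buy you a factor $R$ ``combined with the basic spread bound.'' The difficulty is that in Lemma~\ref{lem:lint} you may take a different pair $(a,b)$ in each coordinate, but to recover a box-volume factor $\vol_t(\Box_t(\phi_a,\phi_b))$ you must use the \emph{same} pair in every coordinate except~$j$, and this pair must also interact well with the thickness gain at~$j$. The paper handles this by assuming w.l.o.g.\ that $(\alpha_1,\alpha_2)$ is the closest pair among $\alpha_1,\alpha_2,\alpha_3$ at coordinate~$j$ and that $w_1>w_2$, and then splitting into subcases on whether $w_3>w_2$ or $w_3<w_2$; in each subcase a different fixed pair is used at the remaining coordinates, and the thickness products telescope via $\prod_i\tau(\ts{i}_a)=\tau$ to leave exactly one extra factor $R$ in the denominator. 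Your outline does not account for this case split, and without it the argument does not close.
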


As we shall see later, combined with the existing framework, the above lemma 
offers our desired lower bounds with only little bit more work. 
Thus, the main challenge is actually proving the above lemma.
The main idea is the following:
To define each $t$-slab in $\S$, 
we have the freedom to pick $t$ different angles, one angle for each universe $\Q_j$, $1\le j \le t$.
We also have the freedom to alter the thickness of the slabs we constructed in each $\Q_j$. 
Thus, we have ``$t$ degrees of freedom'' to pick the angles and ``$t-1$ degrees of freedom'' to pick the
slab thickness. 
The former $t$ degrees of freedom are represented as points (that we call ``parametric points'')
in $\R^t$ and the latter are represented combinatorially as ``colors''.
To make the construction work, we do not allow for all possible combinations of ``colors'' and instead 
we prune the colors using a combinatorial technique.
We ultimately isolate a sub-problem that is very connected to orthogonal range searching.
This is a very satisfying since it was suspected that there could be connections between orthogonal
range searching and multilevel non-orthogonal range searching~\footnote{For example, Chan~\cite{Chan.ParTree}
compares non-orthogonal multilevel data structures to $d$-dimensional range trees that can be
viewed as $d$-levels of $1$-dimensional data structures.}.
As a result, we manage to incorporate some techniques from orthogonal range searching lower bound in our
construction (see Theorem~\ref{thm:points}).
However, combining the colors (i.e., the ``orthogonal component'') and the set of parametric points
(the non-orthogonal component) requires a careful analysis. 

\subsubsection{Parameters Defining the $t$-slabs.}
To construct each slab in $\S$,  we use $2t-1$ parameters: 
assume, we would like to construct a slab $\bs_i  \in \S$. 
We use $t$ real-valued parameters
$\alpha^{(1)}_i, \alpha^{(2)}_i, \dots, \alpha^{(t)}_i$ and $t-1$ integral
parameters $w^{(1)}_i, w^{(2)}_i, \dots, w^{(t-1)}_i$.  
We call these $2t-1$ parameters the {\em defining parameters of $\bs_i$}.
$\alpha^{(j)}_i$ is the angle slab $\ts{j}_i$ makes with the $X$-axis, and the thickness of
$\ts{j}$ is defined as $\tau(\ts{j}_i) = R^{-w^{(j)}_i}$, for $1 \le j \le t-1$.
However, since we would like to end up with a $t$-slab $\bs_i$ such that
$\tau(\bs_i) = \tau$, we define 
$w^{(t)}_i = \log_R(\tau^{-1}) - \sum_{j=1}^{t-1}w^{(j)}_i$ and $\tau(\ts{t}_i) = R^{-w^{(t)}_i}$.
Note that $w^{(t)}_i$ is not necessarily an integer.
We have:
\begin{observation}\label{ob:widthprod}
  \[ \tau(\bs_i) = \prod_{1\le j \le t} \tau(\ts{j}_i)=\tau(\ts{t}_i) \prod_{1\le j \le t-1} \tau(\ts{j}_i) = R^{-(\log_R(\tau^{-1}) - \sum_{j=1}^{t-1}w^{(j)}_i)  - \sum_{j=1}^{t-1}w^{(j)}_i}= \tau.\]
\end{observation}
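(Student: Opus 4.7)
The plan is to verify the identity $\tau(\bs_i) = \tau$ by unfolding the definitions in sequence and then collapsing the telescoping exponent. The only ingredients needed are the definition of the thickness of a $t$-slab (namely $\tau(\bs) = \prod_{j=1}^t \tau(\ts{j})$), the choice $\tau(\ts{j}_i) = R^{-w^{(j)}_i}$ for $1 \le j \le t-1$, and the definition $w^{(t)}_i = \log_R(\tau^{-1}) - \sum_{j=1}^{t-1} w^{(j)}_i$ together with $\tau(\ts{t}_i) = R^{-w^{(t)}_i}$.

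First I would separate the last factor from the others in the product, writing
\[
\tau(\bs_i) = \prod_{j=1}^{t} \tau(\ts{j}_i) = \tau(\ts{t}_i)\cdot \prod_{j=1}^{t-1}\tau(\ts{j}_i),
\]
since this is just the definition of $t$-slab thickness regrouped. Next I would substitute $\tau(\ts{j}_i) = R^{-w^{(j)}_i}$ into every factor and use the rule $R^a R^b = R^{a+b}$ to collect into a single exponent:
\[
\tau(\bs_i) = R^{-w^{(t)}_i} \cdot R^{-\sum_{j=1}^{t-1} w^{(j)}_i} = R^{-w^{(t)}_i - \sum_{j=1}^{t-1} w^{(j)}_i}.
\]
Finally I would plug in $w^{(t)}_i = \log_R(\tau^{-1}) - \sum_{j=1}^{t-1} w^{(j)}_i$, observe that the two occurrences of $\sum_{j=1}^{t-1} w^{(j)}_i$ cancel, and conclude $\tau(\bs_i) = R^{-\log_R(\tau^{-1})} = \tau$.

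There is no real obstacle here: the statement is a bookkeeping check that the definition of $w^{(t)}_i$ has been chosen precisely so that the product of slab thicknesses telescopes to the target value $\tau$. The only thing to be mildly careful about is that $w^{(t)}_i$ need not be an integer (it is defined as a difference that may be non-integral), but this does not affect the computation since $R^x$ is defined for all real $x$; it is exactly the reason the observation is explicitly recorded.
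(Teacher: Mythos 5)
Your proof is correct and follows essentially the same route as the paper, which establishes the observation by exactly this chain of equalities: split off the last factor, substitute $\tau(\ts{j}_i)=R^{-w^{(j)}_i}$ and the defining formula for $w^{(t)}_i$, and let the exponent telescope to $-\log_R(\tau^{-1})$, giving $\tau$. Your remark that $w^{(t)}_i$ need not be integral is consistent with the paper's own note and does not affect the computation.
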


\begin{definition}
  Consider a $t$-slab $\bs_i \in \S$. Let $\alpha^{(1)}_i, \dots, \alpha^{(t)}_i$ and 
  color $(w^{(1)}_i, \ldots, w^{(t)}_{i})$ be the defining parameters of $\bs_i$.
  We call the point $\phi(\bs_i) = (\alpha^{(1)}_i, \dots, \alpha^{(t)}_i)\in \R^t$ the parametric point of $\bs_i$
  and denote it with $\phi(\bs_i)$ and with 
  the tuple $(w^{(1)}_i, w^{(2)}_i, \dots, w^{(t-1)}_i)$ being its color.

\end{definition}

We have to make very careful choices when picking the defining parameters of
$\bs_i$. We discuss how to pick the colors in Section~\ref{sec:colors}.

\begin{figure}\centering
\includegraphics{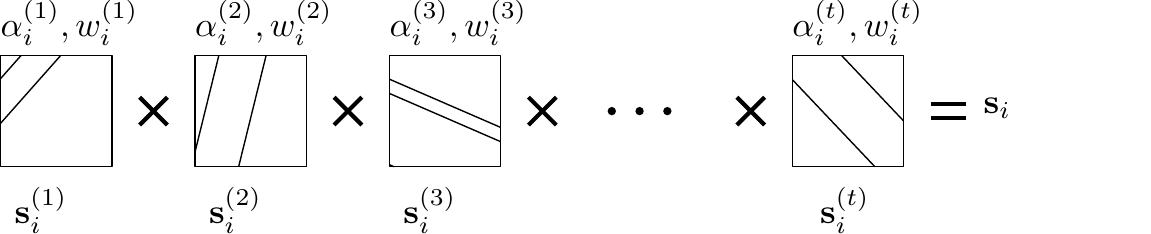}
\caption{Each $\bs_i$ is defined as the Cartesian product of $t$ 
  $2$-dimensional slabs. The thickness of $\ts{j}_i$ is $R^{-w^{(j)}_i}$.}
\label{fig:hyperslab}
\end{figure}

We now establish some basic facts about this construction.
\begin{observation}\label{ob:areapi}
  Consider $\ell$ $t$-slabs $\bs_1, \dots, \bs_\ell$.
  We have $\vol_{2t}(\bs_1 \cap \bs_2 \cap \dots \bs_\ell) = \prod_{i=1}^t \vol_2(\ts{i}_1\cap \ts{i}_2\cap \dots \cap \ts{i}_\ell)$.
\end{observation}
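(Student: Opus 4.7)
The plan is to reduce the claim to two standard facts: that the intersection of product sets factors as the product of coordinate-wise intersections, and that Lebesgue measure on $\R^{2t} = \R^2 \times \cdots \times \R^2$ is the product measure of the $t$ copies of planar Lebesgue measure. Both facts are routine, so the proof is essentially bookkeeping about how a $t$-slab is defined.

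First I would unfold definitions. By construction, each $t$-slab $\bs_k$ lives in $\Q = \Q_1 \times \cdots \times \Q_t$ and is precisely the Cartesian product $\bs_k = \ts{1}_k \times \ts{2}_k \times \cdots \times \ts{t}_k$, where $\ts{i}_k \subset \Q_i$ is a planar slab. A point $(x_1,\dots,x_t)$ with each $x_i \in \R^2$ belongs to $\bs_k$ if and only if $x_i \in \ts{i}_k$ for every $i$. Hence $(x_1,\dots,x_t) \in \bigcap_{k=1}^{\ell} \bs_k$ iff $x_i \in \ts{i}_k$ for all $i$ and all $k$, which is equivalent to $x_i \in \bigcap_{k=1}^{\ell} \ts{i}_k$ for every $i$. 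Therefore
\[
\bigcap_{k=1}^{\ell} \bs_k \;=\; \prod_{i=1}^{t} \Bigl(\bigcap_{k=1}^{\ell} \ts{i}_k\Bigr).
\]

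Next I would invoke Tonelli's theorem. Each factor $F_i := \bigcap_{k=1}^{\ell} \ts{i}_k$ is an intersection of finitely many planar slabs, hence a (convex, measurable) subset of $\R^2$, and $\vol_2(F_i)$ is well-defined. Writing the indicator of the product as a product of indicators and integrating with respect to Lebesgue measure on $\R^{2t}$, which is the $t$-fold product of planar Lebesgue measures, gives
\[
\vol_{2t}\!\Bigl(\prod_{i=1}^{t} F_i\Bigr) \;=\; \int_{\R^{2t}} \prod_{i=1}^{t} \mathbf{1}_{F_i}(x_i)\, dx_1\cdots dx_t \;=\; \prod_{i=1}^{t} \int_{\R^2} \mathbf{1}_{F_i}(x_i)\, dx_i \;=\; \prod_{i=1}^{t} \vol_2(F_i).
\]
Combining this with the set-theoretic identity from the first step yields exactly
\[
\vol_{2t}\!\Bigl(\bigcap_{k=1}^{\ell} \bs_k\Bigr) \;=\; \prod_{i=1}^{t} \vol_2\!\Bigl(\bigcap_{k=1}^{\ell} \ts{i}_k\Bigr),
\]
which is the desired observation.

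There is no real obstacle here; the only mild care needed is to note that although each individual slab $\ts{i}_k$ has infinite planar measure, the $t$-slabs under consideration are implicitly clipped to the bounded cube $\Q_i$ (equivalently, $F_i \subset \Q_i$ has finite measure), so Tonelli applies without integrability worries. Everything else is a direct application of the product-measure definition.
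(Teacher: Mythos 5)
Your proof is correct: the identity is exactly the product-structure fact (intersection of Cartesian products equals the product of coordinatewise intersections, plus Fubini/Tonelli for the product of planar Lebesgue measures), which is why the paper states it as an observation with no proof at all. Your write-up is simply a careful formalization of the same immediate argument, and the caveat about clipping to the unit squares is harmless (Tonelli needs no integrability for nonnegative integrands anyway).
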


We will also use the following elementary geometry  observation regarding the area of 
the intersection of two slabs. 

\begin{observation}\label{ob:slabint}
  Consider two 2-dimensional slabs $s_1$ and $s_2$ of thickness $\tau(s_1)$ and $\tau(s_2)$ respectively.
  And let $\alpha$ be the angle between them. 
  Then, $\vol_2(s_1 \cap s_2) = O(\tau(s_1)\tau(s_2)/\alpha)$.
  (See Figure~\ref{fig:slabs}.)
\end{observation}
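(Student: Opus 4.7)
The plan is to reduce the intersection to an axis‑aligned rectangle by a linear change of coordinates and then read off the area from the Jacobian. First I would apply an isometry (a rotation plus a translation) so that $s_1$ becomes the horizontal slab $\{(x,y) : |y|\le \tau(s_1)/2\}$; since isometries of $\R^2$ are volume preserving this reduction is free. In the new coordinates $s_2$ is the set $\{(x,y) : |-x\sin\alpha + y\cos\alpha| \le \tau(s_2)/2\}$, where $\alpha$ is the (unsigned) angle between the two slabs. By the usual convention that the angle between two slabs lies in $[0,\pi/2]$ (reflecting $s_2$ across its central line is a symmetry of $s_2$), I may assume $\alpha\in(0,\pi/2]$.

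Next I would introduce the linear change of variables $u = y$ and $v = -x\sin\alpha + y\cos\alpha$. Under this map, $s_1\cap s_2$ corresponds to the axis‑aligned rectangle $\{(u,v) : |u|\le \tau(s_1)/2,\ |v|\le \tau(s_2)/2\}$, which has area exactly $\tau(s_1)\tau(s_2)$. The Jacobian determinant of $(x,y)\mapsto(u,v)$ has absolute value $\sin\alpha$, so the change‑of‑variables formula produces
\[
\vol_2(s_1\cap s_2) \;=\; \frac{\tau(s_1)\tau(s_2)}{\sin\alpha}.
\]

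Finally, combining this exact identity with the elementary inequality $\sin\alpha \ge (2/\pi)\alpha$ on $[0,\pi/2]$ yields the claimed $O(\tau(s_1)\tau(s_2)/\alpha)$ bound with absolute constant $\pi/2$. There is no real obstacle in this argument; the only point requiring a moment of care is the restriction $\alpha\in(0,\pi/2]$, which is exactly what allows $\sin\alpha$ to be bounded below by a linear function of $\alpha$. As a sanity check I would alternatively view $s_1\cap s_2$ directly as a parallelogram with base of length $\tau(s_2)/\sin\alpha$ lying along a boundary of $s_1$ and perpendicular height $\tau(s_1)$, which reproduces the same formula and confirms that the intersection is genuinely bounded whenever $\alpha>0$.
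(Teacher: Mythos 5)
Your proof is correct and matches what the paper intends: the paper states this as an observation with no written proof beyond Figure~\ref{fig:slabs}, whose caption asserts exactly the parallelogram area $\Theta(\tau(s_1)\tau(s_2)/\alpha)$ that your change-of-variables computation (and your closing parallelogram sanity check, which is literally the figure's argument) establishes via the exact value $\tau(s_1)\tau(s_2)/\sin\alpha$ and $\sin\alpha\ge(2/\pi)\alpha$ on $(0,\pi/2]$. So this is the same elementary-geometry approach, just carried out formally.
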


\begin{figure}\centering
  \includegraphics[scale=0.5]{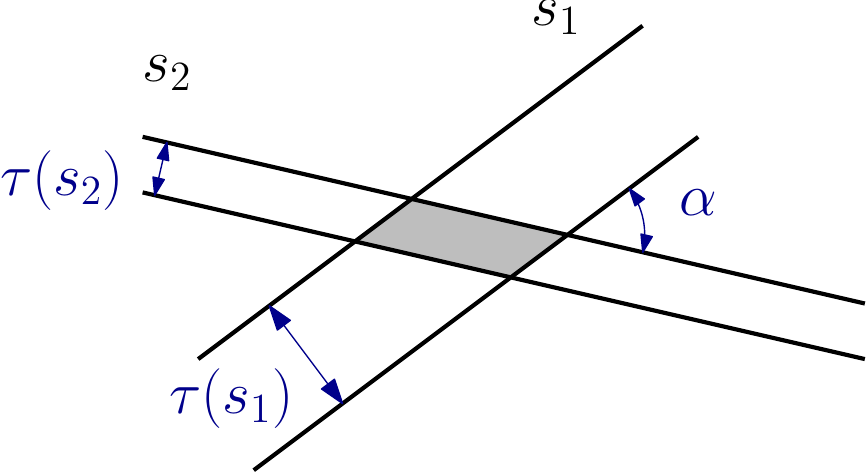}
\caption{Distance between the parallel lines forming slabs $s_1$ and $s_2$ is $\tau(s_1)$ and $\tau(s_2)$ respectively.
    If the angle between the slabs is $\alpha$, then the area of the shaded region is $\Theta(\tau(s_1)\tau(s_2)/\alpha)$.  }
\label{fig:slabs}
\end{figure}

\begin{lemma}\label{lem:twoint}
  Let $\bs_1, \bs_2 \in \S$ be two $t$-slabs and let $\phi(\bs_1)$ and $\phi(\bs_2)$ be the
  parametric points of $\bs_1$ and $\bs_2$ respectively. 
  Then regardless of the colors of these two $t$-slabs,  $\vol_{2t}(\bs_1 \cap \bs_2)$ is asymptotically bounded by 
  \[
      \frac{\tau^2}{\vol_t(\Box_t(\phi(s_1),\phi(s_2))}.
  \]
\end{lemma}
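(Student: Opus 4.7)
The plan is to exploit the product structure of the $t$-slabs, for which the pieces are already laid out in Observations~\ref{ob:areapi}, \ref{ob:slabint}, and \ref{ob:widthprod}. The key observation is that the color of a $t$-slab $\bs_i$ affects only the individual thicknesses $\tau(\ts{j}_i)$, while the \emph{product} $\prod_j \tau(\ts{j}_i)$ is pinned down to $\tau$ by Observation~\ref{ob:widthprod} irrespective of color; this is what makes the bound color-independent.

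First, I would invoke Observation~\ref{ob:areapi} to factor the $2t$-dimensional volume as
\[
\vol_{2t}(\bs_1 \cap \bs_2) \;=\; \prod_{j=1}^{t} \vol_2(\ts{j}_1 \cap \ts{j}_2).
\]
Next, for each universe $\Q_j$, the two slabs $\ts{j}_1$ and $\ts{j}_2$ have thicknesses $\tau(\ts{j}_1), \tau(\ts{j}_2)$ and an angle between them equal to $|\alpha^{(j)}_1 - \alpha^{(j)}_2|$, the $j$-th coordinate difference between the parametric points $\phi(\bs_1)$ and $\phi(\bs_2)$. Applying Observation~\ref{ob:slabint} to each factor yields
\[
\vol_{2t}(\bs_1 \cap \bs_2) \;=\; O\!\left( \prod_{j=1}^{t} \frac{\tau(\ts{j}_1)\,\tau(\ts{j}_2)}{|\alpha^{(j)}_1 - \alpha^{(j)}_2|} \right) \;=\; O\!\left( \frac{\prod_{j=1}^{t}\tau(\ts{j}_1)\;\cdot\;\prod_{j=1}^{t}\tau(\ts{j}_2)}{\prod_{j=1}^{t} |\alpha^{(j)}_1 - \alpha^{(j)}_2|} \right).
\]

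Now I would finish by identifying the two factors in the final expression. By Observation~\ref{ob:widthprod}, $\prod_j \tau(\ts{j}_i) = \tau$ for $i=1,2$, so the numerator equals $\tau^2$. The denominator is exactly the volume of the smallest axis-aligned box in $\R^t$ containing the two parametric points, since the volume of such a box is the product of the per-coordinate spans, and for two points the $j$-th span is precisely $|\alpha^{(j)}_1 - \alpha^{(j)}_2|$. Thus the denominator equals $\vol_t(\Box_t(\phi(\bs_1),\phi(\bs_2)))$, and we obtain the claimed bound.

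There is no real obstacle to this lemma; the only mild subtlety is verifying that the bound is genuinely color-free, which follows because the thicknesses enter the calculation only through their product. I would also note the edge case where $\alpha^{(j)}_1 = \alpha^{(j)}_2$ for some $j$: the two slabs in $\Q_j$ are parallel, the box degenerates, and the bound becomes vacuous ($+\infty$), which is consistent since parallel slabs can have arbitrarily large $2$-dimensional intersection in $\Q_j$ (indeed, up to a constant). Under the standing assumption, made implicit in the construction, that distinct parametric points differ in every coordinate, this degeneracy does not arise and the bound is the meaningful quantitative statement needed downstream.
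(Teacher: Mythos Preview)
Your proof is correct and follows essentially the same approach as the paper: factor the volume via Observation~\ref{ob:areapi}, bound each $2$-dimensional factor via Observation~\ref{ob:slabint}, then identify the numerator as $\tau^2$ via Observation~\ref{ob:widthprod} and the denominator as $\vol_t(\Box_t(\phi(\bs_1),\phi(\bs_2)))$. Your added remarks on color-independence and the degenerate parallel case are sound but not present in the paper's terser version.
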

\begin{proof}
  By Observations~\ref{ob:slabint} and~\ref{ob:areapi},
$\vol_{2t}(s_1 \cap s_2)$ is asymptotically upper bounded by 
    \[
      \prod_{i=1}^t \frac{\tau(\ts{i}_1)\tau(\ts{i}_2)}{|\alpha^{(i)}_1 - \alpha^{(i)}_2|}.
    \]
 By Observation~\ref{ob:widthprod}, the nominator equals $\tau^2$ since the thickness of both $\bs_1$ and $\bs_2$ is $\tau$. 
 Then, the lemma then follows from observing that $\prod_{i=1}^t |\alpha^{(i)}_1 - \alpha^{(i)}_2|$ is
 exactly $\vol_t(\Box_t(\phi(s_1),\phi(s_2))$.
\end{proof}

\subsubsection{Coloring and the Parametric Points}
In this subsection, we will discuss how to pick the parametric points of the slabs in $\S$.
Essentially, we will place a set of $r$ points in $\R^t$ using the upcoming constructions. 
We extend the following construction that is used in lower bounds for the orthogonal problems.

\begin{theorem}\cite{aal09,aal10,Chazelle.LB.reporting}
  For any parameter $N$, we can place a set $P$ of $N$ points inside the unit cube in $\R^D$ such that 
  for any two points $p, q \in P$, we have $\vol_D(\Box_D(p,q)) =\Omega(1/N)$ where the constant
  in the asymptotic notation depend on $D$.
\end{theorem}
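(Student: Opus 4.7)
The plan is to exhibit an explicit set of $N$ points in $[0,1]^D$ with the required spreading property: along each axis the projections are well-separated, and no pair of points has all its coordinate differences small simultaneously.

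For the base case $D = 2$, I would use a Kronecker-type sequence. Fix a badly approximable irrational $\alpha$ (e.g., the golden ratio, whose continued fraction partial quotients are all $1$). Define $p_i = (i/N, \{i\alpha\})$ for $i = 1, \ldots, N$, where $\{x\}$ denotes the fractional part. For any pair $p_i, p_j$, the bounding box volume equals $(|i-j|/N) \cdot \|(i-j)\alpha\|$, where $\|\cdot\|$ denotes distance to the nearest integer. The badly approximable property $\|q\alpha\| \geq c_\alpha/|q|$ for every nonzero integer $q$ immediately yields a bounding box volume of at least $c_\alpha/N$, as desired.

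For general $D$, the direct extension using $D-1$ independent Kronecker sequences fails due to obstructions related to multiplicative Diophantine approximation (Littlewood's conjecture): one cannot in general guarantee $\prod_k \|q\alpha_k\| \geq c/|q|$. Instead, I would proceed inductively. Given a good $(D-1)$-dimensional $N$-point set $P_{D-1}$ with pairwise bounding box volumes $v_{ij} \geq c_{D-1}/N$, lift it to dimension $D$ by assigning a coordinate $z_i \in [0,1]$ to each point so that $|z_i - z_j| \geq c_D/(N v_{ij})$ for all pairs. Grouping pairs by the scale of $v_{ij}$ (say dyadic ranges $v_{ij} \in [2^k c_{D-1}/N, 2^{k+1} c_{D-1}/N)$) and using a counting bound on how many $p'_j$ can cluster near a fixed $p'_i$ at each scale, a greedy or random assignment of the $z_i$'s from a fine grid should be feasible.

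The main obstacle is controlling the distribution of the $v_{ij}$ values to ensure the lifting is simultaneously satisfiable for all pairs; tight clusters in $P_{D-1}$ could force many $z_i$'s to keep large mutual separations, which must be shown not to overconstrain the fine grid. A cleaner alternative is to invoke the explicit constructions already developed in the cited references~\cite{aal09,aal10,Chazelle.LB.reporting}, which produce such $N$-point sets directly in every dimension; this suffices since the theorem here is essentially a restatement of those results, rescaled to the unit cube.
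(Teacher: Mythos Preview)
The paper does not give a standalone proof of this cited theorem; it immediately proves a sharper version (the very next theorem, with explicit dependence on $D$) using the digit-reversal construction from~\cite{aal09}: the $i$-th point has $j$-th coordinate equal to the base-$a_j$ digit reversal of $i$, where $a_1,\dots,a_{D-1}$ are the first $D-1$ primes, and last coordinate $i$. The key is that if $a_j^{y_j}$ is the largest power of $a_j$ dividing $k'-k$, then the $j$-th coordinates of $p_k$ and $p_{k'}$ agree in their $y_j$ most significant base-$a_j$ digits, so they differ by at most $N/a_j^{y_j}$; coprimality of the bases then forces the product of coordinate gaps to be $\Omega(N^{D-1})$ in $[N]^D$, i.e., $\Omega(1/N)$ after rescaling. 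This is a global, number-theoretic construction that handles all coordinates simultaneously via the Chinese Remainder Theorem.

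Your $D=2$ argument via a badly approximable $\alpha$ is correct and is a genuinely different (and equally clean) route; your diagnosis of why the naive multi-$\alpha$ extension fails (the Littlewood obstruction) is also correct.

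However, your inductive lifting for general $D$ has a real gap. To place $z_i$ you must avoid, for each already-placed $j$, an interval of length $\Theta(1/(Nv_{ij}))$ around $z_j$; feasibility requires $\sum_{j} 1/(Nv_{ij})$ to be bounded by a constant. With only the pairwise bound $v_{ij}\ge c_{D-1}/N$ from the inductive hypothesis, each summand can be $\Theta(1)$ and the sum can be $\Theta(N)$. The ``counting bound on how many $p'_j$ can cluster near a fixed $p'_i$ at each scale'' that you appeal to---essentially that $|\{j: v_{ij}\in[2^k/N,2^{k+1}/N)\}| = O(2^k)$---is \emph{not} implied by the inductive hypothesis; it is an additional structural property that the digit-reversal construction happens to enjoy but that an arbitrary good $(D{-}1)$-dimensional set need not. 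So as written the induction does not close. Falling back on the citations is of course legitimate here, since the statement is presented in the paper as a citation anyway.
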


To choose the parametric points, we use the method in~\cite{aal09}.
But since in our case, the dimension $t$ is not longer considered a constant, we need to
provide a tight analysis, and determine its precise dependency
on the dimension (by using the prime number theorem).

\begin{theorem}\label{thm:points}
  For any parameter $N$, we can place a set $P$ of $N$ points inside the cube $[N]^D$ in $\R^D$ such that 
  for any two points $p, q \in P$ we have
  $\vol_D(\Box_D(p,q))=\Omega(\frac {N^{D-1}}{D^{D+o(D)}})$.
\end{theorem}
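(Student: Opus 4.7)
My plan is to follow (a quantitative version of) the method of~\cite{aal09}, being careful to expose the precise dependence on the dimension $D$. The first ingredient is a prime: by the Prime Number Theorem, choose a prime $p$ with $N \le p \le (1+o(1))N$, so the point set will live in $[p]^D \supseteq [N]^D$ and $p/N = 1 + o(1)$. With multipliers $a_1 := 1$ and $a_2, \ldots, a_D \in \mathbb F_p^*$ to be chosen, the point set is
$$q_i := (a_1 i \bmod p,\ a_2 i \bmod p,\ \ldots,\ a_D i \bmod p), \qquad i = 0, 1, \ldots, N-1.$$
For two distinct indices $i, j$, let $d := (i-j) \bmod p \in \mathbb F_p^*$. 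A direct verification (using $N < p$) shows that the $k$th coordinate difference is exactly $\|a_k d\|_p := \min(a_k d \bmod p,\ p - (a_k d \bmod p))$, so
$$\vol_D(\Box_D(q_i, q_j)) = \prod_{k=1}^D \|a_k d\|_p.$$
It therefore suffices to choose the multipliers so that this product is $\Omega(p^{D-1}/D^{D+o(D)})$ for every nonzero $d \in \mathbb F_p$.

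Next I would select the multipliers probabilistically: sample $a_2, \ldots, a_D$ i.i.d.\ uniformly from $\mathbb F_p^*$. For any fixed $d \ne 0$, each product $a_k d$ is again uniform on $\mathbb F_p^*$, and the random variable $\|a_k d\|_p$ is essentially uniform on $\{1, 2, \ldots, \lfloor p/2 \rfloor\}$. After normalizing $Y_k := 2\|a_k d\|_p/p \in (0, 1]$, the $Y_k$ are approximately i.i.d.\ uniform on $(0,1)$, so $-\sum_{k=2}^D \log Y_k$ is approximately $\mathrm{Gamma}(D-1,1)$. The goal of the tail estimate is to show
$$\Pr\!\left[\,\prod_{k=2}^D \|a_k d\|_p \le \frac{(p/2)^{D-1}}{D^{D+o(D)}}\,\right] \le \frac{1}{p^2},$$
after which a union bound over the $p-1$ nonzero values of $d$ leaves positive probability of a choice that works simultaneously for all $d$. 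Combined with $p = (1+o(1))N$ from the first step, every pair then has bounding box volume $\Omega(N^{D-1}/D^{D+o(D)})$.

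The main obstacle is making the tail estimate quantitatively sharp. Crude Hoeffding/Chernoff bounds on $\sum \log Y_k$ only control deviations on the scale $\sqrt{D}\log p$, which would force a factor of roughly $\exp(\sqrt{D}\cdot \mathrm{polylog}(N))$ and overshoot the target $D^{D+o(D)}$. What makes the target achievable is the precise Gamma-tail estimate $\Pr[\mathrm{Gamma}(D-1,1) \ge D + t] \sim e^{-t}\, t^{D-2}/(D-2)!$, which, combined with Stirling, gives decay of order $D^{-(1+o(1))D}$ at deviation $D\log D$---exactly enough to absorb the $\Theta(N)$ union bound with only the claimed $D^{D+o(D)}$ loss. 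To make the Gamma approximation quantitative I would use the Prime Number Theorem via the Stirling-type identity $\sum_{k=1}^{\lfloor p/2 \rfloor}\log k = (p/2)\log(p/2) - p/2 + O(\log p)$, which pins the mean of each $-\log Y_k$ to $1 + o(1)$ and controls the higher moments, justifying the Gamma approximation to the required precision. This is the single place where the PNT enters non-trivially, as advertised in the statement of the theorem.
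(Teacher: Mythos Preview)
Your probabilistic approach has a genuine gap: the union bound over the $\Theta(N)$ nonzero residues $d$ cannot be absorbed by a tail probability of order $D^{-(1+o(1))D}$. You correctly compute that the $\mathrm{Gamma}(D{-}1,1)$ tail at level $t = (1+o(1))D\log D$ is $D^{-(1+o(1))D}$, but for the union bound to succeed you need this tail to be at most $O(1/N)$, which forces $N \le D^{(1+o(1))D}$. The theorem makes no such assumption on the relation between $N$ and $D$; in the regime $D$ fixed (or even $D = o(\log N/\log\log N)$) your argument simply fails. Concretely, for $D=2$ your construction is $q_i = (i,\, a_2 i \bmod p)$, and it is classical (good-lattice-point theory) that for \emph{every} multiplier $a_2$ there exists $d$ with $d\cdot\|a_2 d\|_p = O(p/\log p)$, so the best any choice of $a_2$ can achieve is $\Omega(N/\log N)$, strictly weaker than the target $\Omega(N)$. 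More generally, to drive the Gamma tail down to $1/N$ you must push the threshold $t$ out to roughly $\log N + D\log\log N$, which translates into a volume loss of order $N(\log N)^{O(D)}$ rather than $D^{D+o(D)}$.

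The paper's proof is entirely different and deterministic. It takes the first $D-1$ primes $a_1,\dots,a_{D-1}$ and sets the $i$th point to $(\overline{i_{(a_1)}},\dots,\overline{i_{(a_{D-1})}},i)$, the radical-inverse (Halton-type) sequence. The key step is a Chinese-Remainder argument: if the box side in coordinate $j$ is short, then a high power of $a_j$ divides $k'-k$; coprimality of the $a_j$ forces the product of all these prime powers to divide $k'-k \le N$, which in turn lower-bounds the box volume. The Prime Number Theorem enters only at the very end, to bound the primorial $\prod_{j<D} a_j \le D^{D+o(D)}$ via $a_j = O(j\log j)$ and Stirling. (Your remark that the PNT is used ``via the Stirling-type identity $\sum_{k\le p/2}\log k = \dots$'' is off: that identity is just Stirling's formula and has nothing to do with primes; the PNT is needed for the primorial bound, not for moment control of $-\log Y_k$.)
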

\begin{proof}
    We use the same construction in~\cite{aal09}:
    We pick the first $D-1$ prime numbers $a_1, \dots, a_{D-1}$ and we place $N$ points on the
    integer points in $[N-1]^D$.
    The coordinates of the $i$-th point $p_i$ is $p_i = (\overline{i_{(a_1)}}, \overline{i_{(a_2)}}, \dots,\overline{i_{(a_{D-1})}},i)$, 
    for $0 \le i < N$,
    where $\overline{i_{(x)}}$ is the ``reversed'' (or inverted) representation of $i$ in base $x$ using $\lfloor \log_{x}N\rfloor+1$ digits
    \footnote{That is, 
      if $i = c_0 + c_1 x + c_2 x^2 + \dots + c_{\lfloor \log_{x}N\rfloor+1} x^{\lfloor \log_{x}N\rfloor+1}$, then $\overline{i_{x}} = c_{\lfloor \log_{x}N\rfloor+1} + c_{\lfloor \log_{x}N\rfloor} x + \dots + c_0 x^{\lfloor \log_{x}N\rfloor+1}$.
    Basically, we write $i$ in base $x$ with the most significant digits to the left and then read the digits from right to left to obtain
    $\overline{i_x}$.}.
    In~\cite{aal09}, it is only proved that the volume of the axis-aligned box that contains $2$ points is 
    $\Omega(\frac 1{N})$ with a ``constant'' that depends on $D$.
    Here, we need to make the dependence on $D$ explicit.

    Consider two points $p_{k}, p_{k'}$ such that $k < k'$.
    Let $d = k' - k$.
    Observe that if $x^y$ divides $d$ but $x^{y+1}$ does not, then the representation of $d$ in base $x$
    contains exactly $y$ leading zeros (and as a conclusion, $\overline{d_{(x)}}$ contains exactly zeros at its
    $y$ most significant digits).
    This implies that for any natural number $z$ such that, $z +d < N$, $\overline{{z}_{(x)}}$ and $\overline{z+{d}_{(x)}}$ agree
    on exactly $y$ of their most significant digits, which yields the bound 
    $|\overline{z_{(x)}} - \overline{z+{d}_{(x)}}| < N/x^y$.
    Let $B$ be the smallest box that contains $p_{k}$ and $p_{k'}$, $L_1, \dots, L_D$ be the side lengths of $B$ and $v$ be its volume.
    Thus, $L_1 L_2 \dots L_D = v$.
    Let  $\ell_i$ be the integer such that 
    $ N / a_i^{\ell_i-1} > L_i \ge N / a_i^{\ell_i}$.
    Based on the above observation, $(k' - k)_{(a_i)}$ must contain $\ell_i-1$ leading zeros 
    or in other words, $a_i^{\ell_i-1}$ divides $k' - k$.
    Let $F = a_1^{\ell_1-1}a_2^{\ell_2-1}\dots  a_{D-1}^{\ell_{D-1}-1}$.
    Since $a_i$'s are relatively prime, it follows that $F$ also divides $k' - k$.
    However, observe that
    \[
      F = a_1^{\ell_1}a_2^{\ell_2}\dots  a_{D-1}^{\ell_{D-1}} \ge \frac{N^{D-1}}{L_1 L_2 \dots L_{D-1}} = \frac{N^{D-1}L_D}{v} = \frac{N^{D-1}(k'-k)}{v}.
    \]
    Let $X = \prod_{i=1}^{D-1} a_i$. 
    We claim $v \ge N^{D-1}/(2DX)$, since otherwise we will reach a contradiction.
    To see this, assume to the contrary that 
    $v < N^{D-1}/(2DX)$. 
    Assuming this, we get
    \begin{align}
      F = a_1^{\ell_1-1}a_2^{\ell_2-1}\dots  a_{D-1}^{\ell_{D-1}-1} \ge \frac{N^{D-1}(k'-k)}{ Xv} > k' - k \label{eq:orth}
    \end{align}
    which is a contradiction since $F$ must divide $k'-k$.

    It remains to estimate $X$.
    By Prime number theorem, we know that $a_i = O(i \log i)$.
    Thus, using Stirling's approximation, we have
    \[ 
       X =  \prod_{i=1}^{D-1} O(i\log i) \le  2^{O(D)}\cdot  D!\cdot (\log D)^D = D^{D + o(D)}.
    \]
\end{proof}

Using the above theorem, we prove the following result.

\begin{lemma}\label{lem:orthc}
  Consider the unit cube $\Q$ in $\R^D$.
  Let $N$ and $n_c$ be two integral parameters such that $n_c < N/2$.
  We can place a set $W$ of  $N$ points in $\Q$ and assign each an integer color from 0 to $n_c$ 
  such that the following hold:
  (i) for any two points $p$ and $q$ we have
   $\vol_D(\Box_D(p,q)) = \Omega(\frac 1{ND^{D+o(D)}})$ and
  (ii) for any two points $p$ and $q$ that have the same color we have
  $\vol_D(\Box_D(p,q)) = \Omega(\frac {n_c }{ND^{D+o(D)}})$.
\end{lemma}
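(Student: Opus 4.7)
My plan is to extend the prime-number construction of Theorem~\ref{thm:points} so that the index used to label points can be read modulo $n_c$ to yield a color. Let $a_1 < a_2 < \cdots < a_{D-1}$ be the $D-1$ smallest primes that do not divide $n_c$, place the $N$ points
\[
p_i = \bigl(\overline{i_{(a_1)}}, \overline{i_{(a_2)}}, \ldots, \overline{i_{(a_{D-1})}}, i\bigr), \qquad i = 0, 1, \ldots, N-1,
\]
inside $[N]^D$, assign to $p_i$ the color $i \bmod n_c$, and finally scale by $1/N$ so that the resulting point set $W$ lies in the unit cube $\Q$. I would then verify conditions (i) and (ii) by rerunning the proof of Theorem~\ref{thm:points} while tracking the extra divisibility introduced by the color labels.

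Condition (i) is inherited essentially verbatim. The argument in Theorem~\ref{thm:points} shows that the box spanned by any two points of $W$ has volume at least $N^{D-1}/(2DX)$ in $[N]^D$, where $X = \prod_i a_i$. Scaling by $1/N^D$ turns this into $\Omega(1/(NX))$. Because we drop only those primes that divide $n_c$, and the number of distinct prime divisors of $n_c$ is $\omega(n_c) = O(\log n_c / \log\log n_c)$, the prime number theorem still yields $X = D^{D+o(D)}$ in the parameter regime in which the lemma is actually invoked, so condition (i) holds.

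Condition (ii) is where the new idea is needed. Rerunning the same estimate for two same-color points $p_k, p_{k'}$ with $k<k'$ gives the divisibility $F \mid L_D$ and the inequality $v \ge N^{D-1} L_D/(XF)$, with $F = \prod_i a_i^{\ell_i-1}$ and $L_D = k' - k$. The same-color hypothesis adds the divisibility $n_c \mid L_D$, and because each $a_i$ was chosen coprime to $n_c$, $F$ is also coprime to $n_c$. Hence $F n_c \mid L_D$, so $L_D/F \ge n_c$, which when substituted into the inequality gives $v \ge n_c N^{D-1}/X$. After scaling this becomes $\Omega(n_c/(N D^{D+o(D)}))$, matching the claim.

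The step that I expect to require the most care is the bound on $X$ after excluding primes dividing $n_c$. Skipping up to $\omega(n_c)$ small primes replaces each $a_i$ with at worst the $(i+\omega(n_c))$-th prime overall, which the prime number theorem bounds by $O((i+\omega(n_c))\log(i+\omega(n_c)))$. In the applications that drive Lemma~\ref{lem:construction}, $\omega(n_c) = o(D)$, so the estimate $X = D^{D+o(D)}$ survives and both conditions follow; a full write-up needs to verify this quantitatively and address the corner case in which $n_c$ is a primorial with many small prime factors.
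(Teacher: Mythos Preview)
Your approach differs from the paper's and carries a gap you correctly flag but do not resolve. The paper instead applies Theorem~\ref{thm:points} one dimension higher, in $\R^{D+1}$: it places $N$ points in the unit $(D+1)$-cube, projects each to its first $D$ coordinates to form $W$, and assigns the color $\lfloor x n_c \rfloor$ where $x$ is the $(D+1)$-st coordinate. Property~(i) is immediate because projection can only enlarge the box volume (all side lengths are at most $1$); property~(ii) holds because two points of the same color differ in the dropped coordinate by at most $1/n_c$, so the $D$-dimensional box volume is at least $n_c$ times the $(D+1)$-dimensional one. No reference to the prime factorization of $n_c$ is needed, and the $D^{D+o(D)}$ factor comes straight from Theorem~\ref{thm:points} applied in dimension $D+1$.

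By contrast, your coprimality trick forces you to skip the primes dividing $n_c$, and the estimate $X = D^{D+o(D)}$ then genuinely requires $\omega(n_c) = o(D)$. The lemma as stated only assumes $n_c < N/2$, which permits $\omega(n_c)$ as large as $\Theta(\log N/\log\log N)$ independently of $D$; if for example $\omega(n_c) = D^{2}$ (possible once $N$ is large enough), the primes you must use are of order $D^{2}\log D$ and $X$ becomes $D^{2D+o(D)}$, spoiling both bounds by a factor of roughly $D^{D}$. Your fallback claim that $\omega(n_c) = o(D)$ holds in the intended application is also not justified: there $D = t$ and $n_c$ can be of order $(\log n)^{t-1}$, so the generic bound $\omega(n_c) = O(\log n_c/\log\log n_c)$ yields only $O(t\log\log n/\log t)$, which is not $o(t)$ in the relevant range of~$t$. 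One could perhaps rescue the application by rounding $n_c$ down to a nearby prime or power of two before invoking the lemma, but that is a different statement than the one you are asked to prove. The paper's dimension-lifting argument is both shorter and avoids the issue entirely.
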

\begin{proof}
    Consider the unit cube $U'$ in $R^{D+1}$.
    We use Theorem~\ref{thm:points} (after re-scaling the cube $[N]$ to the unit cube), 
    and we place a set $W'$ of $N$ points in $U'$.

    We now define the set $W$.
    Consider a point $p' \in W'$ and let $x$ be the value of the last coordinate of $p'$. 
    We project $p'$ into the first $D$-dimensions to get a point $p$ and color
    it with color $i = \lfloor xn_c \rfloor$. 

    We show that the projected points satisfy the two claims in the lemma. 
    Claim (i) is trivial:
    By Theorem~\ref{thm:points}, for any two points $p,q \in W$ that were obtained from the
    two points $p', q' \in W'$ we have
    $\vol_D(\Box_{D+1}(p',q')) = \Omega(\frac 1{ND^{D+o(D)}})$.
    Simply observe that 
    \[
      \vol_{D}(\Box_D(p,q)) \ge \vol_{D+1}(\Box_{D+1}(p', q')) = \Omega(\frac 1{ND^{D+o(D)}}).
    \]
    
    Thus, it remains to prove claim (ii).
    Consider two points $p, q \in W$ with the same color that correspond to 
    two points $p', q' \in W'$. 
    Since $p$ and $q$ have the same color, it follows that the difference between the value of their
    $D$-th coordinate is at most $1/n_c$.
    This fact combined by Theorem~\ref{thm:points} implies
    \[
      \vol_{D}(\Box_D(p,q)) \ge \vol_{D+1}(\Box_{D+1}(p', q'))n_c = \Omega(\frac {n_c}{ND^{D+o(D)}}).
    \]
\end{proof}

\subsubsection{Choosing the Colors.} \label{sec:colors}
In the previous subsection, we discussed constructions that will help us place the parametric points.
Here, we will pick the set of colors that are used to color them.
First, we establish an invariant. 
\paragraph{Invariant (I).}
Let $X:= \frac{\log_R \tau}{t}$.
We will maintain one invariant that $ w^{(j)}_i \ge 0$ and $w^{(j)}_i < X$, for each $1 \le j\le t-1$.
This invariant is to make sure that our construction is well-defined, in particular, to make sure  
that for each slab $\bs_i \in \S$, $\tau(\ts{j}_i)$, for all $1 \le j \le t$, are  in the
valid  range $(0,1]$.
As a result, any tuple of $t-1$ integers that satisfy this invariant, will yield well-defined
values for the thickness of the slabs used in our construction.

We will first need to estimate the total number of different colors that satisfy this invariant.
Let $\C$ be the set of all the colors satisfying Invariant (I).
In other words, $\C$ is the set of all $t-1$ tuples $(w_1, \dots, w_{t-1})$ where 
each $w_j$, $1\le j \le t-1$, is a non-negative integer and furthermore, 
$0 < R^{-w^{(j)}}\le 1$, for $1 \le j \le t$ where
$w^{(t)} = \log_R(\tau^{-1}) - \sum_{j=1}^{t-1}w^{(j)}$.

\begin{observation}\label{ob:colornum}
    $|\C| \ge X^{t-1}$.
\end{observation}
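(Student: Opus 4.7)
The plan is to count the valid colors directly from the definition of $\C$. First I would unfold what membership in $\C$ means: a tuple $(w^{(1)},\dots,w^{(t-1)})$ lies in $\C$ exactly when each $w^{(j)}$ is a non-negative integer satisfying Invariant~(I), i.e. $0 \le w^{(j)} < X$, and, in addition, the derived quantity $w^{(t)} = \log_R(\tau^{-1}) - \sum_{j=1}^{t-1} w^{(j)}$ produces a legal thickness $R^{-w^{(t)}} \in (0,1]$. The latter is equivalent to the single inequality $w^{(t)} \ge 0$, or $\sum_{j=1}^{t-1} w^{(j)} \le \log_R(\tau^{-1}) = tX$.

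The key observation I would then make is that this last sum-constraint is a consequence of Invariant~(I) itself. Indeed, if every $w^{(j)} < X$, then
\[
\sum_{j=1}^{t-1} w^{(j)} \;<\; (t-1)\,X \;<\; tX \;=\; \log_R(\tau^{-1}),
\]
so the inequality $w^{(t)} \ge 0$ is automatic. This is precisely why the normalization $X = \log_R(\tau^{-1})/t$ carries the factor $t$ in the denominator: it leaves enough slack to absorb any choice of the first $t-1$ coordinates.

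Hence every $(t-1)$-tuple of non-negative integers strictly less than $X$ defines an element of $\C$. The count is then a product over independent coordinates: for each $j$ there are at least $\lceil X \rceil \ge X$ admissible values (namely $0,1,\dots,\lceil X\rceil-1$), so $|\C| \ge X^{t-1}$, which is the claimed bound. (If one worries about non-integrality of $X$, it is harmless since the construction is free to choose $R$; in any case the same argument gives $|\C| \ge \lfloor X\rfloor^{t-1}$, which is what the construction actually uses downstream.)

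I do not anticipate a real obstacle here; the proof is essentially bookkeeping. The only conceptually non-trivial point is recognising that Invariant~(I) already enforces the global thickness constraint, so no additional loss is incurred when one passes from the per-coordinate bounds to the count.
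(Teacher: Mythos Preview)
Your proposal is correct and is essentially identical to the paper's own proof: both restrict to integer tuples with $0\le w^{(j)}<X$ for $j\le t-1$, observe that $\sum_{j=1}^{t-1} w^{(j)} < (t-1)X < tX = \log_R(\tau^{-1})$ forces $w^{(t)}>0$ automatically, and then count $\lceil X\rceil^{t-1}\ge X^{t-1}$ such tuples. The paper is simply terser about the counting step.
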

\begin{proof}
  If we force $0 \le w_j < X$, for $1 \le j \le t-1$, then we will have
  $w^{(t)} = \log_R(\tau^{-1}) - \sum_{j=1}^{t-1}w^{(j)} > 0$
  and thus $0 < R^{-w^{(j)}}\le 1$, for $1 \le j \le t$. Clearly, the number of tuples
  is at least as claimed.
\end{proof}

\paragraph{Pruning the colors.}
Fix an integral parameter $\ell$.
We call a subset $C\subset \C$ of $\ell$ colors an  $\ell$-subset. 
We say an $\ell$-subset $C$ is {\em bad} if 
by looking at the dimensions of the colors in $C$, we see only $2$ distinct values
at each dimension and $C$ is {\em good} if it is not bad. 
Alternatively, $C$ is good if we can find three colors $c_1, c_2, c_3 \in C$,
$c_1 = (w^{(1)}_1, w^{(2)}_1, \dots, w^{(t-1)}_1)$,
$c_2 = (w^{(1)}_2, w^{(2)}_2, \dots, w^{(t-1)}_2)$, and
$c_3 = (w^{(1)}_3, w^{(2)}_3, \dots, w^{(t-1)}_3)$,
and an index $j$ such that $w^{(j)}_1, w^{(j)}_2$, and $w^{(j)}_3$ are all distinct. 
Let $\C_g$ be the largest subset of $\C$ that contains no bad $\ell$-subsets
(in other words, every $\ell$-subset of $\C_g$ is good).

\begin{lemma}\label{lem:easygood}
    If $\ell > 2^{t-1}$, then $\C$ contains no bad $\ell$-subset and thus $\C_g = \C$.
\end{lemma}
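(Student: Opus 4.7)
The plan is to use a direct counting/pigeonhole bound that exploits the very restrictive combinatorial structure of a bad $\ell$-subset. Recall that a set $C \subseteq \C$ of $\ell$ colors is bad precisely when, at every one of the $t-1$ coordinate positions, the values appearing among the tuples in $C$ form a set of cardinality at most $2$. So the entire set $C$ sits inside a ``combinatorial box'' with side lengths at most $2$ in each of the $t-1$ coordinates.

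Concretely, I would argue by contradiction. Suppose $C \subseteq \C$ is a bad $\ell$-subset with $\ell > 2^{t-1}$. For each $j \in \{1,\dots,t-1\}$, let $V_j \subseteq \mathbb{Z}$ denote the set of values appearing in the $j$-th coordinate of tuples of $C$; by the definition of bad we have $|V_j| \le 2$. Since every color in $C$ lies in $V_1 \times V_2 \times \cdots \times V_{t-1}$, we obtain the bound
\[
|C| \;\le\; \prod_{j=1}^{t-1} |V_j| \;\le\; 2^{t-1},
\]
contradicting the hypothesis $|C| = \ell > 2^{t-1}$. Therefore $\C$ contains no bad $\ell$-subset, i.e.\ every $\ell$-subset of $\C$ is good.

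Since $\C_g$ was defined as the largest subset of $\C$ all of whose $\ell$-subsets are good, and we have just shown that this property is already satisfied by $\C$ itself whenever $\ell > 2^{t-1}$, we conclude $\C_g = \C$, as claimed.

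There is essentially no technical obstacle here: the statement is a clean pigeonhole bound and the only care needed is to parse the definition of ``bad'' correctly, namely that the restriction is on the number of distinct values appearing at each individual coordinate, rather than on pairwise differences between colors. Once that is unpacked, the product bound $\prod_j |V_j| \le 2^{t-1}$ is immediate.
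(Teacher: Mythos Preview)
Your proof is correct and follows essentially the same approach as the paper: both argue that a bad $\ell$-subset must lie inside a product $V_1\times\cdots\times V_{t-1}$ with each $|V_j|\le 2$, hence has size at most $2^{t-1}$, contradicting $\ell>2^{t-1}$. The paper's version is just a terser statement of the same product bound.
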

\begin{proof}
    Consider a bad $\ell$-subset $C$.
    We can have at most $2$ distinct values at each coordinate of the tuples in $C$. 
    Therefore the number of tuples in $C$ cannot exceed $2^{t-1}$. 
    In turn, there are no bad $\ell$-subsets if $\ell>2^{t-1}$.
\end{proof}

\begin{lemma}\label{lem:hardgood}
    If $\ell \le 2^{t-1}$, then $|\C_g| = \Omega( X^{t-1}2^{-t}\Theta(X)^{-2t/\ell})$.
\end{lemma}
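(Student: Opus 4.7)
The plan is a standard probabilistic deletion (alteration) argument applied to $\C$.  I sample each color of $\C$ independently with some probability $p$ to obtain a random subset $\C' \subseteq \C$, and then remove one color from every bad $\ell$-subset that survives inside $\C'$; what remains is a subset of $\C$ with no bad $\ell$-subset and is therefore contained in $\C_g$.  Writing $B$ for the total number of bad $\ell$-subsets of $\C$, the expected size of the final set is at least $p|\C| - p^\ell B$.  Optimizing in $p$ (setting $p^{\ell-1}=|\C|/(\ell B)$) and using $\ell \ge 2$ shows that $|\C_g| = \Omega\!\bigl(|\C|^{\ell/(\ell-1)} / B^{1/(\ell-1)}\bigr)$.

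The main combinatorial step is the upper bound on $B$.  A bad $\ell$-subset $C$ has, by definition, at most two distinct values at each of the $t-1$ coordinates, so I can overcount bad $\ell$-subsets by first choosing a combinatorial ``box'' $A_1 \times \cdots \times A_{t-1}$ with $A_j \subseteq \{0,1,\dots,\lceil X\rceil-1\}$ and $|A_j|\le 2$, and then picking an $\ell$-subset from this box.  There are at most $O(X^2)^{t-1}=O(X^{2(t-1)})$ such boxes, each of cardinality at most $2^{t-1}$, contributing at most $\binom{2^{t-1}}{\ell}\le 2^{(t-1)\ell}$ subsets per box.  Hence
\[
B \;\le\; O\!\bigl(X^{2(t-1)} \cdot 2^{(t-1)\ell}\bigr).
\]
The hypothesis $\ell \le 2^{t-1}$ is precisely what is needed for bad $\ell$-subsets to exist and for the binomial estimate to be meaningful.

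Plugging $|\C| \ge X^{t-1}$ (Observation~\ref{ob:colornum}) together with this estimate for $B$ into the optimized sampling inequality, and simplifying the exponents using $(t-1)\ell/(\ell-1) = (t-1) + (t-1)/(\ell-1)$, one obtains
\[
|\C_g| \;=\; \Omega\!\left( \frac{X^{(t-1)(\ell-2)/(\ell-1)}}{2^{(t-1)\ell/(\ell-1)}} \right) \;=\; \Omega\!\left( \frac{X^{t-1}}{2^t \cdot (2X)^{(t-1)/(\ell-1)}} \right).
\]
A one-line arithmetic check gives $(t-1)/(\ell-1) \le 2t/\ell$ whenever $\ell \ge 2$, so, since $2X \ge 1$, this bound is at least $\Omega\!\bigl(X^{t-1}\, 2^{-t}\, (2X)^{-2t/\ell}\bigr) = \Omega\!\bigl(X^{t-1}\, 2^{-t}\, \Theta(X)^{-2t/\ell}\bigr)$, matching the statement.

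The only real obstacle is exponent bookkeeping.  I expect to need $\ell \ge 2$ so that the optimization in $p$ makes sense (and $\ell^{1/(\ell-1)} = O(1)$ can be absorbed into constants), and $\ell \le 2^{t-1}$ so that the bound on $\binom{2^{t-1}}{\ell}$ is applied correctly.  The remaining slack, namely the constant factor inside $O(X^2)$ when counting coordinate pairs and the constant coming from $\ell^{1/(\ell-1)}$ in the optimization, is exactly what the $\Theta(X)$ in the target is there to absorb, which is why the lemma is phrased with $\Theta(X)^{-2t/\ell}$ rather than a bare $X^{-2t/\ell}$.
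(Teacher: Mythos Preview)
Your proof is correct and follows essentially the same alteration argument as the paper: random sample with probability $p$, delete one color from each surviving bad $\ell$-subset, and bound the number of bad $\ell$-subsets by $O\bigl(X^{2(t-1)}2^{(t-1)\ell}\bigr)$ (the paper writes this as $\bigl(\binom{X}{2}2^\ell\bigr)^{t-1}$). The only cosmetic difference is that you optimize $p$ to maximize $p|\C|-p^\ell B$, whereas the paper simply sets $p$ so that $p^\ell B\le 1$; both choices yield the claimed bound after the weakening $(t-1)/(\ell-1)\le 2t/\ell$.
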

\begin{proof}
    We claim that there exists a subset $\C' \subset \C$ that contains the claimed number
    of colors without containing any bad $\ell$-subset and this clearly proves the lemma.

    We prove the claim using random sampling: we take a random sample of small enough size
    and then remove the bad subsets.

    Let $p$ be a parameter to be determined later. 
    Let $\C'$ be a subset of $\C$ where each color is sampled independently and with probability $p$. 
    Clearly, we have $\E(|\C'|) = p |\C| = p X^{t-1}$.
    From each bad $\ell$-subset, we remove one color.
    The set of remaining colors will be the claimed set $\C'$.
    By construction, $\C'$ will not contain any bad $\ell$-subsets but the main point is to
    show that $\C'$ will actually retain a significant fraction of the colors. 

    Let $\allbad$ be the total number of bad $\ell$-subsets $C \in \C$.
    We first estimate $\allbad$.
    By definition of a bad $\ell$-subset, for every dimension
    we see only $2$ distinct values among tuples in $C$.
    Thus, ${X \choose 2}$ is the total number of ways we can choose these distinct values,
    at a particular dimension.
    After choosing the distinct values, for every tuple, every dimension has only $2$ possible choices.
    Thus we have,
    \[ 
        \allbad \le \left( {X \choose 2} 2^\ell \right)^{t-1}.
    \]

    Now, consider a bad $\ell$-subset $C \subset \C$. 
    Observe that $C$ survives in $\C'$ with probability $p^\ell$ and thus
    the expected number of colors that we will remove is at most $\allbad p^\ell$.
    If we can choose the parameter $p$ such that $\allbad p^\ell \le 1$,
    then we are expected to only remove one color and thus 
    the expected number of colors left in $\C'$ after the pruning step is at least $pX^{t-1}/2$.
    Thus, we need to pick a value $p$ such that 
    \begin{align*}
        \allbad \cdot p^\ell \le\left( {X \choose 2} 2^\ell \right)^{t-1} p^\ell & \le 1  \Longleftarrow\\
        \Theta\left( X \right)^{2t} 2^{t\ell}p^{\ell} &\le 1 \Longleftarrow \\
        p =  2^{-t}& \Theta\left(X \right)^{-\frac{2t}{\ell}}.
    \end{align*}
    Picking $p$ as above, implies that the number of points left in $\C'$ is at least
    \[
        X^{t-1} 2^{-t}\Theta\left( X \right)^{-\frac{2t}{\ell}}. 
    \]
\end{proof}

\subsubsection{The Final Construction.} 
\seclab{finalconstruction}
We use Lemmas~\ref{lem:easygood} and \ref{lem:hardgood} (depending on the value
of $\ell$), to pick the set $\C_g$ of colors. 
Then, we use  Lemma~\ref{lem:orthc}, where $D$ is set to $t$, $N$ is set to $r$
and $n_c$ is set to $|\C_g|$.
Thus, Lemma~\ref{lem:orthc} yields us a point set $W$. 
The coordinates of the $i$-th point $\phi_i$ in $W$ defines the parametric point of $\bs_i$ and the color
of $\phi_i$ defines the thickness of the two-dimensional slabs that create $\bs_i$.
Thus, the set $W$ completely defines the set $\S$ of $r$ $t$-slabs that we aimed to build.

The last challenge is to bound the volume of the intersection of these slabs.
We will do this in the remainder of this subsection. 
\begin{lemma}\label{lem:lint}
  Consider $\ell$ $t$-slabs $\bs_1, \dots, \bs_\ell \in \S$ where 
   the defining parameters of $\bs_i$ are
  $(\alpha^{(1)}_i, \alpha^{(2)}_i, \dots, \alpha^{(t)}_i)$ and 
  $(w^{(1)}_i, w^{(2)}_i, \dots, w^{(t)}_i)$. 
  Then $\vol(\bs_1 \cap \bs_2 \dots \cap \bs_\ell)$ is asymptotically upper bounded by 
  \[
    \prod_{i=1}^t \min_{j,j', j\not = j'}\left\{\frac{\tau(\bs^{(i)}_j)\tau(\bs^{(i)}_{j'})}{|\alpha^{(i)}_j -  \alpha^{(i)}_{j'}|}\right\}.
  \]
\end{lemma}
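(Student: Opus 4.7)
The plan is to assemble the bound by combining Observation~\ref{ob:areapi} with Observation~\ref{ob:slabint} and exploiting the trivial monotonicity of volume under intersection. The key point is that because the $2t$-dimensional volume factors into a product of $t$ independent two-dimensional volumes (one per universe $\Q_i$), we are free to choose a \emph{different} witness pair of slabs in each universe when upper bounding the $\ell$-wise intersection; this is precisely what produces the per-coordinate minimum in the stated bound.

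First, I would invoke Observation~\ref{ob:areapi} to rewrite
\[
  \vol_{2t}(\bs_1 \cap \bs_2 \cap \dots \cap \bs_\ell) \;=\; \prod_{i=1}^t \vol_2(\ts{i}_1 \cap \ts{i}_2 \cap \dots \cap \ts{i}_\ell),
\]
which reduces the problem to bounding, for each index $i \in \{1,\dots,t\}$, the area of the intersection of the $\ell$ planar slabs $\ts{i}_1, \dots, \ts{i}_\ell$ living in the universe $\Q_i$. Next, for every such $i$ and every pair of distinct indices $j, j' \in \{1,\dots,\ell\}$, the containment $\ts{i}_1 \cap \dots \cap \ts{i}_\ell \subseteq \ts{i}_j \cap \ts{i}_{j'}$ immediately yields
\[
  \vol_2(\ts{i}_1 \cap \dots \cap \ts{i}_\ell) \;\le\; \min_{j \neq j'} \vol_2(\ts{i}_j \cap \ts{i}_{j'}).
\]

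Finally, Observation~\ref{ob:slabint} tells us that for each pair the pairwise two-dimensional intersection is $O(\tau(\ts{i}_j)\tau(\ts{i}_{j'})/|\alpha^{(i)}_j - \alpha^{(i)}_{j'}|)$, since $\alpha^{(i)}_j$ and $\alpha^{(i)}_{j'}$ are precisely the angles the slabs $\ts{i}_j$ and $\ts{i}_{j'}$ make with the $X$-axis by construction. Substituting this pairwise estimate into the per-universe minimum, and then taking the product over $i = 1, \dots, t$, yields the bound claimed in the lemma.

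There is no real obstacle here; the argument is essentially a bookkeeping exercise. The only thing worth underlining is the order of quantifiers: the minimization over pairs $(j,j')$ is performed \emph{inside} the product over $i$, so the witness pair may depend on $i$. Choosing pairs globally (one pair to bound all $t$ factors) would give a strictly weaker bound, so the strength of the lemma rests on exploiting this degree of freedom, which will later be crucial when we have to argue that well-spreadness of the parametric points in each coordinate, or the multiplicative gap between the allowed thicknesses, can be used independently across universes.
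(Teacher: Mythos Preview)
Your proposal is correct and follows essentially the same approach as the paper: factor the $2t$-dimensional volume via Observation~\ref{ob:areapi}, bound each planar factor by the minimum over pairwise intersections using containment, and then apply Observation~\ref{ob:slabint}. Your explicit remark that the minimizing pair may be chosen independently in each universe is exactly the point the paper relies on (implicitly) in the subsequent case analysis.
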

\begin{proof}
    Consider $\Q_i$ and observe that $\ts{i}_1, \ts{i}_2, \dots, \ts{i}_\ell$ are slabs in $\Q_i$.
    Clearly, the region $\ts{i}_1\cap \ts{i}_2\cap \dots\cap \ts{i}_\ell$ is contained inside every region
    $\ts{i}_j\cap \ts{i}_{j'}$, for $1 \le j < j' \le \ell$.
    Thus, 
    \[
        \vol_2(\ts{i}_1\cap \ts{i}_2\cap \dots\cap \ts{i}_\ell) \le \min_{1 \le j < j' \le \ell}\left\{ \vol_2(\ts{i}_j\cap \ts{i}_{j'}) \right\}.
    \]
    The lemma follows from Observation~\ref{ob:slabint} since we have
    $\vol_2(\ts{i}_j\cap \ts{i}_{j'}) = \frac{\tau(\ts{i}_j)\tau(\ts{i}_{j'}) }{|\ta{i}_j - \ta{i}_{j'}|}$.
\end{proof}

We now present the main result of this subsection.
We recall the claim made in Lemma~\ref{lem:construction}.
\lemconstruction*
\begin{proof}
    Observe that the volume of the intersection any number of $t$-slabs is invariant under
    translation. 
    Thus, it suffices to look at the intersection of $\bs_1, \bs_2, \dots$, and $\bs_\ell$. 
    Let $\phi_1, \dots, \phi_\ell$ be their (colored) parametric points.
    We consider a few cases.
    \paragraph{Case I.} 
    In this case, we assume that two of the parametric points have the same color.
    W.l.o.g, assume the points $\phi_1, \phi_{2}$ have the same color. 
    Using Lemma~\ref{lem:orthc} we know 
    that $\vol_{t}(\Box_{t}(\phi_1, \phi_2)) = \Omega(n_c /(rt^{t+o(t)}))$ where $n_c = |\C_g|$.
    We can now bound
    \begin{align*}
        \vol_{2t}(\bs_1 \cap \bs_2 \cap \dots \cap \bs_\ell) &\le \vol_{2t}(\bs_1 \cap \bs_2) = O\left(  \frac{\tau^2}{\vol_{t}(\Box_t(\phi_1,\phi_2))} \right)\\
        & \le O\left( \frac{\tau^2rt^{t+o(t)}}{n_c } \right). 
    \end{align*}

    \paragraph{Case II.} 
    In this case, all the $\ell$ colors associated to $\phi_1, \dots, \phi_\ell$ are distinct. 
    By construction of our set of colors, we know that the set of $\ell$ colors assigned to $\phi_1, \dots, \phi_\ell$ is 
    a good $\ell$-subset. 
    This means, w.l.o.g, we can find three parametric points 
    $\phi_1 = (\alpha^{(1)}_1, \dots, \alpha^{(t)}_1), \phi_2 =(\alpha^{(1)}_2,
    \dots, \alpha^{(t)}_2)$, and $\phi_3 =(\alpha^{(1)}_3, \dots, \alpha^{(t)}_3)$  with colors
    $(w^{(1)}_1, \dots, w^{(t-1)}_1), (w^{(1)}_2, \dots, w^{(t-1)}_2)$, and $(w^{(1)}_3, \dots, w^{(t-1)}_3)$ 
    such that there is an index $j<t$ where $w^{(j)}_1, w^{(j)}_2$, and $w^{(j)}_3$ are all distinct. 

    To simplify the notation, let us rename $w_1 = w^{(j)}_1$, $w_2 =
    w^{(j)}_2$, and $w_3 = w^{(j)}_3$, $\tau_1 =\tau( \bs^{(j)}_1)$, $\tau_2 = \tau(\bs^{(j)}_2)$,
    and $\tau_3 = \tau(\bs^{(j)}_3)$,
    and $\alpha_1 = \alpha^{(j)}_1$, $\alpha_2 = \alpha^{(j)}_2$, and $\alpha_3 = \alpha^{(j)}_3$.
    Remember that we have constructed the slabs such that
    $\tau(\ts{j}_i) = R^{-\tw{j}_i}$ and thus $\tau_1 = R^{-w_1},  \tau_2 = R^{-w_2},$ and $\tau_3 = R^{-w_3}$.
    W.l.o.g, we can assume  $\alpha_1$ and  $\alpha_2$ make the closest
    pair among the three values of $\alpha_1, \alpha_2$, and $\alpha_3$ and that $w_1 > w_2$.
    We use Lemma~\ref{lem:lint}:
    \begin{align}
      \vol_{2t}(\bs_1\cap\bs_2\cap\bs_3) \le \prod_{i=1}^t \min\left\{\frac{\tau(\bs^{(i)}_1) \tau(\bs^{(i)}_{2})}{|\alpha^{(i)}_1 -  \alpha^{(i)}_{2}|},\frac{\tau(\bs^{(i)}_1 )\tau(\bs^{(i)}_{3})}{|\alpha^{(i)}_1 -  \alpha^{(i)}_{3}|},\frac{\tau(\bs^{(i)}_2) \tau(\bs^{(i)}_{3})}{|\alpha^{(i)}_2 -  \alpha^{(i)}_{3}|}\right\}.\label{eq:2dmath}
  \end{align}
  To upper bound the above, we consider two further cases. 
  \paragraph{Case II.A.}
    In this case, we assume  $w_{3} > w_{2}$.
    In the right hand side of Equation~\ref{eq:2dmath} above,
    for every index $i\not = j$, we pick the first term.
    In other words, we can write,
  \[
    \vol_{2t}(\bs_1\cap\bs_2\cap\bs_3) \le \min\left\{\frac{\tau_1 \tau_{2}}{|\alpha_1 -  \alpha_{2}|},\frac{\tau_1 \tau_{3}}{|\alpha_1 -  \alpha_{3}|},\frac{\tau_2 \tau_{3}}{|\alpha_2 -  \alpha_{3}|}\right\}\cdot \prod_{i=1, i\not = j}^t \frac{\tau(\bs^{(i)}_1) \tau(\bs^{(i)}_{2})}{|\alpha^{(i)}_1 -  \alpha^{(i)}_{2}|}.
  \]
  However, since $|\alpha_2 - \alpha_1| \le |\alpha_3 - \alpha_2|, |\alpha_3 - \alpha_1| $, we have:
  \begin{align*}
    \vol_{2t}(\bs_1\cap\bs_2\cap\bs_3) &\le \frac{\min\left\{\tau_1 \tau_{2},\tau_1 \tau_3,\tau_2 \tau_3\right\}}{|\alpha_1 -  \alpha_{2}|}\cdot \prod_{i=1, i\not = j}^t \frac{\tau(\bs^{(i)}_1) \tau(\bs^{(i)}_{2})}{|\alpha^{(i)}_1 -  \alpha^{(i)}_{2}|}  \\
    &=\min\left\{\tau_1 \tau_{2},\tau_1 \tau_3,\tau_2 \tau_3\right\}\cdot \frac{\prod_{i=1, i\not = j}^t \tau(\bs^{(i)}_1) \tau(\bs^{(i)}_{2})}{\vol_t(\Box_t(\phi_1,\phi_2))}  \\
    &=\min\left\{\tau_1 \tau_{2},\tau_1 \tau_3,\tau_2 \tau_3\right\}\cdot \frac{\frac{\tau}{\tau_1}\frac{\tau}{\tau_2}}{\vol_t(\Box_t(\phi_1,\phi_2))}.
  \end{align*}
  We now claim $\tau_1 \tau_3 \le \tau_1\tau_{2}/R$:
  remember that by our choice of index $j$, all the values $w_1, w_2$ and $w_3$  are distinct integers. 
  Thus, $w_3 > w_2$ implies $w_3 \ge w_2 - 1$ which in turn implies
  $\tau_3 \le \tau_2/R$ and thus $\tau_1 \tau_3 \le \tau_1\tau_{2}/R$.
  Using this and the fact that by Lemma~\ref{lem:orthc}, we have $\vol_t(\Box_t(\phi_1,\phi_2)) = \Omega(1/(rt^{t+o(t)}))$,
  we get that
  \begin{align*}
      \vol_{2t}(\bs_1\cap\bs_2\cap\bs_3)  \le \frac{r\tau^2 t^{t+o(t)}}{R}.
  \end{align*}

  \paragraph{Case II.B.}
  The remaining case is $w_{3} < w_{2}$ which combined with the assumption
  that $w_2 < w_1$ implies $w_3 < w_2 < w_1$.
  As before, since $w_1, w_2$, and $w_3$ are distinct integers,
  it follows that $\tau_1 \le \tau_2 / R$, and $\tau_2 \le \tau_3/R$.

  Because of the special geometry of the line, if $(\alpha_1, \alpha_2)$ is the closest pair, then it follows
  that $|\alpha_1 - \alpha_3| = \Theta(|\alpha_2-\alpha_3|)$.  

  In Eq.~\ref{eq:2dmath} and for every index $i \not = j$
  we pick the third term inside each minimization term. 
  In other words, we write,
  \begin{align*}
    \vol_{2t}(\bs_1\cap\bs_2\cap\bs_3) &\le \min\left\{\frac{\tau_1 \tau_{2}}{|\alpha_1 -  \alpha_{2}|},\frac{\tau_1 \tau_{3}}{|\alpha_1 -  \alpha_{3}|},\frac{\tau_2 \tau_{3}}{|\alpha_2 -  \alpha_{3}|}\right\}\cdot \prod_{i=1, i\not = j}^t \frac{\tau(\bs^{(i)}_2) \tau(\bs^{(i)}_{3})}{|\alpha^{(i)}_2 -  \alpha^{(i)}_{3}|} \\
    &\le \frac{\tau_1 \tau_{3}}{|\alpha_1 -  \alpha_{3}|}\cdot \prod_{i=1, i\not = j}^t \frac{\tau(\bs^{(i)}_2) \tau(\bs^{(i)}_{3})}{|\alpha^{(i)}_2 -  \alpha^{(i)}_{3}|}  \le \frac{\tau_1 \tau_{3}}{\Theta(|\alpha_2 -  \alpha_{3}|)}\cdot \prod_{i=1, i\not = j}^t \frac{\tau(\bs^{(i)}_2) \tau(\bs^{(i)}_{3})}{|\alpha^{(i)}_2 -  \alpha^{(i)}_{3}|} \\
    &= {\tau_1 \tau_{3}}\cdot \frac{\frac{\tau}{\tau_2}\frac{\tau}{\tau_3}}{\Theta(\vol_t(\Box_t(\phi_2,\phi_3)))} = \frac{\Theta(\tau^2)}{R \vol_t(\Box_t(\phi_1,\phi_2))} \le O\left( \frac{\tau^2 rt^{t+o(t)}}{R} \right).
  \end{align*}

  \paragraph{Putting it all together.}
  Combining all the above cases, we have shown that
    \[
    v=\vol_{2t}(\bs_1 \cap \bs_2 \cap \dots \cap \bs_m) \le  \max\left\{{\frac{\tau^2 rt^{t+o(t)}}{n_c}, \frac{\tau^2 rt^{t+o(t)}}{R}} \right\}.
    \]

   \end{proof}

\subsection{Multilevel Reporting Lower Bound}
\seclab{lb-reporting}
To prove a  lower bound for the multilevel reporting problem, we use the following theorem by Afshani.
We need the notion of a {\em geometric stabbing problem}:
the input is a set $\mathcal{R}$ of $n$ geometric
regions inside a $D$-dimensional region $\Q$ of volume 1.
An element of $\mathcal{R}$ is called a range.
The queries are points of $\Q$ and the output of a query $q$ is the subset of
ranges that contain $q$. 

\begin{theorem}\label{thm:framework}
	Assume we have a data structure for a geometric stabbing problem that uses
	at most $S(n)$ space and answers queries within $Q(n) + O(k)$ time in which $n$ is
	the input size and $k$ is the output size. 
	Assume for this problem we can construct an input set 
  $\mathcal{R}$ of $n$ ranges such that 
	(i) every point of $\Q$ is contained in at least $r$ ranges in which
	$r$ is a parameter greater than $Q(n)$ and (ii)
	the volume of the intersection of every $\alpha$ ranges is at most $v$, for
	two parameters $\alpha < r$ and $v$.
	Then, we must have $S(n) = \Omega(r v^{-1}/2^{O(\alpha)}) = 
	\Omega(Q(n) v^{-1}/2^{O(\alpha)})$.
\end{theorem}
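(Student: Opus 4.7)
The plan is to adapt the pointer-machine lower-bound framework of Chazelle and Afshani (which is precisely why~\cite{a12} is invoked above): combine a volume-based lower bound on the number of realized $\alpha$-wise range intersections with a pointer-graph upper bound on how many such intersections can be ``represented'' by small traversals from the root.

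First I would set up notation and derive the easy volume bound. Let $M$ be the pointer graph of the claimed data structure, with $|M|=S(n)$, bounded out-degree, and fixed root. Since property~(i) forces $k_q\ge r>Q(n)$, every query $q$ has a visited set $V_q$ of size $|V_q|\le Q(n)+ck_q=O(k_q)$, and by atomicity the $k_q$ output ranges occupy $k_q$ distinct cells of $V_q$. Call an $\alpha$-subset $\sigma$ of ranges \emph{realized} if $\bigcap\sigma\neq\emptyset$; every $q\in\bigcap\sigma$ outputs all of $\sigma$, and $\vol(\bigcap\sigma)\le v$ by property~(ii). Double-counting gives
\[
\sum_{\sigma\text{ realized}}\vol\!\left(\bigcap\sigma\right)=\int_{\Q}\binom{k_q}{\alpha}\,dq\ge\binom{r}{\alpha},
\]
so the number $N$ of realized $\alpha$-subsets is at least $\binom{r}{\alpha}/v$.

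Next I would supply the matching upper bound in terms of $S(n)$. For each realized $\sigma$ fix some $q_\sigma\in\bigcap\sigma$ and let $U_\sigma\subseteq V_{q_\sigma}$ be the $\alpha$ cells storing $\sigma$. Taking the root-to-$U_\sigma$ reachability inside $V_{q_\sigma}$ gives a connected rooted subgraph $T_\sigma$ with $\alpha$ designated ``leaves''. The key combinatorial lemma, which is the technical heart of the Chazelle--Afshani approach, is that in a rooted graph of bounded out-degree the number of such anchored $\alpha$-leaf Steiner structures is at most $S(n)\cdot 2^{O(\alpha)}$: after contracting every unmarked degree-two path to a single edge, the resulting Steiner topology has only $O(\alpha)$ branching vertices, so its shape is encodable in $O(\alpha)$ bits ($2^{O(\alpha)}$ choices), and the absolute location is fixed by one anchor cell in $M$ (contributing the $S(n)$ factor). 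Since atomicity makes distinct $\sigma$'s yield distinct cell-tuples $U_\sigma$, we conclude $N\le S(n)\cdot 2^{O(\alpha)}$, and combining with the previous bound gives $S(n)\ge \binom{r}{\alpha}/(v\cdot 2^{O(\alpha)})\ge r/(v\cdot 2^{O(\alpha)})$; the $Q(n)/v$ form follows from $r>Q(n)$.

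The main obstacle is precisely this Steiner-counting lemma: a naive encoding of $V_{q_\sigma}$ itself costs $2^{O(k_q)}=2^{O(r)}$ bits, which would destroy the bound. The nontrivial point is to charge $O(1)$ bits per \emph{marked} output cell rather than per cell of $T_\sigma$, which requires the path-contraction argument above to carry over from rooted trees to bounded-out-degree pointer graphs; verifying this (and handling the fact that shared subpaths in $M$ may be reused across different $\sigma$) is where the real work lies.
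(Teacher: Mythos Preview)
The paper does not actually prove this theorem; it is quoted as a black box from Afshani~\cite{a12}, so there is no in-paper argument to compare against.  Your double-counting lower bound on the number $N$ of realized $\alpha$-subsets is correct, and your identification of the Steiner-counting step as the crux is also correct.  The gap is that your proposed fix---path contraction---does not deliver the $S(n)\cdot 2^{O(\alpha)}$ bound you need.  After contracting degree-two chains the Steiner tree $T_\sigma$ indeed has only $O(\alpha)$ branch and leaf vertices, so its \emph{abstract topology} costs $O(\alpha)$ bits.  But the topology together with one anchor cell does not determine the $\alpha$ leaf cells $U_\sigma$ inside $M$: each contracted edge stands for an actual directed path in $M$, and in a bounded-out-degree graph a path of length $\ell$ from a fixed start has $2^{\Theta(\ell)}$ continuations.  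Hence recovering $U_\sigma$ still costs $2^{\Theta(|T_\sigma|)}$ bits, and $|T_\sigma|$ can be as large as $|V_{q_\sigma}|=\Theta(r)$.  Your upper bound on the number of distinct $U_\sigma$ is therefore only $S(n)\cdot 2^{O(r)}$, which is useless.

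The argument in~\cite{a12} sidesteps Steiner trees altogether.  One partitions the connected visited graph $V_q$ (of size $O(k_q)$) into $\Theta(k_q/\alpha)$ connected \emph{pieces}, each of size $\Theta(\alpha)$.  A straightforward averaging shows that $\Omega(k_q/\alpha)\ge\Omega(r/\alpha)$ of these pieces are \emph{dense}, i.e.\ contain at least $\alpha$ output cells.  Each dense piece is a connected subgraph of $M$ of size $O(\alpha)$, and there are at most $S(n)\cdot 2^{O(\alpha)}$ such subgraphs in all of $M$; for any fixed piece $P$, the set of queries for which $P$ is dense has volume at most $\binom{O(\alpha)}{\alpha}\,v = 2^{O(\alpha)}v$.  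Integrating over~$\Q$ yields $\Omega(r/\alpha)\le S(n)\cdot 2^{O(\alpha)}\cdot 2^{O(\alpha)}\,v$, which is the claimed bound.  The essential difference from your attempt is that the small connected object being enumerated is a \emph{contiguous piece} of size $O(\alpha)$, not a Steiner tree spanning $\alpha$ leaves that may sit far apart in $V_q$.
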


\begin{theorem}\label{thm:2dlevel}
    Consider an algorithm $\A$ that given any set of $n$ $t$-slabs in $\R^2$, builds a pointer-machine
    data structure $\D$ of size $S(n)$ that solves the MSP in $Q(n) + O(k)$ time, where $k$ is the size of the
    output. In other words, given any $t$-point $\bp$, the data structure can output all the input $t$-slabs $\bs$
    that contain $\bp$ in $Q(n) + O(k)$.
    Then,
$S(n) = \Omega(( \frac{n}{Q(n)} )^2) \cdot \frac{\left( \frac{\log(n/Q(n))}{\log\log n}\right)^{t-1}}{2^{O(2^t)}}$
as well as
$S(n) = \Omega\left(  \frac{n}{Q(n)}\right)^2 { \Theta\left( \frac{\log(n/Q(n))}{t^{3+o(1)}\log\log n}\right)^{t-1-o(1)}}$.
\end{theorem}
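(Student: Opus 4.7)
The plan is to build the hard instance promised by Lemma~\ref{lem:construction} and feed it into the geometric stabbing framework of Theorem~\ref{thm:framework}. From the lemma I would obtain $r$ ``template'' $t$-slabs $\S=\{\bs_1,\dots,\bs_r\}$, each of common $t$-thickness $\tau$, and create the actual input by translating each template so as to tile the query cube $\Q$. Since $\bs_i$ is a Cartesian product of $t$ planar slabs, its translates partition $\Q$ using $\Theta(1/\tau)$ copies, and every $t$-point lies in exactly one translate. Setting $r=2Q(n)$ and $\tau=r/n$ yields $n/r$ translates per template and hence $n$ input $t$-slabs in total, so that every query $t$-point is stabbed by exactly $r=2Q(n)>Q(n)$ input slabs. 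This verifies the depth hypothesis of Theorem~\ref{thm:framework}.

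For the volume hypothesis, pick any $\ell$ of the $n$ constructed slabs. Translates of the same template are pairwise disjoint, so a non-empty intersection can only arise when the $\ell$ slabs are translates of $\ell$ distinct templates $\bs_{i_1},\dots,\bs_{i_\ell}\in\S$. In that case Lemma~\ref{lem:construction} bounds the $2t$-volume of the intersection by
\[
v \;=\; \frac{\tau^2\, r\, t^{t+o(t)}}{\min(n_c,R)}.
\]
Plugging $\tau=r/n$ and $r=\Theta(Q(n))$ into the conclusion $S(n)=\Omega(r/(v\cdot 2^{O(\ell)}))$ of the framework gives
\[
S(n) \;=\; \Omega\!\left(\left(\frac{n}{Q(n)}\right)^{\!2}\cdot \frac{\min(n_c,R)}{t^{t+o(t)}\cdot 2^{O(\ell)}}\right).
\]

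The two bounds in the theorem come from two regimes of the free parameters $\ell$ and $R$. For the first bound I would take $\ell=2^t$ so that Lemma~\ref{lem:easygood} applies with the simpler value $n_c=(\log_R(n/r)/t)^{t-1}$, and then choose $R$ (essentially $R\approx(\log(n/Q(n)))^{t-1}$) so that $R$ and $n_c$ are asymptotically balanced; this yields $\min(n_c,R)=\Omega((\log(n/Q(n))/\log\log n)^{t-1}/t^{O(t)})$, and the $2^{O(2^t)}$ factor absorbs both $t^{t+o(t)}$ and the residual $t^{O(t)}$ losses as well as the $2^{O(\ell)}=2^{O(2^t)}$ loss from the framework. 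For the second bound I would instead force $\ell<2^t$ so that the ``hard case'' formula $n_c=\Theta((\log_R(n/r)/t)^{t-1}\,2^{-t}\,(\log_R(n/r)/t)^{-t/\ell})$ from Lemma~\ref{lem:hardgood} is in effect, and jointly reoptimize $R$ and $\ell$. Letting $\ell$ grow slowly in $t$ and $\log(n/Q(n))$ shrinks $2^{O(\ell)}$ far below $2^{O(2^t)}$ at the price of the $(\log_R(n/r)/t)^{-t/\ell}$ correction to $n_c$; balancing the two effects exactly produces the $(t^{3+o(1)}\log\log n)^{-(t-1-o(1))}$ factor stated in the theorem.

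The main obstacle is precisely this last parameter tuning: the balance condition $R\approx n_c(R,\ell)$ is essentially transcendental and must be solved asymptotically in $L=\log(n/Q(n))$, $t$, and $\ell$, all while respecting the side conditions $R\le n^{1/(2t)}$ and $2\le\ell<r$ of Lemma~\ref{lem:construction}, and keeping track of the $t^{O(t)}$, $2^{-t}$, and $2^{O(\ell)}$ contributions that decide both the base and the exponent of $\log(n/Q(n))/\log\log n$ in each of the two bounds. The remaining ingredients---that a template tiles $\Q$ up to negligible boundary effects (absorbed by working in a slightly shrunken subcube), that intersections of slabs from different templates are exactly what the lemma bounds, and that the framework can be invoked verbatim in the pointer-machine model---are routine.
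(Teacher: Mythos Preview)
Your approach is essentially the same as the paper's: tile $\Q$ with translates of the templates from Lemma~\ref{lem:construction}, apply Theorem~\ref{thm:framework}, and split into the two regimes $\ell=2^t$ and $\ell<2^t$. The one place where you make life harder than necessary is the choice of $R$. You treat the condition $R\approx n_c(R,\ell)$ as a transcendental equation to be solved and flag this as the main obstacle, but the paper sidesteps the issue entirely: since $n_c\le X^{t-1}=\bigl(\tfrac{\log_R(n/r)}{t}\bigr)^{t-1}$, the explicit choice $R=(\log n)^t$ forces $R\ge n_c$ outright, so $\min(n_c,R)=n_c$ and the volume bound is simply $v=O(\tau^2 r\,t^{t+o(t)}/n_c)$ with no balancing needed. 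With $R$ fixed this way, $X=\tfrac{\log(n/r)}{t^2\log\log n}$ and both stated bounds drop out by direct substitution; for the second bound the only remaining optimization is over $\ell$ alone, and the paper takes $\ell=\Theta\bigl(\sqrt{t\log(\log(n/Q(n))/t)}\bigr)$ to equalize $2^{O(\ell)}$ against $X^{t/\ell}$. One minor point: the tiling of $\Q$ by translates of $\bs_i$ uses $\prod_j \Theta(1/\tau(\ts{j}_i))$ copies, so the hidden constants compound to a $2^{O(t)}$ factor; the paper handles this by taking $\tau=2^{O(t)}r/n$ rather than $\tau=r/n$, and the extra factor is absorbed into $t^{t+o(t)}$.
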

\begin{proof} 
  We use Lemma~\ref{lem:construction}, where we set the parameter $r = Q(n)$, $\tau=2^{O(t)} r/n$, parameter $\ell$ to be determined, and
    $R=\left( \log n \right)^t$, which implies $X= \frac{\log_R
    (n/r)}{t} = \frac{\log(n/r)}{t^2 \log\log n}$.  
    The lemma gives us a set $\S$ containing $r$ $t$-slabs of thickness $\tau$.
    We now create a set $\mathcal{R}$ of $n$ $t$-ranges in the following way.
    For every $\bs_i \in \S$, we create a set $\S_i$ containing $O(n/r)$ disjoint translations of $\bs_i$
    such that the $t$-slabs in $\mathcal{R}_i$ cover $\Q$ entirely. 
    To be specific, consider a $t$-slab $\bs_i = (\ts{1}_i, \dots, \ts{t}_i)$.
    We tile $\Q_j$ using disjoint copies of $\ts{j}_i$ to obtain a set 
    $\S_{i,j}$ of two-dimensional slabs in $\Q_j$ (see Figure~\ref{fig:hyperslabs}).
    The set $\S_i$ is the Cartesian produce of these sets, that is,
    $\S_i = \S_{i,1} \times \dots \times \S_{i,t}$.
    Since each $\S_{i,j}$ tiles $\Q_j$, it follows that the set of slabs in $\S_i$
    tile $\Q$.
    Furthermore, the number of slabs in $\S_{i,j}$ is $\Theta(\frac{1}{\tau(\ts{j}_i)})$.
    This implies, the number of $t$-slabs in $\S_i$ is 
    \[
      |\S_i| \le \prod_{j=1}^t \Theta(\frac{1}{\tau(\ts{j}_i)}) = O\left( c^t \frac{1}{\tau(\bs_i)} \right) =O\left( c^t \frac{1}{\tau} \right).
    \]
    Since we have set $\tau = 2^{O(t)} r/n$, we can pick the constant in the exponent large enough 
    such that $|\S_i| \le n/r$.

    We let $\mathcal{R} = \S_1\cup \dots \cup \S_r$.
    By what we have just proved, $\mathcal{R}$ contains at most $n$ $t$-slabs.

\begin{figure}\centering
\includegraphics{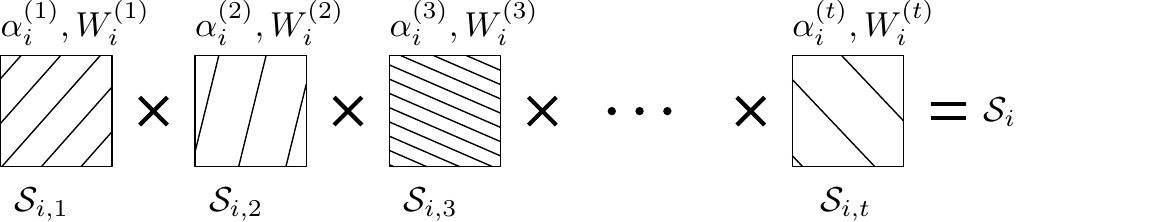}
\caption{We tile each $\Q_j$, then define the set of $t$-slabs in our construction 
    to be the Cartesian produce of these two-dimensional slabs.  }
  \label{fig:hyperslabs}
\end{figure}

    We look at the maximum volume, $v$, of the intersection of $\ell$ ranges $\bs_1, \dots, \bs_\ell$.
    If two of these $\ell$ $t$-slabs are the translations of the same $t$-slab from $\S$, then
    by construction, the volume of their intersection is empty. 
    Otherwise, we obtain a bound on $v$ by using Lemma~\ref{lem:construction}. 
    Since $n_c \leq X^{t-1}$, this implies $R\ge n_c$ and thus ${\frac{r^3}{n^2 n_c} \ge \frac{r^3}{Rn^2}}$
    and thus Lemma~\ref{lem:construction} gives us
    $v \le \frac{r^3 2^{O(t)} t^{t+o(t)}}{n^2 n_c} = \frac{r^3  t^{t+o(t)}}{n^2 n_c}$. 
    If we pick $\ell > 2^{t-1}$ (in particular, if we set $\ell = 2^t$) we have
    $n_c = X^{t-1} =\left( \frac{\log (n/r)}{t^2\log\log n}\right)^{t-1}$.
    Using the notation $a \gg b$ to denote $a = \Omega(b)$, 
    Theorem~\ref{thm:framework}, gives the following lower bound
    \begin{align*}
        S(n)&\gg Q(n) \cdot \frac{n^2  \Theta\left( \frac{\log(n/Q(n))}{t^2\log\log n}\right)^{t-1}}{Q(n)^3 t^{t+o(t)}}\cdot \frac{1}{2^{O(2^t)}}  \gg \left( \frac{n}{Q(n)} \right)^2 \cdot \frac{\left( \frac{\log(n/Q(n))}{\log\log n}\right)^{t-1}}{2^{O(2^t)}}.
    \end{align*}
    For small values of $t$ (e.g., constant $t$), this lower bound shows that the space/query time trade-off should
    increase by roughly a $\log n$ factor for every increase in $t$.
    However, 
    for larger values of $t$ the above lower bound degrades too quickly because of the
    $2^{O(2^t)}$ factor so we switch to the other branch in Lemma~\ref{lem:construction}.
   We set $\ell$ to be a value smaller than $2^t$ and
   obtain the bound $n_c =\Theta( X^{t-1}2^{-t}X^{-t/\ell})^{t-1}$.
   Note that we can again pick  $R=\left( \log n \right)^t$ which still satisfies $R \ge n_c$.
   Thus, we have $X:= \frac{\log_R (n/r)}{t} = \frac{\log(n/r)}{t^2\log\log n}$.
   In turn, we get that 
    \[
        n_c = \Theta( X^{t-1}2^{-t}X^{-2t/\ell}) = \Theta\left( \frac{\log(n/r)}{t^2\log\log n}\right)^{t-1}\cdot \left( \frac{\log(n/r)}{t^2\log\log n} \right)^{-2t/\ell}.
   \]

   This gives the lower bound
    \begin{align*}
        S(n)&\gg Q(n) \cdot \frac{n^2  \Theta\left( \frac{\log(n/Q(n))}{t^2\log\log n}\right)^{t-1}\cdot \left( \frac{\log(n/Q(n))}{t^2\log\log n} \right)^{-t/\ell}}{Q(n)^3 t^{t+o(t)}}\cdot \frac{1}{2^{O(\ell)}}  \\
        &\gg \left(  \frac{n}{Q(n)}\right)^2 \frac{ \Theta\left( \frac{\log(n/Q(n))}{t^{3+o(1)}\log\log n}\right)^{t-1}}{2^{O(\ell)}\cdot \left( \frac{\log(n/Q(n))}{t^2\log\log n} \right)^{t/\ell}}.
    \end{align*}
    We now can set $\ell= \Theta(\sqrt{t \log(\frac{\log n/Q(n)}{t})})$ to balance out the two terms in the denominator.
    This gives us the space lower bound of 
    \[
        S(n) \gg \left(  \frac{n}{Q(n)}\right)^2  \Theta\left( \frac{\log(n/Q(n))}{t^{3}\log\log n}\right)^{t-1-o(1)}.
    \]
\end{proof}

\subsection{The Lower Bound for \Frechet Queries in 2D}
\seclab{lower-bound-frechet}

We show how to use the construction from the previous section to prove the same
lower bound for Fr\'echet queries for polygonal curves in the plane.   
We first consider discrete \Frechet queries as a warm up, since they are
much easier to adapt our lower bound to. 

\subsubsection{Discrete \Frechet queries}\label{sec:lb_df}
The main idea is to simulate the phenomenon of a point stabbing a slab
using a point and intersection of two equal-sized circles. 
In particular, we use the following observation (see Figure~\ref{fig:lens-slab}).
\begin{figure}\centering
\includegraphics[scale=0.4]{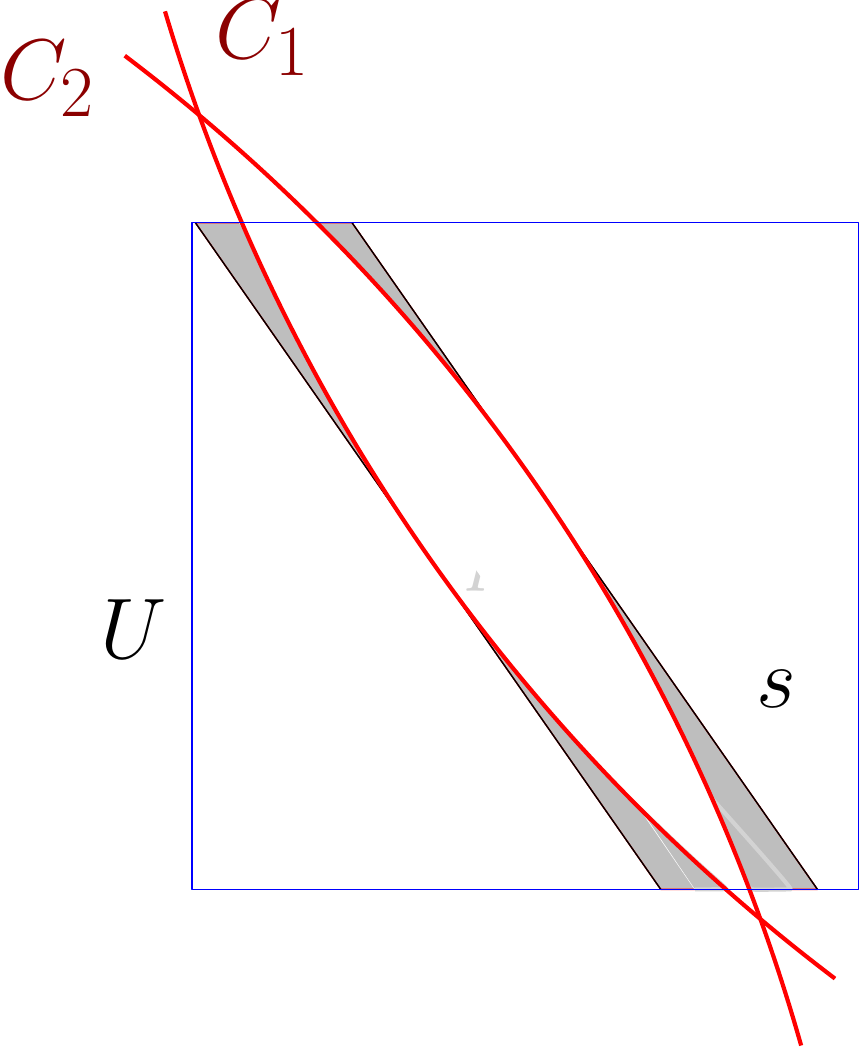}
\caption{
    We approximate a slab using the intersection of two circles.
    Given a slab $s$, we can find two circles $C_1$ and $C_2$ of equal radius such that
    their intersection is fully inside $s$ and when confined to the unit square
    $U$, their intersection almost covers the same area too.
    The symmetric difference of the slab and the intersection of the circles is shaded in grey.
    For any $\varepsilon > 0$, we can find the circles such that 
    the area of the grey region is less than $\varepsilon$.
}
\figlab{lens-slab}
\end{figure}

\begin{observation}\label{ob:lens}
    Given a slab $s$ and for any $\varepsilon$ we can find
    a value $\rho_{\varepsilon,s}$ such that for any value
    $\rho \ge \rho_\varepsilon$, we can place two circles $C_1$ and $C_2$ of  radius $\rho$ in the plane such that
    $C_1 \cap C_2 \subset s$ and the area of their error area, $U \cap s \setminus(C_1 \cap C_2)$, 
    is less than $\varepsilon$ where $U$ is the unit square.
\end{observation}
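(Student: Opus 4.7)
The plan is to align coordinates with the slab, place the two circles symmetrically about the central line of the slab, and then bound the two cap-shaped regions that lie inside the slab but outside the lens. Let $w = \tau(s)$ and, by a rigid motion (which preserves all relevant areas), work in coordinates $(u,v)$ in which $s = \{(u,v) : |v| \le w/2\}$. I will place $C_1$ centered at $(0, -(\rho - w/2))$ and $C_2$ at $(0, \rho - w/2)$, both of radius $\rho$, so that the lens $C_1 \cap C_2$ is symmetric about $v = 0$ and attains its maximum half-width exactly $w/2$ at $u = 0$.

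The first step is to check that $C_1 \cap C_2 \subseteq s$. The upper boundary of the lens is the arc of $C_1$ given by $v = w/2 - \rho + \sqrt{\rho^2 - u^2}$, which is at most $w/2$ for every $u$; the lower boundary is $\ge -w/2$ by symmetry. Next I estimate the vertical gap between the upper boundary of the slab and the upper arc of the lens. For any $u$ at which the lens is nonempty, the gap equals
\[
\rho - \sqrt{\rho^2 - u^2} \;=\; \frac{u^2}{\rho + \sqrt{\rho^2 - u^2}} \;\le\; \frac{u^2}{\rho},
\]
and an identical bound holds on the bottom. Since $U$ has diameter $\sqrt{2}$, the $u$-projection of $U \cap s$ lies in an interval of length at most $\sqrt{2}$, so integrating the cap width over this interval gives a total error area of at most $O(1/\rho)$.

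Finally, there is a small caveat: I must also ensure that the horizontal span of the lens, namely $|u| \le \sqrt{\rho w - w^2/4}$, covers the $u$-projection of $U \cap s$, so that the cap estimate does not miss a full-width slice of the slab. This holds once $\rho w - w^2/4 \ge 2$, i.e., once $\rho$ exceeds a slab-dependent threshold, which is precisely why the observation allows $\rho_{\varepsilon,s}$ to depend on $s$. Setting $\rho_{\varepsilon,s}$ to be the maximum of this threshold and a sufficiently large constant divided by $\varepsilon$ then delivers the required bound. The only mildly delicate point is the ``lens is long enough'' containment check; once that is in place, the elementary Taylor estimate above immediately gives the $O(1/\rho)$ bound on the symmetric-difference area and hence the claim.
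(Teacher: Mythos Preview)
The paper states this as an observation without proof, supplying only Figure~\ref{fig:lens-slab} by way of justification. Your argument supplies exactly the missing details and is correct: aligning coordinates with the slab, placing the two disk centres symmetrically so the lens touches both boundary lines of $s$ at $u=0$, and then using the elementary estimate $\rho-\sqrt{\rho^2-u^2}\le u^2/\rho$ to bound the two cap-shaped discrepancies is precisely the intended computation. The horizontal-extent check (that $\sqrt{\rho w - w^2/4}$ exceeds the half-length of the $u$-projection of $U\cap s$) is indeed the only point where the slab-dependent threshold enters, and you handle it correctly.

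One small point you leave implicit: when you fix the rigid motion so that $s=\{|v|\le w/2\}$, you still have a free translation in the $u$-direction; you should (and tacitly do) use it to centre the $u$-projection of $U$ at $u=0$, so that the integral $\int 2u^2/\rho\,du$ is taken over $[-\sqrt{2}/2,\sqrt{2}/2]$ rather than over some far-off interval. With that said explicitly, the $O(1/\rho)$ bound is immediate and the observation follows.
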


The idea is now very straightforward: we can replace the set of slabs used in our 
construction, with a set of lenses, i.e., the intersection of circles. 

Fix a global parameter $\varepsilon$.
Let $\mathcal{R}$ be the set of $t$-slabs used in the proof of Theorem~\ref{thm:2dlevel}.
To iterate, every $t$-slab $\bs \in \mathcal{R}$ is an element of the Cartesian product
of $t$ two-dimensional slabs $\ts{1}, \dots, \ts{t}$.
Using Observation~\ref{ob:lens}, we can approximate every slab $s=\ts{j}$ with a lens formed by the
intersection of two circles of
radius at least $\rho_{\varepsilon,s}$.
Let $\rho_\varepsilon$ be the maximum value of this radius, over all slabs $\ts{j}$ and over
all the $t$-slabs $\bs \in \mathcal{R}$.

We create $t$ unit square $\Q_1, \dots, \Q_t$ and place them such that the distance between them 
is greater than $10 \rho_\varepsilon$ (see Figure~\ref{fig:lb_discrete} on page \pageref{fig:lb_discrete}; the unit squares
are drawn closer in the figure for the purpose of illustration so one should imagine them far
enough that the circles intersecting a unit square $\Q_j$, do not intersect any other unit square.).
For every slab $\bs \in \mathcal{R}$ we create a chain $c(\bs)$ of size $2t$, by connecting the centers of the
circles that give rise to the lens that approximates $\ts{j}$.
A query $t$-point $\bp$ is simply represented by another chain $q(\bp)$ that connects the points $\tp{j}$,
$1 \le j \le t$.
See Figure~\ref{fig:lb_discrete}.

The only issue we are left with is that the lenses only approximate the slabs, meaning, there will
be $t$-points $\bp \in \Q$ that behave differently with respect to the slabs compared to the lenses.
Inside every unit square $\Q_j$, we have created $n$ lenses such that the error area of 
each lens is at most $\varepsilon$.
Thus, the error area of all the slabs created inside $\Q_j$ is at most $n\varepsilon$. 
Over all unit squares $\Q_j$, this error area is $nt\varepsilon$.
Remember that we had defined $\Q = \Q_1 \times \dots \times \Q_t$.
Define $\Q'$ as the subset of $\Q$ that includes all the $t$-points $\bp$ such that
none of the points $\tp{j}$ is inside an error area. 
By what we have observed, $\vol_{2t}(\Q') \ge 1 - nt\varepsilon$.
By picking $\varepsilon$ small enough, we can ensure that $\vol_{2t}(\Q') \ge 1/2$.

For a unit square $\Q_j$, we have placed all the centers of the circles that create
the slabs inside $\Q_j$, within distance of $\rho_\varepsilon$ of $\Q_j$.
Since we have placed the unit squares $\Q_j$ far apart, it means that a point of $q(\bp)$ inside 
$\Q_j$ can only be matched to the centers of the circles that create the slabs inside $\Q_j$.
Thus, a query $t$-point $\bp \in \Q'$ is inside a $t$-slab $\bs$ if and only 
the chain $q(\bp)$ is within discrete \Frechet distance $\rho_\varepsilon$ of the chain $c(\bs)$.

Observe that in the framework of Afshani (Theorem~\ref{thm:framework}), the region $\Q$ is the
set of all possible queries and it is only required to have volume one. 
To finish off, we rescale $\Q$ and all the slabs used in our construction by a constant factor
such that the volume of $\Q'$ equals one. 
Then, we apply the framework to the set of lenses (i.e., $t$-lenses) instead of $t$-slabs. 
Consider the requirement (i) in Theorem~\ref{thm:framework}. 
The construction in the previous section ensures that for every $t$-points $\bp=(\tp{1},\dots, \tp{t}) \in \Q'$, there are $r$ $t$-slabs
$\bs_1, \dots, \bs_r$ that contain $\bp$. 
Observe that this directly implies the existence of $r$ $t$-lenses that contain $\bp$ because $\tp{j}$ is not
contained in any error region, for all $1 \le j \le t$.
The requirement (ii) is trivially satisfied since lenses are created to be subsets of their corresponding
slabs, meaning, the volume of an intersection of lenses will have a smaller volume than the intersection
of their corresponding slabs. 

Thus, the lower bound of Theorem~\ref{thm:2dlevel} also applies to discrete \Frechet queries
where the input chains have complexity $2t$ and the query curves have complexity $t$.

\subsubsection{The continuous case}\label{sec:lb_dfc}
The construction in the previous subsection does not apply to the continuous \Frechet case.
The main problem here is that unlike the discrete case, it is not required for vertices
of the query chain to be mapped to the vertices of the input chain. 
As a result, an input chain $c(\bs)$ may match a query $q(\bp)$ even though 
the $t$-point $\bp$ is not contained in the $t$-slab $\bs$.

To resolve this issue, we
describe a construction of input curves that will take the role of the
$t-$slabs and we define a suitable set of query curves. 
Our construction does not vary the radius of the queries, we set the radius to $1$.

In the following, a polygonal curve is implicitly defined by a sequence of
vertices. To obtain the explicit curve, consecutive vertices need to be
linearly interpolated. The Cartesian product of two sets of polygonal curves simply 
concatenates the sequence of vertices thereby
effectively inserting the line segment that connects the endpoints of the
corresponding curves.

\paragraph{Zig-zag gadget}
Our input construction consists of concatenatenations of basic gadgets which we
call \emph{zig-zag gadgets} and which are described as follows.  The gadget is
constructed using parameters $x_1,x_2,x_3 \in [-1,1]$.  It is a polygonal curve
with four vertices $p_1,p_2,p_3,p_4$ defined as follows 
\[\pi(x_1,x_2,x_3) = \left( 
 (0,-4), 
 \left(x_1+\cos(\theta), \frac{x_2+x_3}{2} \right), 
 \left(x_1-\cos(\theta), \frac{x_2+x_3}{2} \right), 
 (0,4) 
 \right)\]
with $\theta$ defined by the equality $\sin(\theta)=\frac{|x_2-x_3|}{2}$.
An example is depicted in \figref{zigzag_gadget}.
Note that the two interior vertices $p_2$ and $p_3$ of the curve are chosen 
such that the two unit circles centered at $p_2$ and $p_3$ intersect on the 
vertical line segment from $(x_1,x_2)$ to $(x_1,x_3)$. 

\paragraph{Queries}
To simplify our analysis we will restrict the set of queries to polygonal
curves that have odd vertices on the vertical line at $-4$ and even vertices on
the vertical line at $4$.  We define the set of queries $\Q=\Q_1 \times \Q_2
\times \dots \Q_t$, where $\Q_i$ is the set of left-to-right line segments with
vertices $q_1=(-4,y_1)$ and $q_2=(4,y_2)$ for $y_1, y_2 \in [-1,1]$.
Each such query can be represented by an ordered set of lines 
$\ell_1,\dots,\ell_t$, such that $\ell_i$ is the line supporting the $i$th 
left-to-right edge of the query. Each line $\ell: y = a x + b$ can be 
represented as a point $p_{\ell}=(a,b)$ in the dual space of lines. We intend to use the
volume argument of Theorem \ref{thm:framework} in this dual space.

\begin{observation}
The set $\Q_i$ in the dual space forms a parallelogram of area
$1/2$.\footnote{Technically speaking, in order to obtain a set of queries with
area $1$ the space of queries needs to be scaled by a factor $2$. However, this
scaling does not affect our asymptotic bounds on the volumes.} This follows
from the fact that a line $\ell$ supports a line segment in $\Q_i$ if and only
if it intersects the two vertical intervals that define the set $\Q_i$.  In the
dual space this corresponds to the intersection of two slabs bounded by the
lines $y=4 x + 1$, $y=4 x - 1$, $y=-4 x +1$ and $y=-4 x -1$.
\end{observation}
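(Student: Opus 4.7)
The plan is to unpack the duality and translate the defining conditions of $\Q_i$ directly into linear inequalities in the dual coordinates $(a,b)$. First I would observe that each line segment in $\Q_i$ is uniquely determined by its endpoints $(-4,y_1)$ and $(4,y_2)$, and hence by its supporting line $\ell: y = ax + b$, via the bijection $y_1 = -4a + b$ and $y_2 = 4a + b$. So it suffices to characterize which $(a,b) \in \R^2$ arise this way, and then compute the area of that region.

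The second step is to write down the constraint. The pair $(y_1, y_2) \in [-1,1]^2$ is equivalent to $\ell$ crossing both vertical intervals $\{-4\}\times[-1,1]$ and $\{4\}\times[-1,1]$, which in the dual gives
\[
-1 \le -4a+b \le 1 \qquad \text{and} \qquad -1 \le 4a+b \le 1.
\]
Each double inequality defines a slab in the $(a,b)$-plane: the first is bounded by the parallel lines $b = 4a \pm 1$, and the second by $b = -4a \pm 1$. The feasible region is the intersection of these two slabs, which is a (non-degenerate) parallelogram, as stated.

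For the third step I would compute the area via the linear change of variables $u = -4a + b$, $v = 4a + b$. Inverting gives $a = (v-u)/8$ and $b = (u+v)/2$, so the Jacobian determinant is $\left|\det\begin{pmatrix} -1/8 & 1/8 \\ 1/2 & 1/2 \end{pmatrix}\right| = 1/8$. In the $(u,v)$-coordinates the region is simply $[-1,1]^2$ with area $4$, so the area in the $(a,b)$-plane is $4 \cdot 1/8 = 1/2$, as claimed. Equivalently, one can read off the vertices of the parallelogram as $(a,b) = (\pm 1/4, 0)$ and $(0, \pm 1)$ and compute the area directly from the cross product of the diagonals.

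There is no real obstacle here; the only point worth checking carefully is that the map from $\Q_i$ to the dual parallelogram is a genuine bijection (so the area computation is meaningful as a measure on queries). This follows because $\Q_i$ contains only non-vertical left-to-right segments spanning from $x=-4$ to $x=4$, and each such segment is encoded without loss by its supporting line, which in turn is encoded without loss by $(a,b)$.
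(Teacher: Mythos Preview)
Your proposal is correct and follows essentially the same approach as the paper: identifying $\Q_i$ in the dual with the intersection of the two slabs bounded by $b = 4a \pm 1$ and $b = -4a \pm 1$. The paper states this without carrying out the area computation explicitly, so your Jacobian argument (or the equivalent diagonal computation) simply fills in that routine step.
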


\begin{figure}\centering
\includegraphics[width=\textwidth]{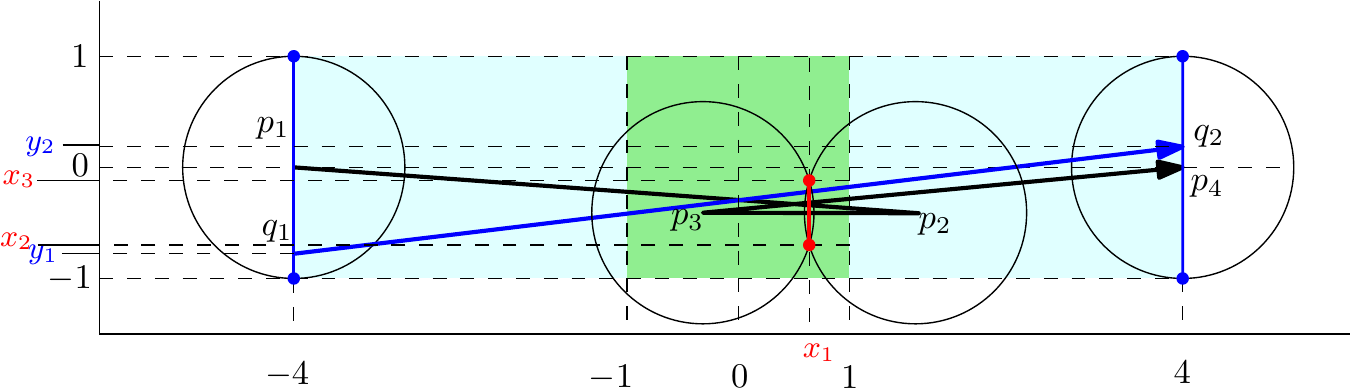}
\caption{Zig-zag gadget given by $p_1,p_2,p_3$ and $p_4$ (black curve) and query edge
given by $q_1$ and $q_2$ (blue edge). Any query edge that outputs the zigzag gadget
needs to intersect the vertical interval bounded by the two points $(x_1,x_2)$
and $(x_1,x_3)$. }
\figlab{zigzag_gadget}
\end{figure}

\begin{lemma}
For any zig-zag gadget $p=\pi(x_1,x_2,x_3)$ with $x_1,x_2,x_3 \in [-1,1]$ and
$|x_2-x_3|\leq 1$ and any line segment $q \in \Q_i$ we have $\distFr{p}{q} \leq
1$ if and only if $q$ is supported by a line $\ell$, whose dual $p_{\ell}$ lies
in the slab that has slope $x_1$ and intersects the $y$-axis in the interval
$[x_2,x_3]$.
\lemlab{dual-slabs}
\end{lemma}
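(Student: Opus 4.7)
The proof plan rests on the classical ordered line-stabbing characterisation of the \Frechet distance due to Guibas~\etal~\cite{guibas1993approximating}: if $q$ is a single line segment supported by a line $\ell$ and $\distFr{p}{q}\le 1$, then $\ell$ must cross the closed unit disks $D_1,D_2,D_3,D_4$ around the four vertices of $p$ in their correct order along $\ell$. The constraints coming from $D_1$ and $D_4$ are absorbed by the global placement of the gadget inside the concatenated input curve (the vertices $p_1,p_4$ serve to glue consecutive gadgets), so the binding constraint is the one between the two interior vertices. Because $|x_2-x_3|\le 1$ we have $\cos\theta\ge\sqrt{3}/2>0$, so $p_2$ lies strictly to the right of $p_3$; yet $p_2$ is visited first along $p$. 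Since $q$ is traversed left-to-right, monotonicity of the Fr\'echet parametrisation forces $\ell\cap D_2\cap D_3\ne\emptyset$, i.e., $\ell$ must stab the lens $D_2\cap D_3$.

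Next I would identify the lens with a vertical interval. By construction the circles $\partial D_2$ and $\partial D_3$ meet exactly at $(x_1,x_2)$ and $(x_1,x_3)$, so the tips of the lens sit on the vertical segment $I=\{x_1\}\times[\min(x_2,x_3),\max(x_2,x_3)]$. The inclusion $I\subset D_2\cap D_3$ gives one direction of the equivalence ``$\ell$ stabs the lens iff $\ell$ stabs $I$'' for free. For the other direction, note that a segment $q\in\Q_i$ has slope in $[-1/4,1/4]$ (vertices at $x=\pm 4$ with $y\in[-1,1]$), while the two boundary arcs of the lens are close to vertical (their tangents at the tips form an angle $\theta\le 30^\circ$ with the vertical because $\sin\theta=|x_2-x_3|/2\le 1/2$). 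A near-horizontal chord of such a lens cannot enter and exit through the same arc, and must therefore cross from the half of the lens left of $I$ to the half right of $I$, hitting $I$.

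Dualizing is then routine: the line $y=ax+b$ meets $I$ iff $ax_1+b\in[\min(x_2,x_3),\max(x_2,x_3)]$, which in the $(a,b)$-plane is exactly the slab bounded by the two parallel lines $b=-x_1 a+x_2$ and $b=-x_1 a+x_3$, matching the slab described in the lemma. For the converse, given a line $\ell$ in that slab, pick a point $y\in\ell\cap I$; collapsing the subchain $p_2p_3$ to $y$ (which lies within distance $1$ of both $p_2$ and $p_3$ since $I\subset D_2\cap D_3$) and matching the two tails $\overline{p_1p_2}$ and $\overline{p_3p_4}$ to the corresponding sub-segments of $q$ on either side of $y$ yields a monotone reparametrisation of the required quality.

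The step I expect to be the main obstacle is the lens-to-segment equivalence: one must combine the slope bound on $\ell$ with the bound $|x_2-x_3|\le 1$ uniformly over all admissible parameters, ruling out any pathological near-vertical chord of the lens that misses $I$. Once that geometric fact is in hand, the stabbing reformulation of $\distFr{p}{q}\le 1$ and the dualisation both proceed by standard means.
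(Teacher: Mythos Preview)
Your proposal follows essentially the same route as the paper: invoke the Guibas~\etal\ ordered-stabbing characterisation, argue that for a left-to-right segment with slope in $[-1/4,1/4]$ the ordered stabbing of $D_2,D_3$ forces the line through the lens, then use the bound $|x_2-x_3|\le 1$ to show that a line of such small slope stabs the lens iff it meets the vertical interval $I$, and finally dualise. Your tangent-angle argument (``a near-horizontal chord cannot enter and exit through the same arc'') is the same geometric fact the paper states as ``a line that stabs the lens outside $\I$ has slope $\ge\sqrt{3}$ or $\le-\sqrt{3}$''; you simply phrase it from the arc side rather than the chord side.

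The one visible difference is your treatment of $D_1$ and $D_4$. You defer these to ``global placement'', whereas the paper asserts that $q\in\Q_i$ already intersects $D_1$ and $D_4$ in the right order. Neither treatment is fully satisfying for the lemma read in isolation (the segment $q$ stays at height in $[-1,1]$ while $p_1=(0,-4)$ and $p_4=(0,4)$, so the endpoint-matching part of the \Frechet definition is in tension with $\distFr{p}{q}\le 1$ for a single gadget), but both are adequate for the use the lemma is put to in the concatenated construction. Your explicit converse via collapsing $p_2p_3$ to a point of $\ell\cap I$ is a nice addition the paper omits. One minor slip: your dual lines $b=-x_1 a+x_2$ and $b=-x_1 a+x_3$ have slope $-x_1$, not $x_1$ as the lemma states; this is only a sign convention issue in the duality and does not affect the argument.
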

\begin{proof} 
In the following, we refer to the intersection of the unit disks that are 
centered at $p_2$ and $p_3$ as the \emph{lens} and we denote with $\I$ 
the vertical interval formed by the two intersection points $(x_1,x_2)$ and $(x_1,x_3)$.
Guibas \etal~\cite{guibas1993approximating} proved that for a line segment $q$
it holds that $\distFr{p}{q} \leq 1$ if and only if $q$ stabs the unit disks
centered at the vertices of $p$ in the order along $p$. In the lemma by Guibas
\etal, the ordered stabbing requires a sequence of points on $q$ to exist, in
the order in which they appear on $q$, such that the $i$th point lies inside or
on the boundary of the $i$th disk centered at the vertices along the curve $p$. 
We claim that $q$ is an ordered stabber in this sense if and only if it
intersects the interval $\I$. Since $q \in \Q_i$, it must be that it intersects the disk at
$p_1$ and $p_4$ in the right order. This also implies that 
the slope of $q$ lies in the interval $[-\frac{1}{4}, \frac{1}{4}]$ and that
$q$ is directed from left to right. Therefore, $q$ needs to stab the disks at 
$p_2$ and $p_3$ in the lens.
Since we ensured $|x_2-x_3| \leq 1$, the slope of a line that stabs the lens
outside $\I$ is either larger or equal $\sqrt{3}$ or smaller or equal
$-\sqrt{3}$, thus the range of slopes of such lines is disjoint from the range
of slopes of query line segments in $\Q_i$.
Therefore, $q$ intersects $\I$ if and only if $\distFr{p}{q}\leq 1$. 
Now, the set of lines that intersects $\I$ corresponds to the set of points in the dual space
that lies in the intersection of two parallel halfspaces bounded by the lines
$y=x_1 x + x_2$ and $y= x_1 x + x_3$. This is the slab with slope $x_1$ which 
intersects the $y$-axis in the interval $[x_2,x_3]$.
\end{proof}

\paragraph{Input}
As in the previous section, we build $r$ different input sets $\S_1,\S_2,\dots,S_r$ 
such that:
(i) each $\S_i$ contains $\Theta(n/r)$ input curves 
(ii) for any two input curves  $\bs_1, \bs_2 \in \S_i$, the set of
queries that contains $\bs_1$ is disjoint from the set of queries that
contains $\bs_2$.
The set $\S_i$ is defined as the Cartesian product of the sets 
$\S_{i,1} \times S_{i,2} \times \dots S_{i,t}$, where each $\S_{i,j}$ is a set
of zig-zag gadgets defined by parameters $\alpha=\alpha^{(j)}_{i}$ and $W=W^{(j)}_{i}$; $W$ here will
play the role of the thickness.

We define a series of zig-zag gadgets with indices  
$1 \leq i' \leq \left\lceil \frac{2} {W} \right\rceil$:
$ \pi_{i'} = \pi(x^{i'}_1,x^{i'}_2,x^{i'}_3) $
with 
\begin{eqnarray*}
x^{i'}_1 &=& \tan(\alpha)\\
x^{i'}_2 &=& (i'-1) W - 1 \\
x^{i'}_3 &=& i' W - 1 
\end{eqnarray*}
The elements of this series form the set $\S_{i,j}$. 

Following \lemref{dual-slabs}, we call the slab with slope $x^{(i')}_1$ and
$y$-intercept $[x^{(i')}_2,x^{(i')}_3]$ the \emph{dual slab} of the
corresponding zig-zag gadget $\pi_{i'}$.  Furthermore, we call the Cartesian
product of $t$ dual slabs, each corresponding a zig-zag gadget of an element of
$\S_{i,j}$, a \emph{dual t-slab} of the corresponding element of $\S_i$.  Note
that \lemref{dual-slabs} puts some conditions on the parameters $\alpha$ and
$W$. In order to use the described representation in the dual space we need
that $-1 \leq x_1 \leq 1$, which translates to $-\frac{\pi}{4} \leq \alpha \leq
\frac{\pi}{4}$, and we need $|x_2-x_3| \leq 1$ which translates to $0 < W \leq
1$. However, this does not prevent us from using the lower bound construction
from the previous section as long as the range of angles for $\alpha$ is
constant. 

\begin{observation}
The angle of a dual slab of a zig-zag gadget in $\S_{i,j}$ is equal to 
$\alpha^{(j)}_i$ and the width of this slab is at most $W^{(j)}_i$. 
\end{observation}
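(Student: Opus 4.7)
The plan is to verify this observation by directly unpacking the parameters of a zig-zag gadget in $\S_{i,j}$, invoking \lemref{dual-slabs} to read off the dual slab, and then performing two elementary trigonometric calculations.

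First I would recall that by definition, an element of $\S_{i,j}$ is a gadget $\pi_{i'} = \pi(x_1^{i'}, x_2^{i'}, x_3^{i'})$ with
$x_1^{i'} = \tan(\alpha^{(j)}_i)$, $x_2^{i'} = (i'-1)W - 1$, and $x_3^{i'} = i'W - 1$, where $W = W^{(j)}_i$.  By \lemref{dual-slabs}, the corresponding dual slab is the set of points $(a,b)$ in the dual plane bounded by the two parallel lines $b = x_1^{i'} a + x_2^{i'}$ and $b = x_1^{i'} a + x_3^{i'}$.

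Next, for the angle claim, the slab's bounding lines have common slope $x_1^{i'} = \tan(\alpha^{(j)}_i)$, and since the construction restricts $\alpha^{(j)}_i$ to the range in which $\arctan$ is the inverse of $\tan$, the angle that the slab makes with the horizontal axis is exactly $\alpha^{(j)}_i$.  For the width claim, the thickness of a slab bounded by two parallel lines of the form $b = m a + c_1$ and $b = m a + c_2$ equals $|c_1 - c_2|/\sqrt{1 + m^2}$; plugging in $|x_3^{i'} - x_2^{i'}| = W$ and $m = \tan(\alpha^{(j)}_i)$ gives a thickness of
\[
\frac{W}{\sqrt{1 + \tan^2(\alpha^{(j)}_i)}} = W \cos(\alpha^{(j)}_i) \leq W = W^{(j)}_i,
\]
since $|\cos(\alpha^{(j)}_i)| \leq 1$.

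There is no real obstacle here since the observation is essentially definitional once \lemref{dual-slabs} is applied; the only small point to be careful about is that the range of $\alpha^{(j)}_i$ fixed earlier (namely $-\pi/4 \leq \alpha \leq \pi/4$, so that $x_1 \in [-1,1]$) is precisely what guarantees that $\arctan \circ \tan$ recovers $\alpha^{(j)}_i$, and that $W \leq 1$ is what guarantees $|x_2 - x_3| \leq 1$ as required by \lemref{dual-slabs}. Both conditions are already built into the construction.
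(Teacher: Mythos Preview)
Your proof is correct. The paper states this as an observation without proof, and your argument cleanly fills in the implicit details: invoking \lemref{dual-slabs} to identify the dual slab's slope and $y$-intercepts, then computing the angle via $\arctan(\tan\alpha) = \alpha$ on the restricted range and the perpendicular width via $|x_3 - x_2|\cos\alpha \le W^{(j)}_i$.
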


\begin{observation}
The number of dual $t$-slabs in $\S_i$ is $\Theta(n/r)$.
\end{observation}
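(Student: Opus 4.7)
The plan is to translate the counting directly through the Cartesian product structure of $\S_i$ and then invoke the thickness bookkeeping carried over from Lemma~\ref{lem:construction} and the proof of Theorem~\ref{thm:2dlevel}. First, I would unfold the definition $\S_i = \S_{i,1}\times \S_{i,2}\times \cdots \times \S_{i,t}$ and observe that each $\S_{i,j}$ is the series of zig-zag gadgets indexed by $1\le i'\le \lceil 2/W^{(j)}_i\rceil$, so $|\S_{i,j}|=\lceil 2/W^{(j)}_i\rceil$. Since each $W^{(j)}_i$ lies in $(0,1]$, we have $\lceil 2/W^{(j)}_i\rceil = \Theta(1/W^{(j)}_i)$ with absolute constants, and hence
\[
|\S_i| \;=\; \prod_{j=1}^{t} \left\lceil \tfrac{2}{W^{(j)}_i}\right\rceil \;=\; \Theta\!\left(\,\prod_{j=1}^{t} \tfrac{1}{W^{(j)}_i}\,\right)\cdot C^{t}
\]
for an absolute constant $C$ coming from the per-coordinate $\Theta$.

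Next I would use the correspondence between the widths $W^{(j)}_i$ in the current zig-zag construction and the thicknesses $\tau(\ts{j}_i)$ of the two-dimensional slabs underlying the $t$-slab $\bs_i$ from Lemma~\ref{lem:construction}. By construction of the dual slabs in \lemref{dual-slabs}, each $W^{(j)}_i$ plays exactly the role of the thickness of $\ts{j}_i$, and by Observation~\ref{ob:widthprod} the product of these thicknesses equals $\tau$, the common thickness $\tau(\bs_i)=\tau$. Therefore
\[
\prod_{j=1}^{t} W^{(j)}_i \;=\; \tau,
\]
and the displayed expression becomes $|\S_i|=\Theta(C^t/\tau)$.

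Finally, I would plug in the choice $\tau = 2^{O(t)}\, r/n$ that was fixed in the proof of Theorem~\ref{thm:2dlevel}. By choosing the hidden constant in the exponent of $\tau$ large enough to dominate $C^t$ from above and small enough to dominate it from below (exactly as was done to bound $|\S_i|\le n/r$ in the multilevel-stabbing construction), we obtain $|\S_i| = \Theta(n/r)$. The only subtlety — and what I expect to be the sole place where some care is required — is to verify that this single choice of constant in $\tau = 2^{O(t)} r/n$ simultaneously yields the matching $\Omega(n/r)$ lower bound on $|\S_i|$; this just amounts to noting that the per-coordinate ceiling loses only a factor of at most $2$ per universe, which is already absorbed in the $2^{O(t)}$ factor. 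No new geometric ingredient is needed beyond the Cartesian product count and the invariant $\prod_j W^{(j)}_i=\tau$.
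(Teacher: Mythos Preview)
Your proposal is correct and takes essentially the same approach as the paper. The paper leaves this observation without proof, but your argument---counting via the Cartesian product, using $|\S_{i,j}|=\Theta(1/W^{(j)}_i)$, invoking the invariant $\prod_j W^{(j)}_i=\tau$ from Observation~\ref{ob:widthprod}, and absorbing the $C^t$ slack into the choice $\tau=2^{O(t)}r/n$---exactly mirrors the bound on $|\S_i|$ carried out in the proof of Theorem~\ref{thm:2dlevel} for the multilevel-stabbing construction, which is the intended justification here.
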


\begin{lemma}
For any two input curves  $\bs_1, \bs_2 \in \S_i$, the volume of the
intersection of their corresponding dual $t$-slabs is zero.
\end{lemma}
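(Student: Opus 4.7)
The plan is to exploit the Cartesian product structure of $\S_i$ together with the explicit formula for the zig-zag gadgets in each $\S_{i,j}$. Since $\S_i = \S_{i,1} \times \dots \times \S_{i,t}$, two distinct input curves $\bs_1, \bs_2 \in \S_i$ must disagree in at least one coordinate $j$: that is, the $j$th zig-zag gadget of $\bs_1$ is $\pi_{i'}$ and that of $\bs_2$ is $\pi_{i''}$ with $i' \ne i''$, both drawn from $\S_{i,j}$ (which is defined using the common parameters $\alpha = \alpha^{(j)}_i$ and $W = W^{(j)}_i$).

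Next, I would read off the two dual slabs in coordinate $j$ directly from the definition of the series. By construction, $x_1^{i'} = x_1^{i''} = \tan(\alpha^{(j)}_i)$, so by \lemref{dual-slabs} the two dual slabs have \emph{identical} slopes, i.e.\ they are parallel. Their $y$-intercept intervals are $[(i'-1)W - 1, i'W - 1]$ and $[(i''-1)W - 1, i''W - 1]$, which (since $i' \ne i''$) are intervals of length $W$ that overlap in at most a single point. Two parallel slabs in $\R^2$ whose defining $y$-intercept intervals have disjoint interiors intersect in a set of two-dimensional Lebesgue measure zero (they share at most a single line).

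Finally, I would invoke the Cartesian product structure of the dual $t$-slab: it is a product of $t$ two-dimensional slabs living in the $t$ universes, one per coordinate. Analogously to Observation~\ref{ob:areapi}, the $2t$-dimensional volume of the intersection of the dual $t$-slabs of $\bs_1$ and $\bs_2$ factors as $\prod_{k=1}^t \vol_2(D_k)$, where $D_k$ is the intersection of the two dual slabs at coordinate $k$. Since $\vol_2(D_j) = 0$ by the previous step, the entire product vanishes, yielding the claim. There is really no hard step here; the only subtlety is recognising that varying only the index $i'$ (while fixing $\alpha^{(j)}_i$ and $W^{(j)}_i$) produces parallel slabs whose defining $y$-intercept intervals are disjoint except possibly at their endpoints, and that such a measure-zero overlap is enough because we are taking a product with the other $t-1$ coordinates.
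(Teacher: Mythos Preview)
Your argument is correct and follows essentially the same approach as the paper: both use that the zig-zag gadgets in a fixed $\S_{i,j}$ yield parallel dual slabs whose $y$-intercept intervals tile $[-1,1]$ (hence overlap only at endpoints), and then invoke the product structure to conclude the $2t$-dimensional intersection has measure zero. Your write-up is in fact more explicit than the paper's, which simply asserts the tiling without spelling out the parallelism and the interval computation.
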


\begin{proof}
Note that the vertical intervals that define the zig-zag gadgets in $S_{i,j}$ tile the section of the 
vertical line at $x^{i'}_1$ which lies between the horizontal lines at $-1$ and $1$. 
It follows, by \lemref{dual-slabs}, that the zig-zag gadgets of $S_{i,j}$ tile
the set $\Q_i$ in the dual space using a series of pairwise disjoint slabs.
Therefore, the query curves of entire $\Q$ are partitioned by the $t$-slabs of
the set of query curves of $S_i$ such that the volume of the intersection of
any two dual $t$-slabs is zero.
\end{proof}

Thus, by combing the two constructions that we presented in this and the previous section,
we have proved the following theorem.

\begin{theorem}
Assume we have built a data structure for a  given set $\S$ of $n$ polygonal
curves of size $t$, and a fixed radius $\rho$, such that for any query 
polygonal chain $q$ of size $t$, we can find all the input curves within
the continuous or discrete \Frechet distance $\rho$ of $q$, in 
$Q(n) + O(k)$ time where  $k$ is the size of the output.

    Then,
$S(n) = \Omega(( \frac{n}{Q(n)} )^2) \cdot \frac{\left( \frac{\log(n/Q(n))}{\log\log n}\right)^{t-1}}{2^{O(2^t)}}$
as well as
$S(n) = \Omega\left(  \frac{n}{Q(n)}\right)^2 { \Theta\left( \frac{\log(n/Q(n))}{t^{3+o(1)}\log\log n}\right)^{t-1-o(1)}}$.

\end{theorem}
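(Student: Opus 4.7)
The plan is to reduce both \Frechet-range-searching problems to the multilevel stabbing problem from Theorem~\ref{thm:2dlevel} by showing that the hard input instance built in \secref{lb-construction} can be ``encoded'' as a set of curves under the \Frechet distance. Concretely, we start from the set $\mathcal{R}$ of $n$ $t$-slabs produced in the proof of Theorem~\ref{thm:2dlevel}, together with the set of $t$-points in $\Q = \Q_1 \times \cdots \times \Q_t$ acting as queries; the goal is to map each $t$-slab $\bs$ to a polygonal input curve $c(\bs)$ and each $t$-point $\bp$ to a query curve $q(\bp)$ so that $\bp \in \bs$ iff $c(\bs)$ is within \Frechet distance $\rho$ of $q(\bp)$, while preserving the volume estimates needed to invoke Theorem~\ref{thm:framework}.

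For the discrete case I would apply the construction of \secref{lb_df}. Using \obsref{lens}, each two-dimensional slab $\ts{j}$ is approximated by a lens formed by two circles of a common large enough radius $\rho_\varepsilon$, with total symmetric-difference area bounded by $\varepsilon$ per slab. Placing the $t$ universes $\Q_1,\dots,\Q_t$ pairwise at distance more than $10\rho_\varepsilon$ forces the discrete traversal between $q(\bp)$ and $c(\bs)$ to align the $j$-th query vertex with the two circle centers of the $j$-th lens, so $\bp$ lies in the $t$-lens iff $c(\bs)$ is within discrete \Frechet distance $\rho_\varepsilon$ of $q(\bp)$. The total error volume is $\le nt\varepsilon$, so choosing $\varepsilon$ small enough and restricting to the subregion $\Q'\subset \Q$ of ``safe'' queries yields $\vol_{2t}(\Q') \ge 1/2$; after rescaling, $\Q'$ has unit volume and every point of $\Q'$ is still covered by at least $r=Q(n)$ lenses, while the volume of any $\ell$ lenses is at most that of the corresponding $\ell$ slabs. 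Theorem~\ref{thm:framework}, applied verbatim to these lenses, then reproduces the bound of Theorem~\ref{thm:2dlevel} for input curves of complexity $2t$ and queries of complexity $t$; absorbing the factor $2$ in $t$ into the asymptotics gives the stated lower bound.

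For the continuous case I would use the construction of \secref{lb_dfc}, where queries in $\Q_i$ are left-to-right line segments from $x=-4$ to $x=4$ with endpoints having $y$-coordinates in $[-1,1]$. Dualizing the supporting line of each query edge produces a point in a constant-area parallelogram, so the dual of a $t$-vertex query is a $t$-point in a bounded region. By \lemref{dual-slabs}, a zig-zag gadget $\pi(x_1,x_2,x_3)$ is within continuous \Frechet distance $1$ of such a query segment iff the dual point of the supporting line lies in the slab with slope $x_1$ and $y$-intercept interval $[x_2,x_3]$. Thus the zig-zag concatenations in $\S_i$ play the role of the $t$-slabs in the multilevel stabbing construction, with controllable angle $\alpha^{(j)}_i$ and thickness at most $W^{(j)}_i$. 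Re-running the construction of \secref{lb-construction} inside the dual space using these angles and thicknesses yields a set of $n$ dual $t$-slabs that satisfies exactly the same covering property (each query point is covered by $r$ of them) and the same intersection-volume bound as in Lemma~\ref{lem:construction}, modulo the fact that slopes are now confined to a constant interval (which merely affects constants). Theorem~\ref{thm:framework} applied in the dual query space then gives the same asymptotic bounds as in Theorem~\ref{thm:2dlevel}, now for continuous \Frechet queries with input and query curves of complexity linear in $t$.

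The main obstacle, which is the reason the two cases are treated differently in the text, is verifying that the reductions faithfully transport the volume lower bound: in the discrete case the lens/slab symmetric difference must be controlled uniformly across all $n$ lenses without destroying the coverage requirement of condition (i) of Theorem~\ref{thm:framework}, and in the continuous case one must check that the dual $t$-slabs produced by the zig-zag gadgets really tile the dual query space with disjoint interiors when the corresponding primal curves lie in the same $\S_i$, so that condition (i) holds with the same parameter $r$. Once both facts are established the quantitative bounds follow directly from Theorem~\ref{thm:2dlevel} after rescaling, and the desired two expressions for $S(n)$ drop out without further calculation.
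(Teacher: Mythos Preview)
Your proposal is correct and follows essentially the same approach as the paper: the discrete case is handled via the lens-for-slab substitution of \secref{lb_df} with the error-area argument and rescaling to apply Theorem~\ref{thm:framework}, and the continuous case via the zig-zag gadgets of \secref{lb_dfc} whose dual slabs reproduce the multilevel stabbing construction. The obstacles you flag (uniform control of the lens error for condition~(i), and the tiling/disjointness of dual $t$-slabs within each $\S_i$) are exactly the points the paper verifies, so nothing is missing.
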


\section{A Data Structure}
\seclab{data-structure}
In this section, we will focus on building data structures to perform range
searching on polygonal curves based on the notion of \Frechet distance.  
Our data structures will ultimately use the recent results on semialgebraic range
searching, however, getting to the point where we can do that is non-trivial,
specially for the continuous \Frechet queries. 

Our data structures have two components: one component that is based
on recent results on semialgebraic range searching and the second component
that focuses on the \Frechet distance and tries to break down the data structure
problem into sub-problems that are  instances of semialgebraic range searching. 
Among these, the first component is very standard and not particularly interesting 
for the expert reader. 
The second component is where our contributions lie. 

In the next subsection, we will briefly go over the standard existing techniques
in range searching, combine them with the new results on semialgebraic range searching
and show how we can build multilevel data structures that can  handle more complex
semialgebraic input and query objects. 
In the two subsequent chapters, we will consider the discrete and continuous \Frechet
queries. 

\subsection{Multi-level Semialgebraic Range Searching}
\seclab{ml-data-structure}
We will use the following recent result from semialgebraic range searching.
Before stating the theorem, we will quickly cover some of the related
definitions.  By $\R_D[x_1, \dots, x_d]$ we denote the set of all $d$-variate
polynomials of degree at most $D$ (on variables $x_1, \dots, x_d$).  For a
polynomial $h \in \R_D[x_1, \dots, x_d]$, we denote the set of zeros of $h$
with $Z(h)$. In other words, $Z(h) = \left\{ (x_1, \dots, x_d) \in \R^d \mid
h(x_1, \dots, x_d) = 0 \right\}$.  For a given set $P$ of points, we say $Z(h)$
crosses $P$ if $Z(h)$ intersects any connected subset of $\R^d$ that contains
$P$. 
A semialgebraic set is a subset of $\R^d$ that satisfies some, $n_1$,
number of polynomial inequality of some degree, $n_2$, and using logical 
operands $\wedge, \vee$, and $\neg$. 

\begin{theorem}[The Semialgebraic Partition Theorem] Let $P$ be a set of $n$
points in $\R^d$ and let $r$ be a parameter.  There exists a constant $K$ that
only depends on $d$ such that the following hold.
 
  We can find $d$ integers $r \le r_1, \dots, r_d \le r^K$ such that we can
  partition $P$ into subsets $P=P^* \bigcup_{i=1}^d\bigcup_{j=1}^{t_i}P_{ij}$
  in which $P^*$ contains at most $r^K$ points, each $P_{ij}$ contains at most
  $n/r_i$ points, $t_i = r^{O(1)}$, and crucially, for any $d$-variate
  polynomial $h\in \R_D[x_1, \dots, x_d]$ where $D$ is another constant $Z(h)$
  intersects at most $O(r_i^{1-1/d})$ of the subsets $P_{i,1}, \dots,
  P_{i,t_i}$.

  Furthermore, each subset $P_{ij}$ is contained in a semialgebraic set
  $\Delta_{ij}$ that is defined by at most $O(r^{O(1)})$ polynomial inequalities
  of degree $O(r^{O(1)})$.  For any $d$-variate polynomial $h\in \R_D[x_1,
  \dots, x_d]$ $Z(h)$ intersects at most $O(r_i^{1-1/d})$ of the subsets
  $\Delta_{i,1}, \dots, \Delta_{i,t_i}$.

\end{theorem}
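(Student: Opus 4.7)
The plan is to prove this via iterated polynomial partitioning, following the approach of Matou\v{s}ek and Pat\'akov\'a. The key black box is the polynomial partitioning theorem of Guth--Katz: for any point set $P$ of size $n$ in $\R^d$ and parameter $r$, there exists a polynomial $f$ of degree $O(r^{1/d})$ whose complement $\R^d \setminus Z(f)$ has $O(r)$ open connected cells, each containing at most $n/r$ points of $P$. My proof will iterate this device and reduce dimension at each round.

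First I would apply the Guth--Katz partition to the full point set to obtain the ``first-level'' partition. The points lying inside the open cells form sets $P_{1,1}, \dots, P_{1,t_1}$ with $t_1 = O(r)$ and each of size at most $n/r_1$ for $r_1 = \Theta(r)$. Each such cell is a semialgebraic set described by the single polynomial $f$ together with the appropriate sign pattern of $f$ on that cell, and has description complexity polynomial in $r$. For the crossing bound, any polynomial $h \in \R_D[x_1,\dots,x_d]$ has its zero set $Z(h)$ intersect at most $O(\deg(f)^{d-1}) = O(r^{(d-1)/d}) = O(r_1^{1 - 1/d})$ open cells by a Bezout/Milnor--Thom estimate, since $Z(h) \cap Z(f)$ is a variety of degree $O(D \cdot r^{1/d})$ whose components bound the cells it can enter.

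The main obstacle is the usual one: many (potentially all) of the points of $P$ may lie on the variety $Z(f)$ itself, and are not absorbed by the first-level cells. To handle them, I recurse on $Z(f)$, which is (generically) $(d-1)$-dimensional. Using polynomial partitioning on varieties---in the style of Solymosi--Tao or the more refined Matou\v{s}ek--Pat\'akov\'a construction---one obtains a second partitioning polynomial $g$ whose zero set, restricted to $Z(f)$, carves the points on $Z(f)$ into $r_2$ pieces of bounded size while only crossing $O(r_2^{1-1/d})$ of them with any low-degree $h$. Iterating this dimension-drop $d$ times yields the $d$ levels $i = 1, \dots, d$ of the partition. Whatever points are not distributed into a cell at any level are collected into the exceptional set $P^*$; at each level the loss is polynomial in $r$, so after $d$ levels $|P^*| \leq r^K$ for a constant $K = K(d)$.

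The remaining work is bookkeeping. At each level $i$ the semialgebraic description of a cell $\Delta_{ij}$ accumulates the partitioning polynomials from levels $1,\dots,i$ together with their sign constraints, all of degree $r^{O(1)}$ and of number $r^{O(1)}$, giving the stated complexity. The crossing bound $O(r_i^{1-1/d})$ at level $i$ follows by the same Bezout argument applied to the $i$-th partitioning polynomial restricted to the appropriate $(d-i+1)$-dimensional variety---this is where the uniform exponent $1 - 1/d$ (as opposed to the naive $1 - 1/(d-i+1)$ one might fear) is recovered; matching it across levels is the most delicate step of the analysis and relies on choosing the $r_i$ to balance the degrees of the partitioning polynomials used at the successive rounds, which is precisely the content of the Matou\v{s}ek--Pat\'akov\'a refinement.
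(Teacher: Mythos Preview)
The paper does not actually prove this theorem; it is stated as a black-box ``recent result from semialgebraic range searching'' and used without proof to build the multilevel data structure in \secref{ml-data-structure}. So there is no in-paper proof to compare against.

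That said, your sketch is the right one: this is precisely the Matou\v{s}ek--Pat\'akov\'a multilevel polynomial partition, built on Guth--Katz partitioning with the dimension-drop recursion handling points trapped on the zero set. Your identification of the delicate point---getting the uniform crossing exponent $1-1/d$ at every level rather than $1-1/(d-i+1)$, by balancing the degrees $r_i$ across levels---is accurate and is exactly the technical contribution of the Matou\v{s}ek--Pat\'akov\'a paper over the earlier Agarwal--Matou\v{s}ek--Sharir version. One small caveat: your Bezout/Milnor--Thom justification for the first-level crossing bound is slightly off as written; the count of cells of $\R^d\setminus Z(f)$ meeting $Z(h)$ is bounded via the Barone--Basu (or Milnor--Thom) estimate on the number of connected components of a sign condition on a variety, not directly by the degree of $Z(h)\cap Z(f)$. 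But this is a citation-level fix, not a gap in the strategy.
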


Using the semialgebraic partition theorem, we can solve the following 
multilevel semialgebraic range searching problem. 
The input is a set $\P$ $n$ of $t$-points in $\R^d$. 
The query is tuple of $t$ semialgebraic sets $(\rr_1, \dots, \rr_t)$, where each
semialgebraic set is defined by a constant number of polynomial inequalities of constant degree, and the goal is to find
all the points $\bp=(\tp{1}, \dots, \tp{t}) \in \P$ such that
the point $\tp{i}$ is contained in $\rr_i$, for $1 \le i \le t$;
say that such a $t$-point $\bp$ is contained in the tuple $(\rr_1, \dots, \rr_t)$.

As mentioned above, using the semialgebraic partition tree, and using classical techniques,
we can prove the following theorem.
The proof is included for completeness and also because of the fact that the existing literatures do not
explicitly mention such a data structure.
\begin{theorem}\label{thm:SARS} 
  Let $\P$ be a set of $n$ $t$-points in $\R^d$.
  We can store $\P$ in a data structure using $O(n O(\log\log n)^{t-1})$  space
  that can answer the following queries. 
  Given a tuple of $t$ semialgebraic sets $(\rr_1, \dots, \rr_t)$ where
  each $\rr_i$ is a semialgebraic set determined by a constant number of 
  polynomial inequalities of constant degree, we can output 
  all the $t$-points $\bp$  such that
  $\bp$ is contained in $(\rr_1, \dots, \rr_t)$.
  The query time is $O( n^{1-1/d} \log^{O(t)}n + k)$ where $k$ is the size of
  the output.

\end{theorem}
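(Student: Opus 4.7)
The plan is to build a $t$-level partition tree by induction on $t$, using the Semialgebraic Partition Theorem as the basic building block. For the base case $t=1$, the theorem follows directly by the standard construction of a semialgebraic partition tree on the points $\tp{1}$: we recurse inside each cell $\Delta_{1,j}$ and in each recursive call, store a pointer to the canonical subset of points lying inside the cell. For the inductive step, I would build a primary partition tree on the first-coordinate projections $\tp{1}$, and at every node I would attach a secondary $(t-1)$-level data structure on the induced subset of $(t-1)$-tuples $(\tp{2},\dots,\tp{t})$.

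Concretely, at a primary node $v$ holding a canonical set $P_v$ of $n_v$ $t$-points, I would apply the Semialgebraic Partition Theorem to the first-coordinate projections with branching parameter $r$ (to be tuned later), obtaining a partition into $O(r^{O(1)})$ pieces, each contained in a semialgebraic cell $\Delta_{v,j}$ of bounded algebraic complexity. The primary tree continues recursively on each $\Delta_{v,j}$, and in addition I invoke the inductive construction to build a secondary $(t-1)$-level data structure on the $(\tp{2},\dots,\tp{t})$-projections of the points of $P_v$.

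To answer a query $(\rr_1,\dots,\rr_t)$, I would traverse the primary tree starting at the root. At each node I classify every child cell $\Delta_{v,j}$ into three cases: $\Delta_{v,j}\cap \rr_1=\emptyset$ (discard), $\Delta_{v,j}\subseteq \rr_1$ (pass the canonical subset to the secondary structure and query it with $(\rr_2,\dots,\rr_t)$), or the boundary of $\rr_1$ crosses $\Delta_{v,j}$ (recurse in the primary tree). Because $\rr_1$ is described by a constant number of polynomial inequalities of constant degree, its boundary lies in the zero set of $O(1)$ polynomials $h\in \R_D[x_1,\dots,x_d]$, and the partition theorem guarantees that at most $O(r^{1-1/d})$ cells fall into the third case.

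The main obstacle is tuning $r$ to meet the claimed space and query bounds simultaneously. Writing $S_t(n)$ for the space of the $t$-level structure, the construction yields the recurrence $S_t(n) \le h \cdot n \cdot \bigl(S_{t-1}(n)/n\bigr)$, where $h=O(\log_r n)$ is the depth of the primary tree, because each input $t$-point appears in $O(h)$ secondary structures and the total input size across all nodes at a fixed depth is $O(n)$. Together with $S_1(n)=O(n)$ this solves to $S_t(n)=O\pth{n h^{t-1}}$, and choosing $r=n^{1/\log\log n}$ forces $h=O(\log\log n)$ and gives the desired $O\pth{n (\log\log n)^{t-1}}$ space. The query time recurrence has the form $Q_t(n) = O(r^{1-1/d})\,Q_t(n/r) + O(r^{O(1)})\cdot Q_{t-1}(\cdot) + O(k)$; the first term solves to $O(n^{1-1/d}\poly\log n)$ and each additional level multiplies the polylogarithmic overhead by a $\log^{O(1)} n$ factor, so after $t$ levels we obtain $O\pth{n^{1-1/d}\log^{O(t)}n+k}$. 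The delicate bookkeeping is in verifying that the $r^{O(1)}$ overhead per level of the secondary recursion does not break the space bound, which forces the branching factor to be chosen sub-polynomial rather than polynomial in $n$.
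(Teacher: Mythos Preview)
Your proposal is correct and follows essentially the same approach as the paper: build a primary partition tree on the first coordinates using the Semialgebraic Partition Theorem, recurse within each cell, and attach a secondary $(t-1)$-level structure at every node; the space recurrence then solves to $O(n(\log\log n)^{t-1})$ because the primary recursion has depth $O(\log\log n)$, and the query recurrence picks up $\log^{O(1)}n$ per level.

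The only noteworthy discrepancy is your choice of branching parameter. You take $r=n^{1/\log\log n}$ and state that ``the branching factor \dots\ must be chosen sub-polynomial rather than polynomial in $n$.'' The paper instead sets $r=n^{\varepsilon}$ for a small constant $\varepsilon$, which is polynomial; since one step reduces $n$ to $n^{1-\varepsilon}$, the primary recursion still terminates in $O(\log\log n)$ rounds, and the per-node overhead $r^{O(1)}=n^{O(\varepsilon)}$ is absorbed into $n^{1-1/d}$ once $\varepsilon$ is small enough. So your claim that sub-polynomial $r$ is forced is too strong, but your particular choice still works and yields the same bounds.
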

The rest of this section is devoted to the proof of the above theorem. 
We start with the description of the data structure.

    \paragraph{The Data Structure.} 
    We will describe a data structure $\D(\P)$ that is a multilevel data structure based
    on the Semialgebraic Partition Theorem.  
    Let $P_1$ be the set of first points of all the $t$-points in $\P$. 
    We use the Semialgebraic Partition Theorem with $P$
    set to $P_1$ and with parameter $r$ set to $n^\varepsilon$ for small enough
    constant $\varepsilon$ to be determined later.  
    This partitions $P_1$ into
    subsets $P_1 = P^* \bigcup_{i=1}^d\bigcup_{j=1}^{t_i}P_{ij}$.  
    We call these subsets ``canonical sets''.  
    Let $\P_{ij}$ be the canonical set that contains
    $t$-points whose first point is in the set $P_{ij}$.  Let $\P'_{ij}$ be the set of 
    $(t-1)$-points obtained by removing the first point of every $t$-point in $\P_{ij}$.
    Note that if $t=1$, then $\P'_{ij}$ is an empty set.  
    For every subset $\P_{ij}$ we do two different kinds
    of recursion.  Our first recursion is to build $\D(\P_{ij})$.  Our second
    recursion is to build $\D(\P'_{ij})$.  Our recursion stops as soon as $\P$
    contains a constant number of points.

\paragraph{Space Analysis.} We first analyze the space complexity of the data
structure.  Let $\spc_k(n)$ be the space complexity of the data structure if it
is run on an input of $n$ $k$-points. 
Our goal is to estimate $\spc_{t}(n)$.  
We have \[ \spc_{k}(n) = |P^*|
+  \sum_{i=1}^d\sum_{j=1}^{t}\spc_{k}(|S_{ij}|) +
\sum_{i=1}^d\sum_{j=1}^{t}\spc_{k-1}(|S'_{ij}|).  \]

It is easy to see that $\spc_{1}(n) = O(n)$  since the second recursion step do
not happen if $t = 1$ and thus each point is only stored in one sub-problem.
We guess that $\spc_k(n)$ solves to $O(n O(\log\log n)^{k-1})$ and try to prove
this with induction.  Thus, we can re-write the recursion as \[ \spc_{k}(n) =
|P^*| +  \sum_{i=1}^d\sum_{j=1}^{t}\spc_{k}(|S_{ij}|) + O(n O(\log\log
n)^{k-2}).  \]

Note that by the choice of $r$, each set $\P_{ij}$ has size at most $n/r_i \le
n/r = n^{1-\varepsilon}$.  Thus, there are $O(\log\log n)$ levels of recursion.
Observe that at each recursion level we are dealing with disjoint set of
subproblems.  This means that $S_k(n) =  O(n O(\log\log n)^{k-1})$.

\paragraph{The Query Algorithm.} 
Consider a query tuple $\br$ of $t$ semialgebraic sets $\rr_1, \dots, \rr_t$.
Let $\P_\br$ be the set of $t$-points in $\P$ that satisfy the query, i.e., $\P_\br$ contains
all the points $\bp=(\tp{1}, \dots, \tp{t}) \in \P$ such that
the point $\tp{i}$ is contained in $\rr_i$, for $1 \le i \le t$. 
Let $\P_{\br,1}$ be the set of $t$-points in $\P$ such that
only $\tp{1}$ is contained in $\rr_1$.
Clearly, $\P_\br \subset \P_{\br,1}$.
Our goal is to find the set $\P_{\br,1}$ as the disjoint union of a number
of canonical sets, that is, 
to find a set $C_{\rr_1}$ of canonical sets such that $\P_{\br,1} =
\bigcup_{c\in C_{\rr_1}}c$.  
We use the data structure $\D(\P)$.
Remember that in the data structure
$\D(\P)$ we have partitioned $P_1$ (the set of first points of $\P$) into subsets $P_1 = P^*
\bigcup_{i=1}^d\bigcup_{j=1}^{t_i}P_{ij}$.  
We explicitly process the $t$-points by looking at all the points in $P^*$ 
(that is, if we are solving a range reporting variant, we output them all, or if we are
solving a semigroup variant, we add up all the weights corresponding to $t$-points of $P^*$).
Next, we process the subsets $P_{i1}, P_{i2}, \dots, P_{it_i}$ starting from
$i=1$.  
By the Semialgebraic Partition Theorem, each $P_{ij}$ is contained in a
semialgebraic set $\Delta_{ij}$ and that each polynomial defining  the set $\rr_1$
intersects only $O(r_i^{1-1/d})$ of the sets $\Delta_{i1}, \Delta_{i2}, \dots,
\Delta_{it_i}$.  
Thus, the polynomial defining the set $\rr_1$ intersect at most $O(r_i^{1-1/d} )$ sets. 
We go through all the sets $\Delta_{i1}, \Delta_{i2}, \dots,
\Delta_{it_i}$ and for each $\Delta_{ij}$ determine (case a) if $\Delta_{ij}$
is completely outside $\rr_1$ (in which case we ignore it), or (case b)
$\Delta_{ij}$ is completely inside $\rr_1$ (in which case we add $P_{ij}$ as a canonical set
to $C_{\delta'}$) or (case c) if $\Delta_{ij}$ intersects
the boundary of $\rr_1$ and in this case we recurse on the data structure
$\D(\P_{ij})$.  Since each $\Delta_{ij}$ is determined by $r^{O(1)}$ polynomials
of degree $r^{O(1)}$, and $t_i = r^{O(1)}$, these tests will take $r^{O(1)}$
time in total.  
By the end of the recursion, we will have the desired set $C_{\rr_1}$. 

We have following recursion to describe the number of canonical sets
$f(n)$ placed in the set $C_{\rr_1}$.  
\[ 
    f(n) = r^{O(1)} + \sum_{i=1}^d O(r_i^{1-1/d} ) f(n/r_i).  
\] 
Note that we have $r=n^\varepsilon$ and that $r_i \ge r$.  
This is a standard recursion in the range searching
area and it is not too difficult to see that is solves to $f(n) =
O(n^{1-1/d}\log^{O(1)}n)$.

Having computed an implicit representation of $C_{\rr_1}$, we do the following.  
Remember that for every canonical set $c \in C_{\rr_1}$, we have
build another data structure $\D(c')$ where $c'$ is the set of $(t-1)$-points obtained
by removing the first point of the $t$-points in $c$. 
Any $t$-point $\bp$ represented by the canonical sets in $C_{\rr_1}$ has the property
that the point $\tp{1}$ is contained in $\rr_1$.
Thus, it remains to narrow the search such that $\tp{i}$ is also contained in $\rr_i$ for
$2 \le i \le t$. 
However, this is exactly equivalent to searching for points $\bp'=(\tp{2}, \dots, \tp{t})$
using the $t-1$ query tuples $\rr_2, \dots, \rr_t$.
Thus, we can simply recurse on each $c'$ (using $\D(c')$) for every $c \in C_{\rr_1}$.
Let $f_k(n)$ be the total number of canonical sets obtained after having recursed on a set containing $n$ $k$-points. We have the following recursion.
\[ f_{k+1}(n) \le r^{O(1)} +\sum_{i=1}^d \sum_{j=1}^{t_i} f_{k}(|P_{ij}|)  +
\sum_{i=1}^d O(r_i^{1-1/d}) f_{k+1}(n/r_i).  \] We guess that $f_k(n) =
O(n^{1-1/d}\log^{Ck}n)$ for a constant $C$.  
Since the sets $P_{ij}$ form a
partition of the set $P$ and they contain at most  $n$ $t$-points in total, the
recursion simplifies to \[ f_{k+1}(n) \le r^{O(1)} + O(n^{1-1/d}\log^{Ck}n)  +
\sum_{i=1}^d O(r_i^{1-1/d}) f_{k+1}(n/r_i).  \] 
Using the standard analysis
from the range searching literature, it is not too difficult to show that by
picking $C$ large enough we get $f_{k+1} = O(n^{1-1/d}\log^{C(k+1)}n)$.

\subsection{Discrete Frechet Queries} \seclab{dfq} 
Let $S$ be a set of $n$
polygonal chains in $\R^d$ where each chain $s \in S$ contains at most $t_s$ vertices.
For simplicity, we can assume every chain contains exactly $t_s$ vertices (by adding
extra dummy vertices). 
Consider a query polygonal chain $q$ of size $t_q$ and a chain $s \in S$.  
Imagine we would like to determine if the discrete Frechet distance between $q$ and $s$ is at most $\rho$,
for some parameter $\rho$. This can be done using the so-called
\emph{free-space-matrix}, which can be described as follows.  Let $M_{q,s}$
be the $0-1$-matrix with $t_q$ rows and $t_s$ columns, where the entry
$m(i,j)$ at row $i$ and column $j$ of $M_{q,s}$ is 0 if the distance
between the $i$-th vertex of $q$ and the $j$-th vertex of $s$ is greater
than $\rho$ and 1 otherwise.  Testing if the \Frechet distance between $s$ and
$q$ is at most $\rho$ now amounts to testing if there exists an $xy$-monotone
path connecting $m(1,1)$ to $m(t_q,t_s)$ that passes through the 1 entries.
We treat each input chain $s$ as a $t_s$-point and build the data structure from the
previous subsection.

We now describe the query procedure. 
Let $q$ be the query chain of $t_q$ vertices.  
Consider spheres of radius $\rho$ centered on the vertices of $q$.  
Let $\A$ be the arrangement created by the spheres.  
It is easy to see that the complexity of $\A$ is
$O(t_q^{d+1})$ by just lifting them to halfspaces in $\R^{d+1}$.  
Now, consider a chain $s \in S$ and the corresponding free-space matrix $M=M_{q,s}$.  
Every column of $M$ corresponds to a region in the arrangement $\A$.  
In other words, there are at most $t_q^{O(d)}$ 0-1 vectors could possibly appear as a column
in matrix $M$.  This in turn implies that the total number of matrices that can
be the free-space matrix of some chain in $S$ is upper bounded by
$t_q^{O(n_i)}$.  Let $\M$ be the set of these matrices.  We can compute $\M$
easily in  $t_q^{O(n_i)}$ time. 

During the query time, we will go through the following stages.  First, we
generate the set of matrices in $\M$.  For every matrix $M \in \M$, we will
only output chains $s$ with $M_{q,s} = M$.  Clearly, this will output all the
valid chains since $\M$ contains all the possible valid free-space diagrams and
it will not produce any duplicates since $M_{q,s}$ is unique.  Note that this
blows up the query time by a $t_q^{O(t_s)}$ factor.  Thus, in the second stage,
we have a fixed matrix $M \in \M$ and we would like to output the set of chains
$s$ such that $M_{q,s} = M$.  To do that, we ``triangulate'' $\A$: 
we lift the arrangement of spheres into $d+1$ dimension and
triangulate the resulting arrangement of halfspaces, and then project back to
$\R^d$.
This corresponds to decomposing $\A$  into $O(t_q^{d+1})$ cells where each cell is
a semialgebraic set determined by a constant number of polynomials of degree two.
Let $\A'$ be the resulting triangulation.  
Consider a chain $s$ such that $M_{q,s} = M$ and
consider the $i$-th column $v_i$ of $M$.  The bit vector $v_i$ encodes exactly which
points of $q$ are within distance $r$ of the $i$-th vertex of $s$.  In other
words, the bit vector $v_i$ identifies a unique cell $\delta_i$ in the arrangement
$\A$ such that the $i$-th vertex of $s$ must be contained in that cell.  
We are now almost done. 
Ideally, we would like to issue one query $(\delta_1, \dots, \delta_{t_s})$ to find exactly
what we want. 
However, the semialgebraic set $\delta_i$ might not be made using a constant number of polynomial inequalities. 
So we simply switch to the triangulated arrangement $\A'$.
Let $\delta_{i,1}, \dots, \delta_{i,x_i}$ be the set of cells formed in $\A'$ from triangulating $\delta_i$
where each cell is a semialgebraic cell formed by a constant number of polynomial inequalities of
constant degree. 
We now form $\prod_{i=1}^{t_s}x_i$ queries by creating the Cartesian product of these cells, that is,
$\left\{\delta_{1,1}, \dots, \delta_{1,x_1}  \right\}\times
\left\{\delta_{i,2}, \dots, \delta_{2,x_2}  \right\}\times
\left\{\delta_{i,t_s}, \dots, \delta_{i,x_{t_s}}  \right\}$.

Putting all these together, we can bound the total query time with \[
O(n^{1-1/d}\log^{O(t_s)}n\cdot  t_q^{O(t_s)}\cdot t_q^{O(d)} ) =
O(n^{1-1/d}\log^{O(t_s)}n\cdot  t_q^{O(t_s)}) \] assuming $t_q = O(\log^{O(1)}
n)$.

\begin{theorem}
  Given a set $S$ of $n$ polygonal curves in $\R^d$ where each curve contains 
  $t_s$ vertices, we can store $S$ in a data structure of 
    $\O\pth{n (\log\log n)^{t_s -1}}$ size such that given a query
    polygonal chain of size $t_q$ and a parameter $\rho$, it can output
    all the input curves within discrete \Frechet distance of $\rho$ to the query
    in  $\O\pth{n^{1-1/d} \cdot \log^{O(t_s)} n \cdot t_q^{\O(d)}}$, 
    assuming $t_q=\log^{O(1)} n$.  
\end{theorem}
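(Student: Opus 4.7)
The plan is to instantiate the multilevel semialgebraic data structure of Theorem~\ref{thm:SARS} on $S$ viewed as a set of $n$ $t_s$-points in $\R^d$, where the $i$-th component of the $t_s$-point associated with a chain $s$ is the $i$-th vertex of $s$ (padding with dummy vertices if necessary). The space bound $\O\pth{n(\log\log n)^{t_s-1}}$ is then immediate from Theorem~\ref{thm:SARS}, so all the remaining work is to reduce a query with chain $q$ and radius $\rho$ to a controlled number of semialgebraic queries of constant description complexity.

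First I would exploit the discrete free-space matrix $M_{q,s}$: the pair $(q,s)$ is within discrete \Frechet distance $\rho$ iff $M_{q,s}$ admits an $xy$-monotone path of $1$-entries from $(1,1)$ to $(t_q,t_s)$. The $j$-th column of $M_{q,s}$ is fully determined by which cell of the arrangement $\A$ of the $t_q$ balls of radius $\rho$ centered at the vertices of $q$ contains the $j$-th vertex of $s$. Lifting the balls to halfspaces in $\R^{d+1}$ bounds $|\A|=O(t_q^{d+1})$, so at most $t_q^{O(d)}$ distinct column vectors can arise, and hence the set $\M$ of matrices that can appear as $M_{q,s}$ for some $s\in S$ has size $t_q^{O(d\,t_s)}$. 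I would enumerate $\M$ and discard any matrix that lacks a monotone $1$-path, all in time $t_q^{O(d\,t_s)}$.

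Next, for each remaining $M\in\M$ I would report exactly the chains $s$ with $M_{q,s}=M$. Because the map $s\mapsto M_{q,s}$ is a well-defined function, separating the work by $M$ guarantees each input chain is output at most once, so no duplicates are produced. Column $j$ of $M$ designates one cell $\delta_j$ of $\A$ in which the $j$-th vertex of $s$ must lie. Since $\delta_j$ may not have bounded description complexity, I refine $\A$ by triangulating the lifted halfspace arrangement in $\R^{d+1}$ and projecting back; each $\delta_j$ is then split into $x_j$ pieces $\delta_{j,1},\dots,\delta_{j,x_j}$, each a semialgebraic set of constant description complexity. I would then issue one query of Theorem~\ref{thm:SARS} for each element of the Cartesian product $\prod_{j=1}^{t_s}\{\delta_{j,1},\dots,\delta_{j,x_j}\}$, amounting to at most $t_q^{O(d)}$ calls per matrix $M$, and take the union of the outputs.

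Combining the factors, the total query time is bounded by the product of (i) $|\M|=t_q^{O(d\,t_s)}$ candidate matrices, (ii) $t_q^{O(d)}$ refined queries per matrix, and (iii) the cost $O(n^{1-1/d}\log^{O(t_s)} n + k)$ per query from Theorem~\ref{thm:SARS}. Under the assumption $t_q=\log^{O(1)}n$, the factor $t_q^{O(d\,t_s)}$ is absorbed into $\log^{O(t_s)}n$, yielding the stated bound. The main obstacle I anticipate is not the space or time accounting but the duplicate-freeness of the output: it is crucial to group input chains by the unique full matrix $M_{q,s}$ rather than by individual columns, since the same chain would otherwise be reported by several overlapping queries. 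A secondary subtlety is performing the cell refinement in the lifted $\R^{d+1}$ arrangement so that every refined cell meets the constant-degree, constant-inequality hypothesis required by Theorem~\ref{thm:SARS}.
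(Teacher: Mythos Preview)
Your proposal is correct and follows essentially the same approach as the paper: represent each input chain as a $t_s$-point, invoke Theorem~\ref{thm:SARS} for the space bound, enumerate the feasible free-space matrices via the ball arrangement $\A$, triangulate $\A$ by lifting to $\R^{d+1}$, and issue one semialgebraic query per element of the Cartesian product of refined cells, with duplicate-freeness guaranteed by the uniqueness of $M_{q,s}$. One minor imprecision is your claim (ii) of ``$t_q^{O(d)}$ refined queries per matrix''---the Cartesian product $\prod_j x_j$ can itself be as large as $t_q^{O(d\,t_s)}$---but since you correctly absorb the total factor $t_q^{O(d\,t_s)}$ into $\log^{O(t_s)} n$ under the hypothesis $t_q=\log^{O(1)} n$, the final bound is unaffected (the paper's own accounting is similarly loose here).
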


\subsection{Continuous Frechet Queries} \seclab{cfq} 

Let $S$ be a set of polygonal chains as before. We now consider the
range-searching problem for the continuous \Frechet distance.  Consider a query
polygonal chain $q$ of size $t_q$ and a chain $s \in S$.  Imagine we would like
to determine if the \Frechet distance between $q$ and $s$ is at most $\rho$, for
some parameter $\rho$.  This can be done using the so-called \emph{free-space
diagram} which is a continuous version of the free-space matrix.

\paragraph{Free-space diagram} We can interpret the polygonal chains $s$ and
$q$ as continous curves $s: [0,1] \rightarrow \Re^2$ and $q: [0,1] \rightarrow
\Re^2$ by linearly interpolating consecutive vertices of the chain. Consider
the parametric space $[0,1] \times [0,1]$ of the two curves. The vertices of
the curves partition this parametric space into rectangular cells, such that
each cell corresponds to the parametric space of two edges, one from each
curve. The \emph{free-space} is the subset of points $(x,y) \in [0,1]\times
[0,1]$ such that $\|s(x)-p(y)\| \leq \rho$.  The free-space within each cell can
be described as an ellipse clipped to the cell and is therefore convex.  Now,
testing if the \Frechet distance of the two curves is smaller or equal to $r$
amounts to testing if there exists a $(x,y)$-monotone path that starts at
$(0,0)$ and ends at $(1,1)$ and stays inside the free-space.  We call such a
path \emph{feasible}.

\subsubsection{High-level Predicates} We would like to encode reachability in
the free-space diagram combinatorially using a small set of predicates. This
will help us to build an efficient data structure for the range-reporting
problem.  We denote the vertices of $s$ with $s_1,\dots,s_{t_s}$ and the
vertices of $q$ with $q_1,\dots,q_{t_q}$.

\begin{enumerate} \renewcommand{\theenumi}{(\Pre{\arabic{enumi}})}
\renewcommand{\labelenumi}{\theenumi} 
\item \emph{(Endpoints (start))} This
predicate returns true if and only if $\|s_1-q_1\| \leq \rho$ \label{ep}

\item \emph{(Endpoints (end))} This predicate returns true if and only if
$\|s_{t_s}-q_{t_q}\| \leq \rho$ \label{ep2}

\item \emph{(Vertex-edge (horizontal))} Given an edge of $s$, $\overline{s_j
s_{j+1}}$, and a vertex $q_i$ of $q$, this predicate returns true iff there
exist a point $p \in \overline{s_j s_{j+1}}$, such that $\|p-q_i\| \leq \rho$.
\label{hvep}

\item \emph{(Vertex-edge (vertical))} Given an edge of $q$, $\overline{q_i
q_{i+1}}$, and a vertex $s_j$ of $s$, this predicate returns true iff there
exist a point $p \in \overline{q_i q_{i+1}}$, such that $\|p-s_j\| \leq \rho$.
\label{vvep}

\item \emph{(Monotonicity (horizontal))} Given two vertices of $s$, $s_j$ and
$s_k$ with $j<k$ and an edge of $q$, $\overline{q_i q_{i+1}}$, this predicate
returns true if there exist two points $p_1$ and $p_2$ on the {\em line\/} supporting
the directed edge, such that $p_1$ appears before $p_2$ on this line, and such
that $\|p_1 - s_j\| \leq \rho$ and $\|p_2-s_k\| \leq \rho$.  \label{hmp}

\item \emph{(Monotonicity (vertical))} Given two vertices of $q$, $q_i$ and
$q_k$ with $i<k$ and an directed edge of $s$, $\overline{s_j s_{j+1}}$, this
predicate returns true if there exist two points $p_1$ and $p_2$ on the {\em line\/}
supporting the directed edge, such that $p_1$ appears before $p_2$ on this
line, and such that $\|p_1 - q_i\| \leq \rho$ and $\|p_2-q_k\| \leq \rho$.
\label{vmp} \end{enumerate}
%
%

\begin{lemma} \lemlab{hlp-correct} Given the truth values of all predicates
\ref*{ep}-\ref*{vmp} of two curves $s$ and $q$ for a fixed value of~$\rho$, one can
determine if $\distFr{s}{q} \leq \rho$.  \end{lemma}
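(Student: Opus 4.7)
The plan is to reduce the statement to reachability in the free-space diagram $D_\rho(s,q)$ and then to show that an Alt--Godau-style reachability propagation through $D_\rho(s,q)$ can be carried out as a Boolean decision procedure whose only inputs are the truth values of $\Pre{1}, \dots, \Pre{6}$.

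First, I would recall the standard equivalence: $\distFr{s}{q} \le \rho$ if and only if there is an $(x,y)$-monotone path inside the free-space from $(0,0)$ to $(1,1)$. The diagram is partitioned by the vertices of $s$ and $q$ into $(t_s-1)(t_q-1)$ axis-aligned rectangular cells, and within each cell the free-space is convex (the intersection of an ellipse with the cell). On every horizontal or vertical boundary segment between two adjacent cells the free-space is a single (possibly empty) interval, and by convexity within each cell the set of reachable points on such a boundary is a suffix of that interval. Hence reachability of $(1,1)$ is determined by the combinatorial structure of these boundary intervals and their pairwise orderings.

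Second, I would match each ingredient of the propagation to the listed predicates. Predicate $\Pre{1}$ says exactly that $(0,0)$ lies in the free-space and $\Pre{2}$ that $(1,1)$ does. Predicate $\Pre{3}$ applied to $(q_i, \overline{s_js_{j+1}})$ expresses non-emptiness of the free-space on the horizontal cell boundary at $q_i$ crossing $\overline{s_js_{j+1}}$; $\Pre{4}$ does the same for vertical boundaries. For transitions spanning several cells in a single row of $D_\rho(s,q)$, I would invoke the ordered line-stabbing observation of Guibas \etal used in the lower-bound construction: a monotone sub-path along row $i$ from a vicinity of $s_j$ to a vicinity of $s_k$ (with $j<k$) exists iff the line supporting $\overline{q_iq_{i+1}}$ admits two points, stabbed in order, lying in the disks of radius $\rho$ around $s_j$ and $s_k$---this is exactly $\Pre{5}$. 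The analogous statement for columns gives $\Pre{6}$.

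Finally, I would assemble these Boolean conditions into an explicit decision procedure that builds a reachability table on the $(t_q-1)\times(t_s-1)$ grid of cells in row-major order, using at each step only the truth values of $\Pre{1}$--$\Pre{6}$. The main obstacle I anticipate is that $\Pre{5}$ and $\Pre{6}$ are phrased on supporting lines rather than on the edges themselves; this is a priori weaker than ``reachable through the edge'', and I would need to verify that once combined with the vertex-edge predicates $\Pre{3}$ and $\Pre{4}$, which anchor the relevant witnesses to the edge segments, the ordered line-stabbing condition becomes equivalent to the true inter-cell monotone reachability inside $D_\rho(s,q)$. Once this equivalence is in hand, the remainder is a routine induction over cells in the grid.
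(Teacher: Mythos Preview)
Your proposal is correct and follows essentially the same approach as the paper. In particular, the obstacle you single out---that $\Pre{5}$ and $\Pre{6}$ are phrased on supporting lines rather than edges, and must be upgraded to edge-level statements via the vertex-edge predicates $\Pre{3}$ and $\Pre{4}$---is exactly the content of the paper's helper lemma (\lemref{hmp-realize}), which carries out the two-case argument (disjoint intersection intervals versus a common point in the lens) that you would need to fill in.
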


Before we prove \lemref{hlp-correct}, we introduce the notion of a valid
sequence of cells in the free-space diagram and the set of predicates that are
induced by such a sequence.  In the following, we denote with $C_{i,j}$ the
cell of the free-space diagram that corresponds to the edges
$\overline{q_{i}q_{i+1}}$ and $\overline{s_{j}s_{j+1}}$.
We call a sequence of
cells $\C=((i_1,j_1),(i_2,j_2),\dots,(i_k,j_k))$ \emph{valid} if $i_1=1, j_1=1,
i_k=t_q-1, j_k=t_s-1$ and if for any two consecutive cells $(i_m,j_m)$ and
$(i_{m+1}, j_{m+1})$ it holds that either $i_{m}=i_{m+1}$ and $j_{m+1}=j_{m}+1$
or $j_{m}=j_{m+1}$ and $i_{m+1}=i_{m}+1$. Note that a sequence is valid if
there exists a feasible path which passes through the sequence of cells in the
right order. At the same time, there exists a valid sequence of cells for any
such path.  Any valid sequence of cells $\C$ induces a set of predicates $\P$
as follows.  \begin{enumerate}[(i)] \item $\ref*{ep} \in \P$ and $\ref*{ep2} \in
\P$ \item $\ref*{hvep}_{(i,j)} \in \P$ iff $(i,j-1),(i,j) \in \C$ \item
$\ref*{vvep}_{(i,j)} \in \P$ iff $(i-1,j),(i,j) \in \C$ \item
$\ref*{hmp}_{(i,j,k)} \in \P$ iff $(i,j-1),(i,k) \in \C$ and $j<k$ \item
    $\ref*{vmp}_{(i,j,k)} \in \P$ iff $(i-1,j),(k,j) \in \C$ and $i<k$
\end{enumerate}

We say that a valid sequence of cells is \emph{feasible} if the conjunction of
its induced predicates is true.  We claim that any feasible path through the
free-space induces a feasible sequence of cells and vice versa.  Before we
prove this claim, we prove the following helper lemma. Note the subtle
difference to the definition of the monotonicity predicate.

\begin{lemma}\lemlab{hmp-realize} Let $\C$ be a feasible sequence of cells and
    consider a monotonicity predicate $P$ of the set of predicates $\P$ induced by
    $\C$.  Let $a_1$ and $a_2$ be the vertices and let $e$ be the directed edge
    associated with $P$.  There exist two points $p_1$ and $p_2$ on $e$, such that
    $p_1$ appears before $p_2$ on $e$, and such that $\|p_1 - a_1\| \leq \rho$ and
    $\|p_2-a_2\| \leq \rho$.  \end{lemma}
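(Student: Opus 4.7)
The plan is to show that when $\C$ is feasible, the two points guaranteed on the supporting line by the monotonicity predicate can be clipped onto the edge itself while preserving both the ordering and the distance constraints. The clipping will be powered by the companion vertex-edge predicates whose presence in $\P$ follows from the validity of $\C$, together with the elementary fact that the intersection of a ball with a line is connected.

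I would treat the horizontal case $P = \ref*{hmp}_{(i,j,k)}$; the vertical case is symmetric with the two curves swapped. Unpacking the notation, $a_1 = s_j$, $a_2 = s_k$, and $e = \overline{q_i q_{i+1}}$, and $P \in \P$ forces $(i, j-1), (i, k) \in \C$. Since a valid sequence increments exactly one index at each step and is non-decreasing in both, every intermediate cell $(i, m)$ for $j-1 \le m \le k$ must also belong to $\C$; each consecutive transition $(i,m-1)\to(i,m)$ in this row then induces a vertex-edge predicate in $\P$, and feasibility of $\C$ means all of them hold. In particular, the edge $e$ contains a point within distance $\rho$ of $s_j$ and a point within distance $\rho$ of $s_k$.

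Next I would set $\ell$ to be the line supporting $e$ and let $a, b$ be the endpoints of $e$ in order along $\ell$. The key geometric fact is that for any point $s$, the set $B(s, \rho) \cap \ell$ is a (possibly empty) connected segment of $\ell$; call these $L_j$ and $L_k$ when $s = s_j$ and $s = s_k$. The monotonicity predicate gives $p_1 \in L_j$ and $p_2 \in L_k$ on $\ell$ with $p_1$ before $p_2$, and the vertex-edge predicates tell us that both $L_j$ and $L_k$ meet $e$. I would then clip each $p_i$ onto $e$ in the natural way: if $p_i$ lies before $a$, replace it by $a$; if it lies past $b$, replace it by $b$; otherwise leave it unchanged. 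Connectedness of $L_j$ (resp.\ $L_k$) forces the replacement endpoint to lie in $L_j$ (resp.\ $L_k$), so the two distance bounds are maintained. A short case check over the relative positions of $(p_1, p_2)$ with respect to $a$ and $b$ confirms that the ordering $p_1 \le p_2$ along $e$ survives the clipping.

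The main obstacle is really bookkeeping, namely setting up the correct correspondence between the vertex-edge predicates induced by the consecutive transitions along the row of $\C$ and the endpoint vertices $s_j, s_k$ of the monotonicity range; once that correspondence is in place, the remaining argument is one-dimensional geometry on $\ell$. The only mildly delicate sub-case is when $p_1$ and $p_2$ both lie on the same side of $e$ and thereby both clip to the same endpoint, which is accepted under the non-strict reading of ``appears before'' used in the definition of the monotonicity predicate.
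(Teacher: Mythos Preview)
Your proof is correct and shares its setup with the paper's: both first observe that validity of $\C$ forces all intermediate cells $(i,m)$ along the row to lie in $\C$, so in particular the two vertex--edge predicates for $s_j$ and $s_k$ against the edge $e=\overline{q_iq_{i+1}}$ hold, giving $L_j\cap e\neq\emptyset$ and $L_k\cap e\neq\emptyset$.

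The difference is in how the line witnesses are pulled onto the edge. The paper argues by a two-case split: either $L_j$ and $L_k$ are disjoint on $\ell$, in which case any pair of edge points (one from each) inherits the correct order from the line witnesses; or $L_j\cap L_k\neq\emptyset$, in which case connectedness of $e$ and the two vertex--edge predicates force $e$ to hit the common intersection, and one takes $p_1=p_2$ there. Your clipping argument replaces this dichotomy with a single monotone map $x\mapsto\max(a,\min(b,x))$, using connectedness of $L_j,L_k$ together with $L_j\cap e,L_k\cap e\neq\emptyset$ to keep the clipped points inside their respective intervals, and monotonicity of the map to preserve order. This is a bit more uniform and sidesteps the case analysis; the paper's version is perhaps more geometrically transparent about why the overlapping case is harmless. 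Both rely on the same non-strict reading of ``appears before,'' which the paper also adopts explicitly when setting $p_1=p_2$.
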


\begin{proof} Assume $P$ is a horizontal monotonicity predicate
    $\ref*{hmp}_{(i,j,k)}$ in $\P$ (the arguments for vertical monotonicity
    predicates are similar).  The predicate $P$ was added because of cells
    $C_{(i,j-1)}$ and $C_{(i,k)}$ being present in $\C$. Since $\C$ is valid, it
    must be that $C_{(i,j)}$ and $C_{(i,k-1)}$ are also present (possibly with
    $j=k-1$). Therefore, $\P$ also contains horizontal vertex-edge predicates
    $\ref*{hvep}_{(i,j)}$ and $\ref*{hvep}_{(i,k)}$.  If all three predicates are
    true, then we want to follow that there exist points $p_1$ and $p_2$ on the
    \emph{edge} (not just the supporting line) $\overline{q_i q_{i+1}}$ such that
    $\|p_1 - s_j\| \leq \rho$ and $\|p_2-s_k\| \leq \rho$.

    We consider two cases, based on whether the common intersection of the line
    $\ell$ supporting the edge $\overline{q_i q_{i+1}}$ and the two disks of
    radius $\rho$ centered at $s_j$ and $s_k$ is empty, or in other words if 
    $\disk{s_j}{\rho}\cap \disk{s_k}{\rho} \cap \overleftrightarrow{q_iq_{i+1}} = \emptyset$. 
    If the intersection is
    empty, then the line intersects the disks in two disjoint intervals.  In
    this case any pair of points $p_1 \in \overline{q_i q_{i+1}} \cap \disk{s_j}{r}$
    and $p_2 \in \overline{q_i q_{i+1}} \cap \disk{s_k}{r}$ appears on $\ell$ in the
    correct order. Two such points $p_1$ and $p_2$ must exist since both vertex-edge predicates are true. 

        If the common intersection is not empty, then we argue that the edge
        $\overline{q_i q_{i+1}}$ must intersect this common intersection.
        Indeed, since both vertex-edge predicates are true, the edge intersects
        both disks.  Since the edge is a connected set, it must also intersect
        the common intersection which lies in between the intersections of the
        line with the two disks.  Now, if the edge intersects the common
        intersection, then we can choose $p_1=p_2$ from this common
        intersection.  We can make a similar argument for each vertical
        monotonicity predicate.  

        This implies that the points $p_1$ and $p_2$ of the monotonicity predicates are
realizable on the corresponding edges as claimed.  \end{proof}

\begin{proof}[Proof of \lemref{hlp-correct}] We claim that any feasible path
through the free-space induces a feasible sequence of cells and vice versa.
Assume there exists a feasible path $\pi$ that passes thorough the sequence of
cells $\C$.  Consider the endpoint predicate $\ref*{ep}$ (and respectively
\ref*{ep2}). The existence of $\pi$ implies that $(0,0)$ (and respectively $(1,1)$)
lies inside the free-space, which is equivalent to this predicate being true.
Now, consider a horizontal vertex-edge predicate $\ref*{hvep}_{(i,j)}$ for
consecutive pair of cells $C_{(i,j-1)}$, $C_{(i,j)}$ in the sequence $\C$. The
path $\pi$ is a feasible path that passes through the cell boundary between
these two cells.  This implies that the there exists a point on the edge
$\overline{q_i q_{i+1}}$ which lies within distance $\rho$ to the vertex $s_j$.
This implies that the predicate is true.  A similar argument can be made for
each vertical vertex-edge predicate.

Next, we will discuss the monotonicity predicates.  Consider a subsequence of
cells of $\C$ that lies in a fixed row $i$ and consider the set of predicates
$\P' \subseteq \P$ that consists of horizontal monotonicity predicates
$\ref*{hmp}_{(i,j,k)}$ for fixed $i$.  Let $p_j,p_{j+1},\dots,p_{k}$ be the
sequence of points along $q$ that correspond to the vertical coordinates where
the path $\pi$ passes through the corresponding cell boundaries corresponding
to vertices $s_j,s_{j+1}\dots,s_{k}$. The sequence of points lies on the
directed line supporting the edge $\overline{q_i q_{i+1}}$ and the points
appear in their order along this line in the sequence due to the monotonicity
of $\pi$. Since $\pi$ is a feasible path it lies in the free-space and
therefore we have $\| p_{k'}-s_{k'} \|$ for every $j \leq k' \leq k$. This
implies that all predicates in $\P'$ are true.  We can make a similar argument
for the vertical monotonicity predicates $\ref*{vmp}_{(i,j,k)}$ for a fixed
    column $j$.  This shows that a feasible path $\pi$ that passes through the
    cells of $\C$ implies that the conjunction of induced predicates $\P$ is
    true.

It remains to show the other direction: Any feasible sequences of cells implies
the existence of a feasible path.  It is clear that the relationship between a
feasible path $\pi$ and the endpoint predicates as well as the vertex-edge
predicates, as described above, gives us the existence of a continuous (not
necessarily monotone) path $\pi$ that stays inside the free-space and connects
$(0,0)$ with $(1,1)$. We now have to argue that the monotonicity predicates
imply that there always exists such a path that is also $(x,y)$-monotone. 

Assume for the sake of contradiction that the conjunction of predicates in $\P$
is true, but there exists no feasible path through the sequence of cells $\C$.
In this case, it must be that either a horizontal passage or a vertical passage
is not possible. Concretely, in the first case, there must be two vertices
$s_j$ and $s_k$ and a directed edge $e=\overline{q_i q_{i+1}}$, such that there
exist no two points $p_1$ and $p_2$ on $e$, such that $p_1$ appears before
$p_2$ on $e$, and such that $\|p_1 - s_j\| \leq \rho$ and $\|p_2-s_k\| \leq \rho$.
However, $\ref*{hmp}_{i,j,k}$ is contained in $\P$ and by \lemref{hmp-realize}
two such points $p_1$ and $p_2$ must exist. We obtain a contradiction. In the
second case, the argument is similar.  Therefore, a feasible sequences of cells
implies a feasible path, as claimed.  \end{proof}

\begin{lemma}\lemlab{compute-feasible} Given a truth assignment to all
predicates of two curves $q$ and $s$, we can decide if there exists a feasible
sequence of cells in $O(t_s t_q (t_s+t_q))$ time without knowing $q$ or $s$.  \end{lemma}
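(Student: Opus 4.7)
The plan is a purely combinatorial dynamic program on the $(t_q-1)\times(t_s-1)$ grid of cells $C_{i,j}$ that inspects only the supplied truth values (and the two dimensions). The only subtlety is that the monotonicity predicates \ref*{hmp} and \ref*{vmp} couple \emph{every} pair of cells visited in a common row (respectively column); I would handle this by carrying the starting position of the current straight run inside the DP state so that each new cell only triggers finitely many incremental checks.

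Concretely, I would maintain two Boolean tables
\[
H[i][j][a]=\text{``a feasible prefix ends at $C_{i,j}$ whose current horizontal run in row $i$ started at column $a$,''}
\]
\[
V[i][j][b]=\text{``a feasible prefix ends at $C_{i,j}$ whose current vertical run in column $j$ started at row $b$,''}
\]
initialised by $H[1][1][1]=V[1][1][1]=\ref*{ep}$, and accept iff $\ref*{ep2}$ is true and some state at $(t_q{-}1,t_s{-}1)$ is true. There are four transition families: a horizontal extension $H[i][j][a]\to H[i][j{+}1][a]$ guarded by $\ref*{hvep}_{(i,j+1)}$ together with $\ref*{hmp}_{(i,c+1,j+1)}$ for all $a\le c\le j-1$; the symmetric vertical extension for $V$; a turn $H[i][j][a]\to V[i{+}1][j][i]$ guarded by $\ref*{vvep}_{(i+1,j)}$; and a turn $V[i][j][b]\to H[i][j{+}1][j]$ guarded by $\ref*{hvep}_{(i,j+1)}$. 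Correctness follows by induction on the length of the cell sequence, invoking \lemref{hmp-realize} to certify that the incremental monotonicity tests are sufficient, and using the invariant that once the path leaves a row (respectively column) no future predicate in that row (column) is ever induced, which is exactly what allows the DP to discard the entry index at a turn.

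To achieve the promised running time I would precompute auxiliary tables $\mathrm{HMono}[i][a][b]$ and $\mathrm{VMono}[j][b][i]$, where $\mathrm{HMono}[i][a][b]$ is true iff every $\ref*{hmp}_{(i,c+1,b)}$ with $a\le c\le b-2$ is true, and symmetrically for $\mathrm{VMono}$; a rolling conjunction over $a$ (for each fixed $i,b$) fills these in $O(t_q t_s^2+t_s t_q^2)$ time. With those tables in hand each transition collapses to a single $\wedge$ of $O(1)$ table lookups. Since $H$ has $O(t_q t_s^2)$ entries and $V$ has $O(t_s t_q^2)$, both the precomputation and the main DP loop run in $O(t_s t_q(t_s+t_q))$ time, matching the bound in the lemma. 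The main conceptual obstacle is convincing oneself of the locality property above and that the two modes $H$ and $V$, indexed by the row-entry column and column-entry row respectively, exhaust every configuration a feasible cell sequence can be in; the rest is bookkeeping.
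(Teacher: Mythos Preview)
Your proposal is correct and achieves the stated bound, but it takes a different route than the paper.  The paper keeps only $O(1)$ state per cell: for each $C_{i,j}$ it records whether the cell is reachable from the left and from below, together with a single ``previous turn'' index in each direction (the \emph{latest} row or column at which the incoming run began).  The key observation justifying this compression is a dominance property you did not exploit: if $C_{i,a}$ and $C_{i,b}$ with $a<b$ are both reachable from below, then the set of horizontal monotonicity constraints incurred by entering row $i$ at column $b$ is a strict subset of those incurred by entering at column $a$, so remembering only the maximal entry point never loses a feasible extension.  With this observation the paper processes each of the $O(t_st_q)$ cells in $O(t_s+t_q)$ time (scanning the monotonicity predicates back to the stored turn index), matching the bound with only $O(t_st_q)$ working space.

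Your approach instead materialises the full three-dimensional state $H[i][j][a]$, $V[i][j][b]$ and amortises the monotonicity checks via the precomputed conjunction tables $\mathrm{HMono}$ and $\mathrm{VMono}$.  This is perfectly valid and arguably easier to verify line by line, at the cost of $\Theta(t_st_q(t_s+t_q))$ space rather than $\Theta(t_st_q)$; since the lemma only asserts a time bound, that is immaterial.  Two minor remarks: the appeal to \lemref{hmp-realize} is unnecessary here, as the present lemma is purely combinatorial (that lemma is used for \lemref{hlp-correct}, not for deciding feasibility from truth values); and your ``locality'' invariant is simply the monotonicity of valid sequences, which guarantees each row and each column is visited in a single contiguous block.
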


\begin{proof} Let $C_{i,j}$ denote the cell in the free space diagram that
corresponds to the $i$th edge on $q$ (the $i$th row) and the $j$th edge on $s$
(the $jth$ column).  For the sake of this proof, we re-define the notion of a
valid sequence of cells. We keep the definition as before, except that we drop
the requirement that $i_k=t_q-1, j_k=t_s-1$. That is, a valid path may end at
any cell. Such a path is feasible if the conjunction of its induced predicates
is true, as before. We want to process these cells in the lexicographical
ordering of their indices $(i,j)$  and determine for each cell whether it is
reachable by such a feasible sequence of cells. Concretely, this happens for a
cell $C_{i,j}$ if and only if there exists a feasible sequence of cells that
ends with $(i,j)$. Furthermore, we are interested in the second-to-last step.
If there exists a feasible sequence that ends with the two pairs
$(i-1,j),(i,j)$ we say that $C_{i,j}$ is \emph{reachable from below}, and
similarly if there exists a feasible sequence that
ends with the two pairs $(i,j-1),(i,j)$ we say that $C_{i,j}$ is
\emph{reachable from the left}.
In addition, for each processed cell $C_{i,j}$, we maintain two indices: 
\begin{enumerate}[(i)]
\item if $C_{i,j}$ is reachable from the left, we maintain the
maximal column index $j' \leq j$ such that $C_{i,j'}$ is reachable from
below.  
\item if $C_{i,j}$ is reachable from the below, we maintain the
maximal row index $i' \leq j$ such that $C_{i',j}$ is reachable from the
left.
\end{enumerate} 
We call these indices the \emph{previous right turn} and the \emph{previous
left turn}. Intuitively, the indices describe the index of the cell where the
feasible path that reached the cell from below previously turned left, and
respectively where the feasible path that reached the cell from the left
previously turned right. 

We now describe the algorithm.
If $\ref*{ep} \wedge \ref*{ep2}$ evaluates to false, the algorithm returns false.
Otherwise, we mark the cell $C_{1,1}$ as reachable from below and reachable
from the left. 
Processing a cell $C_{i,j}$ is done by executing the following steps: 
\begin{compactitem}
\item If $C_{i-1,j}$ is reachable from the left and if $\ref*{hvep}_{i,j}$
evaluates to true, then we mark $C_{i,j}$ as reachable from below and we set
its previous left turn to $i-1$.  

\item  If $C_{i-1,j}$ is reachable from below, let $i'$ denote its previous
left turn. If $\ref*{hvep}_{i,j}$ evaluates to true and if $\ref*{vmp}_{i,j,i''}$
evaluates to true for $i'\leq i''\leq i-1$, then we mark  $C_{i,j}$ as
reachable from below and we set its previous left turn to $i'$.

\item If $C_{i,j-1}$ is reachable from below and if $\ref*{vvep}_{i,j}$
evaluates to true, then we mark $C_{i,j}$ as reachable from the left and we set
its previous left turn to $j-1$.  

\item  If $C_{i,j-1}$ is reachable from below, let $j'$ denote its previous
right turn. If $\ref*{vvep}_{i,j}$ evaluates to true and if $\ref*{hmp}_{i,j,j''}$
evaluates to true for $j'\leq j''\leq j-1$, then we mark  $C_{i,j}$ as
reachable from the left and we set its previous right turn to $j'$.

\item Finally, if in the above steps we marked $C_{i,j}$ both as reachable from
below and reachable from the left, we set the previous right turn to $j$ and
the previous left turn to $i$.
\end{compactitem}

Clearly, processing each cell takes time at most $O(t_q+t_s)$. In total we are
processing $O(t_q t_s)$ cells. Therefore the total running time is $O(t_q t_s
(t_q + t_s))$.  The correctness of the algorithm can be proven by induction on
the cells in their processing order.
\end{proof}

\subsubsection{Low-level Predicates}
\newcounter{llp-counter}

In the following, we describe a set of simpler predicates that help us build 
a data structure. Each group of low-level predicates will be used to
represent one high-level predicate.  Let $a_1$ be the vertex and let
$\overline{b_1 b_2}$ be the edge of a vertex-edge predicate 
\ref*{hvep} (respectively, \ref*{vvep}). 
We define the following three predicates.
\begin{compactenum}[(a)]
\item $\|a_1-b_1\| \leq \rho$.  \label{a1b1}
\item $\|a_1-b_2\| \leq \rho$.  \label{a1b2}
\item $a_1$ is contained in the rotated rectangle $R$ with side lengths 
$2r$ by $\|b_1-b_2\|$ which is contained in the Minkowski sum of the edge
with the disk of radius $\rho$. (Refer to \figref{llp} (left))
\label{rectangle}
\setcounter{llp-counter}{\value{enumi}}
\end{compactenum}

\begin{lemma}
Given the truth values of the predicates (\ref*{a1b1})-(\ref*{rectangle}) one
can determine the truth value of the predicate \ref*{hvep} (respectively,
\ref*{vvep}).\footnote{This property was also used in the data structure by de Berg \etal \cite{de2013fast}}
\end{lemma}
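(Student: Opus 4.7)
The plan is to show that \ref*{hvep} is exactly the logical disjunction of the three low-level predicates (\ref*{a1b1}), (\ref*{a1b2}), (\ref*{rectangle}). The key observation is that \ref*{hvep} asks whether $a_1$ lies within distance $\rho$ of some point of the segment $\overline{b_1 b_2}$, which is equivalent to $a_1$ belonging to the Minkowski sum of $\overline{b_1 b_2}$ with the closed disk of radius $\rho$. This Minkowski sum is a stadium-shaped region, and my plan is to decompose it into the rectangle $R$ together with two end caps formed by the disks centered at the endpoints.

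Concretely, I would verify the set equality
\[
  \overline{b_1 b_2} \oplus \disk{0}{\rho} \;=\; R \,\cup\, \disk{b_1}{\rho} \,\cup\, \disk{b_2}{\rho}.
\]
The right-to-left inclusion is immediate: $R$ lies in the Minkowski sum by definition, and any point within $\rho$ of $b_i$ is within $\rho$ of a point of the segment. For the left-to-right inclusion I would take an arbitrary point $p$ in the stadium, drop a perpendicular from $p$ onto the line supporting $\overline{b_1 b_2}$, and let $p'$ denote its foot. If $p'$ lies on the segment, then $p'$ realizes the distance from $p$ to the segment, so $\|p - p'\| \leq \rho$ and $p \in R$. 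Otherwise $p'$ lies strictly on the $b_i$-side for some $i \in \{1,2\}$, in which case the closest point on the segment to $p$ is the endpoint $b_i$ itself, giving $\|p - b_i\| \leq \rho$ and placing $p$ in $\disk{b_i}{\rho}$.

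Combining the geometric identity with the interpretation of the three low-level predicates shows that \ref*{hvep} is true iff $a_1$ lies in the stadium iff (\ref*{a1b1}) $\vee$ (\ref*{a1b2}) $\vee$ (\ref*{rectangle}), and hence the truth value of \ref*{hvep} is determined by the three predicates. The argument for \ref*{vvep} is identical after swapping the roles of $q$ and $s$. The technical content is light; the only delicate step is the perpendicular-foot case analysis that certifies the stadium decomposition is exact, and this is the sole place I would be careful to treat the boundary case $p' \in \{b_1, b_2\}$ (which, harmlessly, lies in both $R$ and the corresponding endpoint disk).
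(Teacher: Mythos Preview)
Your proposal is correct and follows essentially the same approach as the paper: both argue that \ref*{hvep} holds iff $a_1$ lies in the Minkowski sum of $\overline{b_1 b_2}$ with the disk of radius $\rho$, and that this stadium region equals $R \cup \disk{b_1}{\rho} \cup \disk{b_2}{\rho}$, so the predicate is equivalent to $(\ref*{a1b1}) \vee (\ref*{a1b2}) \vee (\ref*{rectangle})$. The paper's proof simply asserts the decomposition in one sentence, whereas you spell out the perpendicular-foot case analysis; this extra care is fine but not strictly necessary for such a standard fact.
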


\begin{proof}
Consider a vertex-edge predicate with corresponding vertex $a_1$ and edge
$\overline{b_1 b_2}$ and assume we know the true values of predicates (a),(b)
and (c). We can determine the truth value of the high-level certificate as $a
\vee b \vee c$. Indeed, the union of the two disks and the rectangle is equal
to the Minkowski sum of the edge with a disk of radius $\rho$. This is exactly the
locus of values for $a_1$ for which the predicate should return true.
\end{proof}

\begin{figure}
\centering
\includegraphics{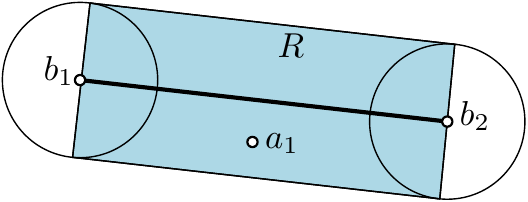}
\hspace{3em}
\includegraphics{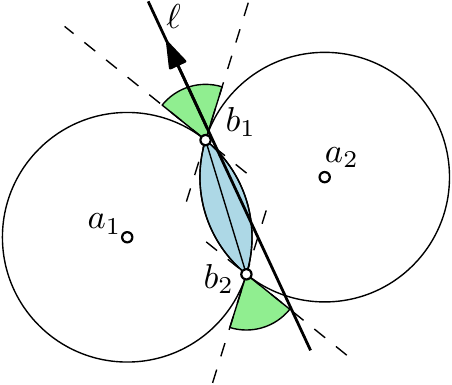}
\caption{Geometric objects involved in the low-level predicates.}
\figlab{llp}
\end{figure}

We also want to break down the monotonicity predicates into a constant number
of simpler predicates.  Let $a_1$, $a_2$ be the vertices and let $\ell$ be the
line supporting the directed edge $e$ of a monotonicity predicate \ref*{hmp}
(respectively, \ref*{vmp}). 
We begin with the following simple predicates
  
\begin{compactenum}[(a)]
\setcounter{enumi}{\value{llp-counter}}

\item The line $\ell$ intersects the circle of radius $\rho$ centered at $a_1$.
\label{line-circle-a1}
\label{firstitem2}

\item The line $\ell$ intersects the circle of radius $\rho$ centered at $a_2$.
\label{line-circle-a2}

\item The angle between the translation vector $(a_2-a_1)$ and the edge $e$ is
at most $\frac{\pi}{2}$.  
\setcounter{llp-counter}{\value{enumi}}
\label{line-direction}

\end{compactenum}
\medskip
In addition, we will distinguish the case that the two circles of radius $\rho$
centered at the two vertices intersect each other. This is captured by the
following predicate.
\begin{compactenum}[(a)]
\setcounter{enumi}{\value{llp-counter}}

\item $\|a_1-a_2\| \leq 2\rho$
\label{circle-circle}

\setcounter{llp-counter}{\value{enumi}}
\end{compactenum}

\begin{lemma}\lemlab{caseA}
If $\ref*{line-circle-a1} \wedge \ref*{line-circle-a2} \wedge
\ref*{line-direction}$ evaluates to true, then the corresponding monotonicity
predicate evaluates to true.  Furthermore, if \ref*{circle-circle} evaluates to
false or if the line does not intersect the lens formed by the two disks at $a_1$ and $a_2$, 
then $\ref*{line-circle-a1} \wedge \ref*{line-circle-a2} \wedge
\ref*{line-direction}$ is equivalent to the corresponding monotonicity predicate.
\end{lemma}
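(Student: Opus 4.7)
The plan is to reduce both directions to an analysis of the two intervals $I_1 := \ell \cap \disk{a_1}{\rho}$ and $I_2 := \ell \cap \disk{a_2}{\rho}$ on the directed line $\ell$. Let $c_1, c_2$ denote the orthogonal projections of $a_1, a_2$ onto $\ell$; these are the midpoints of $I_1$ and $I_2$ respectively (whenever the intervals are non-empty). The key observation that links predicate \ref*{line-direction} to the geometry on $\ell$ is that the signed displacement from $c_1$ to $c_2$ along the direction $\vec e$ of $\ell$ equals $(a_2-a_1)\cdot \vec e$, and this is $\ge 0$ precisely when the angle between $(a_2-a_1)$ and $e$ is at most $\pi/2$.

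For the first assertion (forward direction), assume \ref*{line-circle-a1}, \ref*{line-circle-a2} and \ref*{line-direction} all hold. Then $I_1$ and $I_2$ are non-empty (by (d) and (e)), so $c_1 \in I_1$ and $c_2 \in I_2$, and $c_2$ lies at or after $c_1$ on $\ell$ (by (f) and the observation above). Setting $p_1 := c_1$ and $p_2 := c_2$ witnesses the monotonicity predicate; so the high-level predicate is true. This part is essentially one line of geometry once the projections are introduced.

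For the second assertion, the forward implication was just handled, so we only need monotonicity $\Rightarrow$ (d)$\wedge$(e)$\wedge$(f) under the extra hypothesis. Given $p_1, p_2$ on $\ell$ with $p_1 \in \disk{a_1}{\rho}$, $p_2 \in \disk{a_2}{\rho}$ and $p_1$ before $p_2$ on $\ell$, we immediately get (d) and (e). The point where the extra hypothesis enters is (f): we will argue that when \ref*{circle-circle} is false, or when $\ell$ misses the lens $\disk{a_1}{\rho}\cap\disk{a_2}{\rho}$, the intervals $I_1$ and $I_2$ are \emph{disjoint}. Indeed, any point in $I_1 \cap I_2$ would lie on $\ell$ and in both disks, hence in $\ell\cap (\disk{a_1}{\rho}\cap \disk{a_2}{\rho})$, which is empty in either sub-case. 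With $I_1$ and $I_2$ disjoint, the condition $p_1$ before $p_2$ forces the whole interval $I_1$ to lie strictly before the whole interval $I_2$ on $\ell$, in particular $c_1$ lies before $c_2$; by the opening observation this gives $(a_2-a_1)\cdot \vec e \ge 0$, i.e.\ predicate (f) holds.

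The main obstacle, and the reason the extra hypothesis is needed rather than the converse holding unconditionally, is exactly the case where $\ell$ stabs the lens: then one can pick $p_1=p_2$ inside the lens, which satisfies monotonicity trivially, but the angle between $(a_2-a_1)$ and $e$ may be larger than $\pi/2$, so (f) can fail. The role of hypothesis \ref*{circle-circle} being false, or the line missing the lens, is precisely to exclude the degenerate witness $p_1=p_2$ and force the two witnesses into disjoint intervals, which is the geometric content that drives the equivalence.
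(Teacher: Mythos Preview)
Your proof is correct and follows essentially the same approach as the paper: both use the orthogonal projections of $a_1,a_2$ onto $\ell$ (your $c_1,c_2$; the paper's $a_1',a_2'$) as witnesses for the forward direction, and both argue that under the extra hypothesis the two chord intervals $I_1,I_2$ are disjoint, so any ordered witness pair forces the projections to be ordered, yielding~(f). Your write-up is in fact a bit more explicit than the paper's---you spell out the dot-product characterization of~(f) and the disjointness argument---but the underlying idea is identical.
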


\begin{proof}
Consider a parametrized representation of the line given by a point 
$p_{\ell} \in \Re^2$ and a direction vector $v_{\ell} \in \Re^2$ with $\|v_{\ell}\|=1$
\[\ell: \{ p_{\ell} + t v_{\ell} ~|~ t \in \Re \}.\] 

Let $a_1' = p_{\ell} + \langle a_1-p_{\ell}, v_{\ell} \rangle v_\ell$ and let 
$a_2' = p_{\ell} + \langle a_2-p_{\ell}, v_{\ell} \rangle v_\ell$, where
$\langle\cdot,\cdot\rangle$ denotes the inner product.
Note that the point $a_1'$ (respectively $a_2'$) is an orthogonal projection
onto the line $\ell$ and minimizes the distance to $a_1$ (respectively, $a_2$).
Therefore (\ref*{line-circle-a1}) implies $\|a_1-a_1'\| \leq \rho$ and
(\ref*{line-circle-a2}) implies $\|a_2-a_2'\| \leq \rho$.  Furthermore,
(\ref*{line-direction}) implies that $a_1'$ appears before $a_2'$ on the line.
This implies that the corresponding monotonicity predicate evaluates to true.
Now, if (\ref*{circle-circle}) evaluates to false, then the line $\ell$ intersects
the two disks in disjoint intervals. In this case, the order along $\ell$ of any two 
points from these two intervals ($p_1$ in the intersection interval with the disk 
at $a_1$ and $p_2$ in the intersection interval with the disk at $a_2$)
indicates the correct truth value of the monotonicity predicate. Therefore also
$a_1'$ and $a_2'$ do. The same argument holds for the case that 
(\ref*{circle-circle}) evaluates to true and the line does not intersect the lens
formed by the two disks.
\end{proof}

In case (\ref*{circle-circle}) evaluates to true, let $b_1$ and $b_2$ be the two
intersection points of the two circles.  For this case we introduce the
following additional predicates.

\begin{compactenum}[(a)]
\setcounter{enumi}{\value{llp-counter}}

\item The line $\ell$ passes in between the two points $b_1$ and $b_2$ 
\label{line-b1-b2}

\item The angle of $\ell$ is contained in the range of angles of tangents of 
the circular arc between $b_1$ and $b_2$ of the circle of radius $\rho$ centered at $a_1$.
(Refer to \figref{llp} (right))
\label{arc-tangents-a1}

\label{lastitem2}
\end{compactenum}

\medskip
The purpose of the next lemma is to describe the cases in which the line
intersects the lens that is formed by the two disks at $a_1$ and $a_2$.
If (\ref*{circle-circle}) evaluates to false, the two points $b_1$ and $b_2$ are
undefined and we define the below predicates to be false. 

\begin{lemma}\lemlab{caseB}
If and only if $ \ref*{line-b1-b2} \vee \pth{ \ref*{line-circle-a1} \wedge
\ref*{line-circle-a2} \wedge \ref*{arc-tangents-a1}}$ evaluates to true, then the
line $\ell$ intersects the lens formed by the two disks of radius $\rho$ at $a_1$ and $a_2$.
\end{lemma}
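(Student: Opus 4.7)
The plan is to prove the biconditional in two directions, setting up coordinates with $a_1=(-d/2,0)$ and $a_2=(d/2,0)$ where $d=\|a_1-a_2\|\leq 2\rho$, so that $b_1=(0,h)$ and $b_2=(0,-h)$ with $h=\sqrt{\rho^2-d^2/4}$. Parameterizing circle $1$ by $\theta\mapsto(-d/2+\rho\cos\theta,\rho\sin\theta)$, its arc on the $a_2$-side (which is the portion of circle $1$ bounding the lens) corresponds to $\theta\in[-\alpha,\alpha]$ with $\alpha=\arccos(d/(2\rho))$; its tangent direction at $\theta$ is $(-\sin\theta,\cos\theta)$, so the set of undirected tangent angles is exactly $[\pi/2-\alpha,\pi/2+\alpha]$. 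A symmetric computation for the $a_1$-side arc of circle $2$ yields the same set of undirected tangent angles, a fact I will use to handle the two symmetric subcases uniformly.

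For the ``if'' direction, assume first that \ref*{line-b1-b2} holds. Because $b_1,b_2\in D_1\cap D_2$ and the lens is convex, the segment $\overline{b_1b_2}$ lies in the lens; since $\ell$ crosses this segment, $\ell$ meets the lens. For the other disjunct, write $I_i=\ell\cap D_i$ as an interval on $\ell$ centered at the orthogonal projection of $a_i$ with half-length $\sqrt{\rho^2-\operatorname{dist}(a_i,\ell)^2}$. Then $\ell$ meets the lens iff $I_1$ and $I_2$ overlap, i.e., their midpoint separation $d|\cos\phi|$ (where $\phi$ is the angle of $\ell$) is at most the sum of half-lengths. After rotating $\ell$ to be horizontal and parameterizing the configuration by $\phi$ and the signed offset of $\ell$, a short optimization shows that, on the slice where $\ell$ meets both disks, the sum of half-lengths attains its minimum at a configuration where $\ell$ is tangent to one of the disks, and this minimum equals $\sqrt{d\sin\phi(2\rho-d\sin\phi)}$. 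The inequality $d|\cos\phi|\leq\sqrt{d\sin\phi(2\rho-d\sin\phi)}$ then simplifies to $d\leq 2\rho\sin\phi$, i.e.\ $\sin\phi\geq\cos\alpha$, which is exactly the hypothesis $\phi\in[\pi/2-\alpha,\pi/2+\alpha]$ of \ref*{arc-tangents-a1}. Hence the three conjuncts force the chords to overlap, so $\ell$ meets the lens.

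For the ``only if'' direction, suppose $\ell$ meets the lens and \ref*{line-b1-b2} fails; then $b_1$ and $b_2$ lie on the same side of $\ell$, so $\ell$ does not cross the segment $\overline{b_1b_2}$, which is contained in the lens. The boundary of the lens consists of $\overline{b_1b_2}$ together with the two arcs (one of circle $1$ on the $a_2$-side, one of circle $2$ on the $a_1$-side), so $\ell$ must enter and exit the lens through a single one of these arcs. If through the $a_2$-side arc of circle $1$, then $\ell$ meets circle $1$ at two points on this arc (or is tangent), giving \ref*{line-circle-a1}; the chord between them lies in the $a_2$-side cap of the lens, which is contained in $D_2$, so $\ell$ has interior points in $D_2$ and therefore meets circle $2$, giving \ref*{line-circle-a2}. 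Using the sum-to-product identities, the chord of circle $1$ between parameters $\theta_1,\theta_2\in[-\alpha,\alpha]$ has direction $(-\sin\theta_m,\cos\theta_m)$ with $\theta_m=(\theta_1+\theta_2)/2\in[-\alpha,\alpha]$, so $\ell$'s direction is the tangent direction of the arc at parameter $\theta_m$, giving \ref*{arc-tangents-a1}. The symmetric case (entering through the $a_1$-side arc of circle $2$) is handled identically, using the observation above that both arcs have the same set of undirected tangent angles.

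The main obstacle is the optimization step in the ``if'' direction: verifying that when $\ell$ meets both disks and $\phi$ lies in the tangent range, the sum of chord half-lengths in $D_1$ and $D_2$ dominates the midpoint separation $d|\cos\phi|$. The algebraic reduction to $d\leq 2\rho\sin\phi$ is the bridge between the tangent-range hypothesis and the chord-overlap conclusion; the rest of the argument is essentially combinatorial bookkeeping around which arc $\ell$ enters and a sum-to-product identity.
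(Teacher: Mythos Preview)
Your argument is essentially correct and takes a genuinely different route from the paper. The paper's proof is projection-based: it projects the two disks and the lens onto the normal direction of $\ell$ and observes that when \ref*{arc-tangents-a1} holds the projection of the lens equals the intersection of the disk projections (so ``$\ell$ meets both disks'' $\Leftrightarrow$ ``$\ell$ meets the lens''), while when \ref*{arc-tangents-a1} fails the projection of the lens equals the projection of $\overline{b_1b_2}$ (so ``$\ell$ meets the lens'' $\Leftrightarrow$ \ref*{line-b1-b2}). Your ``if'' direction recovers the first of these facts by a direct optimization on chord half-lengths; the concavity of $c\mapsto \sqrt{\rho^2-(\cdot)^2}$ makes the boundary minimum claim clean, and the reduction to $d\le 2\rho\sin\phi$ is exactly the condition $\sin\phi\ge\cos\alpha$. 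Your ``only if'' direction is a different idea altogether: instead of projecting, you analyze which arc carries the two crossing points and read off the tangent-range condition from the chord's midpoint parameter via sum-to-product. Both approaches work; the paper's is shorter and more symmetric, yours is more explicit and arguably more transparent about why the tangent range is the right threshold.

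One correction is needed in your ``only if'' paragraph. The boundary of the lens $D_1\cap D_2$ consists of the two circular arcs only; the segment $\overline{b_1b_2}$ is a chord interior to the lens, not part of its boundary. Your intended conclusion---that if $\ell$ meets the lens without separating $b_1$ and $b_2$ then both boundary crossings lie on the same arc---is still true, but the justification should be: a line meeting the convex lens crosses its boundary in exactly two points; if these lie on different arcs, the chord of $\ell$ inside the lens splits it into two pieces with $b_1$ on the boundary of one and $b_2$ on the boundary of the other, forcing $\ell$ to separate $b_1$ from $b_2$. With that fix (and perhaps a word about the degenerate tangent case), your proof stands. Also, in that same paragraph, the phrase ``$a_2$-side cap of the lens'' is unnecessary---once both crossings are on arc~1, the chord lies in the lens by convexity, hence in $D_2$, which is all you need for \ref*{line-circle-a2}.
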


\begin{proof} 
Clearly, if (\ref*{line-b1-b2}) evaluates to true, then the line intersects
the lens, since the line segment between $b_1$ and $b_2$ is contained in the
lens. For the remainder of the proof we intend to do a case analysis based on the angle 
of $\ell$ with respect to the truth value of (\ref*{arc-tangents-a1}). Refer to \figref{lens-projections}.
\begin{figure}
\includegraphics{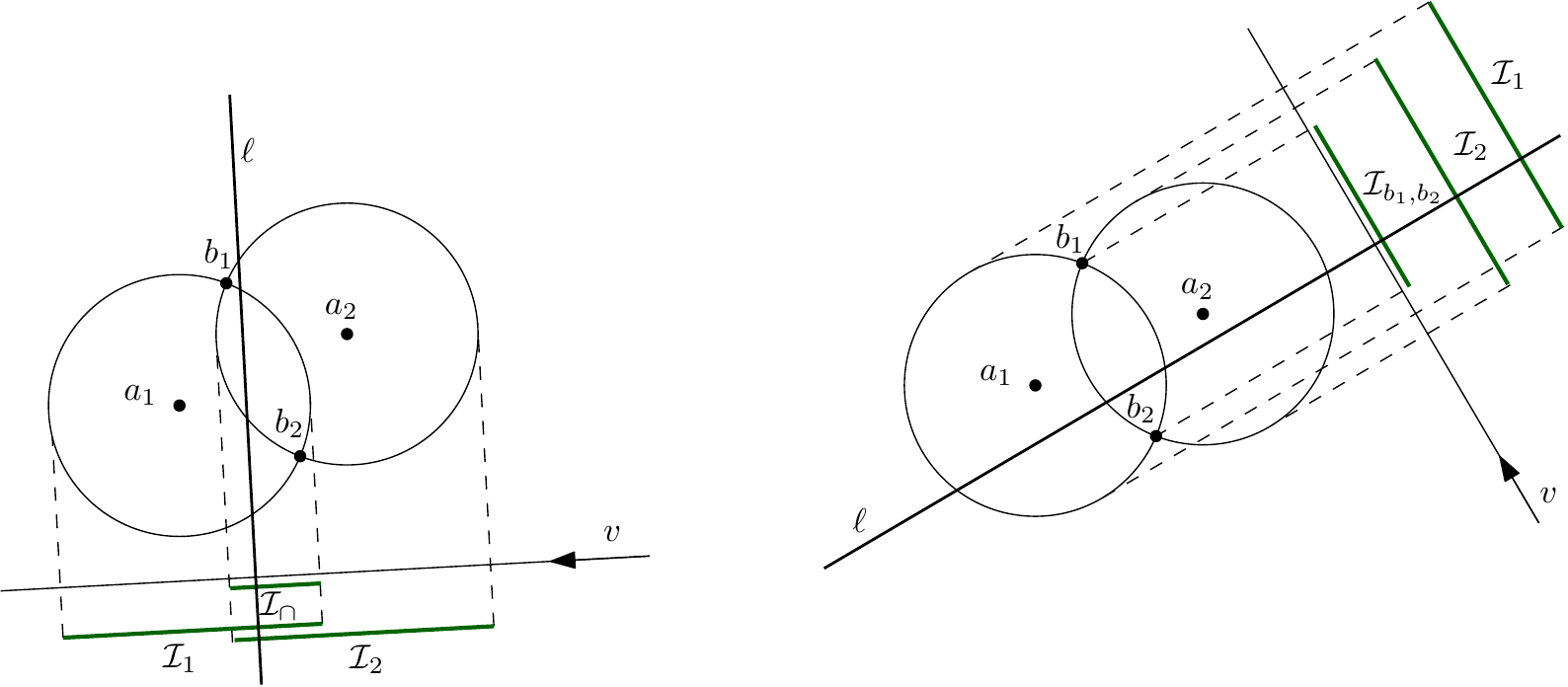}
\caption{Examples of the two cases analyzed in the proof of \lemref{caseB}.}
\figlab{lens-projections}
\end{figure}

In the following we denote with $\pi_{v}\pth{U}$ the projection 
of a set $U$ onto the subspace spanned by $v$:
\[\pi_{v}\pth{U} = \{\langle v,p \rangle ~|~ p \in U \}.\]

Let $v$ be the normal to the line $\ell$ and consider the projections of the
two disks onto this subspace 
$\I_1 = \pi_{v}\pth{\disk{a_1}{r}}$ and $\I_2 = \pi_{v}\pth{\disk{a_2}{r}}$ 
as well as the projection of the lens
$\I_{\cap}= \pi_{v}\pth{\disk{a_1}{r} \cap \disk{a_2}{r}}$

Note that, by symmetry of the lens, the range of angles of tangents is the same
on both arcs of the lens. Therefore, if (\ref*{arc-tangents-a1}) evaluates to true then 
the angle of $\ell$ lies in this range. 
In this case, we have that $\I_{\cap}=\I_1 \cap \I_2$, since the two bounding
tangents to the lens which are parallel to $\ell$ are also tangents---one to each of the two disks.
Therefore, the projection of any line $\ell'$ parallel to $\ell$ lies in the interval 
$\I_1 \cap \I_2$ if and only if it intersects the lens. This condition is equivalent to
$(\ref*{line-circle-a1} \wedge \ref*{line-circle-a2})$. 

Now assume that (\ref*{arc-tangents-a1}) evaluates to false. Consider the projection of 
the line segment bounded by $b_1$ and $b_2$ onto the same subspace as before
$ \I_{b_1,b_2} = \pi_{v}\pth{\overline{b_1 b_2}} $
For this range of angles (when (\ref*{arc-tangents-a1}) evaluates to false), we
have that $\I_{\cap} = \I_{b_1,b_2}$.  Therefore, in this case, the line
intersects the lens if and only if it intersects the line segment bounded by
$b_1$ and $b_2$. This condition is equivalent to (\ref*{line-b1-b2}).
\end{proof}

\begin{lemma}\lemlab{lens-predicate}
Assume (\ref*{circle-circle}) evaluates to true.  If the line $\ell$ intersects
the lens formed by the two disks at $a_1$ and $a_2$, then the corresponding
monotonicity predicate evaluates to true.
\end{lemma}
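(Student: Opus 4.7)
The plan is to exploit the fact that the lens $L = \disk{a_1}{\rho} \cap \disk{a_2}{\rho}$ is by definition the set of points within distance $\rho$ of both $a_1$ and $a_2$ simultaneously, so any point where $\ell$ meets $L$ serves as a single witness that satisfies both of the distance conditions required by the monotonicity predicate.

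Concretely, I would first unpack the hypothesis: by the assumption that (\ref*{circle-circle}) holds the lens $L$ is a nonempty convex region, and by the second hypothesis there exists a point $p \in \ell \cap L$. From $p \in L$ we immediately get $\|p - a_1\| \le \rho$ and $\|p - a_2\| \le \rho$. Next, I would simply set $p_1 = p_2 = p$. Both points lie on $\ell$, and with the standard convention that ``$p_1$ appears before $p_2$'' on a directed line is a weak ordering (which is the natural reading since the monotone reparametrizations in the definition of the \Frechet distance are allowed to be constant on an interval), this instantly verifies the corresponding monotonicity predicate \ref*{hmp} (or \ref*{vmp}).

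The only mild subtlety is whether the definition of monotonicity is meant with a strict or a weak ordering; this is the only potential obstacle. If a strict ordering is demanded, I would note that $\ell \cap L$ is the intersection of a line with a nonempty convex set, hence either a single point (the tangential case) or a line segment of positive length. In the generic segment case we can pick two \emph{distinct} points $p_1, p_2 \in \ell \cap L$ with $p_1$ preceding $p_2$ along the direction of $e$, and we are done. In the degenerate tangential case, a standard perturbation argument suffices: since $L$ is closed and $p$ lies on its boundary, an arbitrarily small perturbation of $p$ along $\ell$ can be absorbed by replacing $p_1$ and $p_2$ with the two endpoints of a short subsegment of $\ell$ through $p$ while preserving both distance bounds via continuity, giving strictly ordered witnesses. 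Either way, the monotonicity predicate evaluates to true, completing the proof.
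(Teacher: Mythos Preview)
Your core argument is correct and is exactly the paper's proof: pick any $p\in\ell\cap L$, note $\|p-a_1\|\le\rho$ and $\|p-a_2\|\le\rho$, and set $p_1=p_2=p$. The paper does not discuss the weak/strict ordering issue at all; it simply takes the weak reading, so your extra case analysis is unnecessary (and the tangential-case perturbation is a bit shaky, since if $\ell$ is tangent to one of the bounding circles at $p$ you cannot move $p_1$ off $p$ along $\ell$ while keeping $\|p_1-a_1\|\le\rho$).
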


\begin{proof}
Let $p$ denote some point in the intersection of the line
with the lens. We have that $\|p-a_1\|\leq \rho$ and $\|p-a_2\|\leq \rho$, since
the intersection point lies inside the intersection of the two disks.  We can
set  $p_1=p$ and $p_2=p$ to satisfy the condition for the monotonicity
predicate to be true.
\end{proof}

\begin{lemma}
Given the truth values of the predicates (\ref*{firstitem2})-(\ref*{lastitem2}) one can determine the truth 
value of the predicate \ref*{hmp} (respectively, \ref*{vmp}).
\end{lemma}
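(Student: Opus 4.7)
The plan is to combine the three structural lemmas \lemref{caseA}, \lemref{caseB}, and \lemref{lens-predicate} into a single decision procedure driven by the truth value of predicate \pth{\ref*{circle-circle}}, which tells us whether the two disks of radius $\rho$ at $a_1$ and $a_2$ actually intersect in a lens. In each of the resulting regimes, the earlier lemmas already furnish a formula for the monotonicity predicate purely in terms of the given low-level predicates, so the work is to stitch them together and verify no case is missed.

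First I would dispatch the easy case \pth{\ref*{circle-circle}} $=$ false. Here the two disks are disjoint, so the line $\ell$ cannot intersect the lens (it is empty). By the second clause of \lemref{caseA}, in this regime the monotonicity predicate is equivalent to $\ref*{line-circle-a1} \wedge \ref*{line-circle-a2} \wedge \ref*{line-direction}$, and we are done.

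Next, for \pth{\ref*{circle-circle}} $=$ true, I would use \lemref{caseB} as an oracle for lens intersection: the line $\ell$ meets the lens if and only if $\ref*{line-b1-b2} \vee \pth{\ref*{line-circle-a1} \wedge \ref*{line-circle-a2} \wedge \ref*{arc-tangents-a1}}$ evaluates to true. Call this Boolean $L$. If $L$ is true, \lemref{lens-predicate} guarantees that the monotonicity predicate is true. If $L$ is false, then the hypothesis of the second clause of \lemref{caseA} is met (the line misses the lens), and we again conclude that the monotonicity predicate is equivalent to $\ref*{line-circle-a1} \wedge \ref*{line-circle-a2} \wedge \ref*{line-direction}$. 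Putting the cases together, the truth value of the monotonicity predicate is exactly
\[
\pth{\ref*{line-circle-a1} \wedge \ref*{line-circle-a2} \wedge \ref*{line-direction}}
\;\vee\;
\pth{\ref*{circle-circle} \wedge \pth{\ref*{line-b1-b2} \vee \pth{\ref*{line-circle-a1} \wedge \ref*{line-circle-a2} \wedge \ref*{arc-tangents-a1}}}},
\]
which depends only on the given low-level predicates.

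There is no real obstacle here; the main thing to check is coverage, namely that the three prior lemmas really handle every configuration of $(\ell, a_1, a_2)$. The only subtlety is that \lemref{caseA} gives an equivalence only when the line misses the lens, which is why the lens-intersection test of \lemref{caseB} is needed as the branching condition in the \pth{\ref*{circle-circle}} $=$ true case. Once that branching is made explicit, \lemref{lens-predicate} covers the remaining regime and the stitched formula determines the monotonicity predicate from the listed low-level predicates.
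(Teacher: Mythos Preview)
Your proposal is correct and follows essentially the same approach as the paper: both combine \lemref{caseA}, \lemref{caseB}, and \lemref{lens-predicate} via a case split on whether the line meets the lens, arriving at the same Boolean formula up to your explicit conjunction with \pth{\ref*{circle-circle}} (which the paper absorbs by its convention that \pth{\ref*{line-b1-b2}} and \pth{\ref*{arc-tangents-a1}} are defined to be false when \pth{\ref*{circle-circle}} is false).
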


\begin{proof}
We can determine the truth value of the corresponding monotonicity predicate as follows 
\begin{equation}\label{monotonicity-predicate}
\pth{\ref*{line-circle-a1} \wedge \ref*{line-circle-a2} \wedge \ref*{line-direction}}
\vee  \pth{\ref*{line-b1-b2} \vee \pth{ \ref*{line-circle-a1} \wedge
\ref*{line-circle-a2} \wedge \ref*{arc-tangents-a1}}}
\end{equation}

Indeed, together, \lemref{caseA}, \lemref{caseB} and \lemref{lens-predicate} testify that
Equation \eqref{monotonicity-predicate} implies the high-level predicate. 
In the other direction, we argue based on the case whether the line intersects
the lens. If the line does not intersect the lens, or the lens
does not exist (i.e., (\ref*{circle-circle}) evaluates to false), then by \lemref{caseA}, 
the predicate implies Equation \eqref{monotonicity-predicate}.
If the line does intersect the lens, then the predicate must be true and by \lemref{caseB},  
Equation \eqref{monotonicity-predicate} must be true as well.
\end{proof}


\subsubsection{The Data Structure}

We start by describing a set of binary matrices that will guide the layout of
the data structure and the query algorithm.  The matrices play a role similar
to the free-space matrix in \secref{dfq}.  The entries of these matrices
together define a truth assignment to the overall set of predicates.  The query
algorithm will process the matrices column by column.  We describe four
matrices, one for each group of high-level predicates. The predicates \ref*{ep}
and \ref*{ep2} are tested separately.

\begin{enumerate}
\item \emph{Horizontal vertex-edge predicates}\\
The matrix $\M_{\ref*{hvep}}$ consists of $t_q$ rows and $6 \cdot t_s$ columns.
We group the columns in $t_s$ groups of $6$ columns each.  The entries of the
$j$th group in the $i$th row are associated with the high-level predicate
$\ref*{hvep}_{(i,j)}$. For each $i$ and $j$, the predicate is a function of the
vertex $q_i$ of the query curve and the edge $\overline{s_js_{j+1}}$. 
Each entry stores the truth value of a low-level predicate associated with this
high-level predicate. We define these low-level predicates as follows.
Consider the rotated rectangle $R$ around $\overline{s_js_{j+1}}$ defined in
predicate (\ref*{rectangle}).  Denote with $\ell_1$ and $\ell_2$ the two
lines that bound $R$ from above and denote with  $\ell_3$ and $\ell_4$ the
two lines that bound $R$ from below.  A point $p$ is included in the $R$ if and
only if $p$ lies below $\ell_1$ and $\ell_2$ and $p$ lies above $\ell_3$ and
$\ell_4$.
The $6$ entries of the $j$th group of columns in row $i$ are defined as follows:
\begin{inparaenum}[(1)]
\item $s_j \in \disk{q_i}{r}$ 
\item $s_{j+1} \in \disk{q_i}{r}$ 
\item $\dual{\ell_1}$  lies below $\dual{q_i}$ 
\item $\dual{\ell_2}$  lies below $\dual{q_i}$ 
\item $\dual{\ell_3}$  lies above $\dual{q_i}$ 
\item $\dual{\ell_4}$  lies above $\dual{q_i}$. 
\end{inparaenum}

\item \emph{Vertical vertex-edge predicates}\\
Similar to the matrix $\M_{\ref*{hvep}}$, the matrix $\M_{\ref*{vvep}}$ consists
of $t_q$ rows and $6 \cdot t_s$ columns.
We group the columns in $t_s$ groups of $6$ columns each.  The entries of the
$j$th group in the $i$th row are associated with the high-level predicate $\ref*{vvep}_{(i,j)}$.
For each $i$ and $j$, the predicate $\ref*{vvep}_{(i,j)}$ is a function of the
vertex $s_j$ and the edge $\overline{q_iq_{i+1}}$ of the query curve. 
Consider the rotated rectangle $R$ around $\overline{q_iq_{i+1}}$ defined in
predicate (\ref*{rectangle}).
Denote with $\ell_1$ and $\ell_2$ be the two lines that bound $R$ from above and
denote with  $\ell_3$ and $\ell_4$ be the two lines that bound $R$ from below.
The $6$ entries are defined as follows:
\begin{inparaenum}[(1)]
\item $s_j \in \disk{q_i}{r}$ 
\item $s_j \in \disk{q_{i+1}}{r}$ 
\item $s_j$ lies below  $\ell_1$  
\item $s_j$ lies below $\ell_2$  
\item $s_j$ lies above $\ell_3$  
\item $s_j$ lies above $\ell_4$.  
\end{inparaenum}

\item \emph{Horizontal monotonicity predicates}\\
The matrix $\M_{\ref*{hmp}}$ consists of $t_q$ rows and $\frac{9}{2} t_s(t_s-1)$ columns.
We group the columns in $\frac{t_s(t_s-1)}{2}$ groups of $9$ columns each.
Each group is associated with a fixed value of $j$
and $k$, with $j<k$ and $j,k \in [t_s]$.  The entries of a specific group in
the $i$th row are associated with the high-level predicate $\ref*{hmp}_{(i,j,k)}$.  
This predicate is a function of the two vertices $s_j$ and $s_k$ and the edge
$\overline{q_i q_{i+1}}$ of the query curve.  Let $\alpha_{q} \in [0,2\pi]$ be the
angle of the translation vector $(q_{i+1}-q_i)$ with the $x$-axis and let
$\alpha_{s}$ be the angle of the translation vector $(s_k-s_j)$ with the
$x$-axis.  We denote with $\ell$ the line
that supports $\overline{q_i q_{i+1}}$. Let $\ell_{+}$ and $\ell_{-}$ be two
lines parallel to $\ell$ which lie at distance $\rho$ to $\ell$ and such that
$\ell_{+}$ lies above $\ell_{-}$.  If the circles of radius $\rho$ centered at
$s_j$ and $s_k$ intersect, then let $b_{+}$ and $b_{-}$ denote their
intersection points, such that $b_{+}$ has the larger $y$-coordinate of the two
points. Consider the range of angles of tangents to the intersection of the two
disks at $s_j$ and $s_k$. This range consists of two disjoint intervals of the
circular range of angles $[0,2\pi]$. At least for one of the two intervals the
left endpoint is smaller than the right endpoint (i.e., the interval does not
contain $2\pi$).  Let $[\alpha_{-},\alpha_{+}] \subseteq [0,2\pi]$ be this
interval.  Let $\alpha_{\ell} \in [0,2\pi]$ be one of the two angles of the
undirected line $\ell$ (we need to query with both of them).  The $9$ entries
are defined as follows:
\begin{inparaenum}[(1)]
\item $s_j$ lies below $\ell_{+}$
\item $s_j$ lies above $\ell_{-}$
\item $s_{k}$ lies below $\ell_{+}$
\item $s_{k}$ lies above $\ell_{-}$
\item $\alpha_s \in [\alpha_q-\frac{\pi}{2},\alpha_q+\frac{\pi}{2}]$ 
\item $b_{+}$ lies above $\ell$
\item $b_{-}$ lies below $\ell$ 
\item $\alpha_{-} \leq \alpha_{\ell}$ 
\item $\alpha_{+} \geq \alpha_{\ell}$. 
\end{inparaenum}

\item \emph{Vertical monotonicity predicates}\\
The matrix $\M_{\ref*{vmp}}$ consists of $\frac{t_q(t_q-1)}{2}$ rows and $8
\cdot t_s$ columns.  We group the columns in $t_s$ groups of $8$ columns each.
Each row is associated with a fixed value of $i$ and $k$, with $i<k$ and $i,k
\in [t_s]$. 
The entries of the $j$th group of columns in a specific row associated with
some value of $j$ and $k$ are associated with the high-level predicate
$\ref*{hmp}_{(i,j,k)}$.  This predicate is a function of the
two vertices $q_i$ and $q_k$ and the edge $\overline{s_j s_{j+1}}$.  
Let $\alpha_{q} \in [0,2\pi]$ be the
angle of the translation vector $(q_{k}-q_i)$ with the $x$-axis and let
$\alpha_{s}$ be the angle of the translation vector $(s_{j+1}-s_j)$ with the
$x$-axis. 
We denote with $\ell$ the line that supports $\overline{s_j s_{j+1}}$.
Let $\ell_{+}$ and $\ell_{-}$ be two lines parallel to $\ell$ which lie at
distance $\rho$ to $\ell$ and such that $\ell_{+}$ lies above $\ell_{-}$.  If the
circles of radius $\rho$ centered at $q_i$ and $q_k$ intersect, then let $b_{+}$
and $b_{-}$ denote their intersection points, such that $b_{+}$ has the larger
$y$-coordinate. Let $\I_q$ be the (not necessarily connected) range of angles of
tangents to the intersection of the two disks at $q_i$ and $q_k$. 
The $8$ entries are defined as follows:
\begin{inparaenum}[(1)]
\item $\dual{\ell_{+}}$ lies above $\dual{q_i}$ 
\item $\dual{\ell_{-}}$ lies below $\dual{q_i}$ 
\item $\dual{\ell_{+}}$ lies above $\dual{q_k}$ 
\item $\dual{\ell_{-}}$ lies below $\dual{q_k}$ 
\item $\alpha_s \in [\alpha_q-\frac{\pi}{2},\alpha_q+\frac{\pi}{2}]$ 
\item $\dual{\ell}$ lies below $\dual{b_{+}}$
\item $\dual{\ell}$ lies above $\dual{b_{-}}$
\item $\alpha_{s} \in \I_{q}$. 
\end{inparaenum}
\end{enumerate}

Following the lemmas in \secref{cfq} we can make the following observation.  

\begin{observation}\obslab{feasible-truth}
Given an instance of each matrix $\M_{\ref*{hvep}}, \M_{\ref*{vvep}},
\M_{\ref*{hmp}},$ and $\M_{\ref*{vmp}}$ and given the values of the two
predicates \ref*{ep} and \ref*{ep2}, one can determine the values of the
high-level predicates using \lemref{caseA}, \lemref{caseB} and
\lemref{lens-predicate}.
\end{observation}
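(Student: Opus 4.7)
The plan is to verify, for each of the four matrices, that the block of entries associated with a fixed high-level predicate records enough low-level predicates to decide that high-level predicate. Once this is done, the observation follows by invoking \lemref{caseA}, \lemref{caseB}, and \lemref{lens-predicate}, together with the simple disjunction $\ref*{a1b1} \vee \ref*{a1b2} \vee \ref*{rectangle}$ that reduces a vertex-edge predicate to its low-level pieces. The truth values of $\ref*{ep}$ and $\ref*{ep2}$ are supplied separately, as stated.

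For the two vertex-edge matrices $\M_{\ref*{hvep}}$ and $\M_{\ref*{vvep}}$, the first two entries of each block are literally the disk-membership tests $\ref*{a1b1}$ and $\ref*{a1b2}$, and the remaining four encode membership of the relevant vertex in the rotated rectangle $R$ of predicate $\ref*{rectangle}$. Since $R$ is the intersection of four halfplanes bounded by $\ell_1,\ldots,\ell_4$, containment in $R$ is a conjunction of four above/below tests. When the vertex is an input vertex (as in $\M_{\ref*{vvep}}$) these tests are recorded directly; when the vertex is a query vertex and the lines depend on the input (as in $\M_{\ref*{hvep}}$) we use point-line duality, under which above/below relations are preserved, so that a point-versus-line test becomes the above/below test between two dual objects stored in the matrix.

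For the two monotonicity matrices, the task is to recover the Boolean formula of Equation~\eqref{monotonicity-predicate}. The line-disk intersection predicates $\ref*{line-circle-a1}$ and $\ref*{line-circle-a2}$ are each equivalent to the relevant vertex lying in the slab of width $2\rho$ around $\ell$, which is the conjunction of two halfplane tests against the parallel lines $\ell_+$ and $\ell_-$; this produces the first four entries of each block. The direction predicate $\ref*{line-direction}$ is a single angular-range comparison (entry~$5$). Predicate $\ref*{line-b1-b2}$ is captured by the pair of entries that certify $b_+$ and $b_-$ lie on opposite sides of $\ell$, and $\ref*{arc-tangents-a1}$ is captured by the two angle-comparison entries against the canonical tangent-angle interval $[\alpha_-,\alpha_+]$. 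In the vertical monotonicity matrix $\M_{\ref*{vmp}}$, the line supports an input edge while the two vertices are query vertices, so the above-mentioned halfplane and separation tests are dualized exactly as in the horizontal vertex-edge case; in $\M_{\ref*{hmp}}$ the roles are reversed and no duality is needed.

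The main obstacle is orientation bookkeeping. The $b_\pm$-separation must account for the two orientations of the undirected line $\ell$, and the tangent-angle range of the lens is a union of two intervals on the circle $[0,2\pi]$ because the lens admits tangents from either side; the matrix stores only the canonical interval $[\alpha_-,\alpha_+]$ and the query algorithm compensates by issuing the test with both of the two angle representatives $\alpha_\ell$ of the undirected line. Once these symmetric cases are dispatched, every low-level predicate demanded by \lemref{caseA}, \lemref{caseB}, and \lemref{lens-predicate} can be read off the four matrices (together with $\ref*{ep}$ and $\ref*{ep2}$), so every high-level predicate is determined as claimed.
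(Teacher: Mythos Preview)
Your proposal is correct and is in fact more detailed than the paper, which gives no proof at all for this observation: the paper simply states it as an immediate consequence of the preceding lemmas and the way the four matrices were defined. Your write-up unpacks exactly that implicit reasoning---matching each block of matrix entries to the corresponding low-level predicates $(\ref*{a1b1})$--$(\ref*{rectangle})$ and $(\ref*{line-circle-a1})$--$(\ref*{arc-tangents-a1})$, noting the duality step where needed, and then invoking \lemref{caseA}, \lemref{caseB}, \lemref{lens-predicate} and the vertex-edge disjunction to recover each high-level predicate---so the approach is the same, just made explicit.
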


\begin{observation}\obslab{colnum}
The total number of columns of the matrices $\M_{\ref*{hvep}}, \M_{\ref*{vvep}},
\M_{\ref*{hmp}},$ and $\M_{\ref*{vmp}}$ is bounded by $O(t_s^2)$.
\end{observation}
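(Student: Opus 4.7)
The plan is to prove the observation by directly summing the column counts specified in the constructions of the four matrices. This is essentially a bookkeeping exercise, since the number of columns of each matrix is pinned down in its definition; the only content is noting which term dominates.

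Concretely, I would inspect each matrix in turn. From the construction, $\M_{\ref*{hvep}}$ has $6 \cdot t_s$ columns ($6$ low-level entries per group, one group per value of $j \in [t_s]$), and by the symmetric construction $\M_{\ref*{vvep}}$ also has $6 \cdot t_s$ columns. The matrix $\M_{\ref*{vmp}}$ encodes the vertical monotonicity predicates; its column groups are indexed by a single edge index $j$ of the input curve, with $8$ entries per group, giving $8 \cdot t_s$ columns (the pairs $(i,k)$ of query-vertex indices are instead absorbed into the row count $\binom{t_q}{2}$). Finally, $\M_{\ref*{hmp}}$ has column groups indexed by unordered pairs $\{j,k\} \subseteq [t_s]$ with $j<k$, of which there are $\binom{t_s}{2}$, each carrying $9$ entries; this yields $9 \binom{t_s}{2} = \tfrac{9}{2} t_s(t_s-1)$ columns.

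Summing gives
\[
6 t_s + 6 t_s + 8 t_s + \tfrac{9}{2} t_s(t_s-1) = O(t_s^2),
\]
where the quadratic term from $\M_{\ref*{hmp}}$ dominates. There is no real obstacle in the proof; the only qualitative point worth highlighting is that the query-curve complexity $t_q$ does not appear in the column counts (only in the row counts). This is by design: the column dimension of the matrices controls the layout of the multilevel data structure, and keeping it independent of $t_q$ is precisely what allows the data structure described in \secref{cfq} to be built without knowing the query curve in advance, as emphasized in the outline.
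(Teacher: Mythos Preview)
Your proposal is correct and matches the paper's treatment: the observation is stated without proof in the paper, as it follows immediately by summing the explicitly given column counts $6t_s + 6t_s + \tfrac{9}{2}t_s(t_s-1) + 8t_s = O(t_s^2)$. Your additional remark that $t_q$ appears only in the row counts is accurate and relevant to why the data structure layout is query-independent.
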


We are now ready to state our problem in the terms of a multilevel
semialgebraic range searching problem.
The input for the multilevel semialgebraic range searching problem is a set 
$\P$ of $n$ $t$-points in $\R^2$ which we define as follows. 
Each of the columns of the matrices $\M_{\ref*{hvep}}, \M_{\ref*{vvep}},
\M_{\ref*{hmp}},$ and $\M_{\ref*{vmp}}$ defines a point $p \in \Re^2$ (or a
value $p \in \Re$, in this case we set the second coordinate to 0) derived from
each input curve. Let $c$ be the total number of columns of the four
matrices\footnote{ Indeed, the ordering of the columns does not matter as long
as it is consistent throughout the data structure and query algorithm.}.
We define a $t$-point $(\tp{1},\tp{2}, \dots, \tp{t})$ for each input curve $s$
with $t=c+2$ such that $\tp{k}$ for $k \leq c$ is the point of $s$ that is
defined by the predicate in the $k$th column, where $k \in [1,c]$ 
uniquely identifies a column across the four matrices. For $k=c+1$ we define
$\tp{k}=s_1$ and for $k=c+2$ we defined $\tp{k}=s_{t_s}$.  Note that these two
point sets are associated with the predicates $\ref*{ep}$ and $\ref*{ep2}$,
which are not captured by the matrices.
We define the set of $t$-points obtained this way to be the set $\P$. 

Before we describe the queries, we note that using the data structure described
in \secref{ml-data-structure} we use space in $S(n) =  O(n O(\log\log
n)^{t-1})$, where $t$ is in $O(t_s^2)$ by \obsref{colnum}.

In the following, we describe the queries.  As before, each query is a tuple of
$t$ semialgebraic sets $(\rr_1, \dots, \rr_t)$, where each semialgebraic set is
defined by a constant number of polynomial inequalities of constant degree.

Each entry of the matrices $\M_{\ref*{hvep}}, \M_{\ref*{vvep}},
\M_{\ref*{hmp}},$ and $\M_{\ref*{vmp}}$ defines a query range derived from the
query curve that is either a disk or a halfspace. In the first phase of the
query algorithm we compute the arrangement of these ranges for each column.
Every cell of the partition corresponds to a truth assignment to this column.
In this way we generate all possible truth assignments that correspond to
non-empty query ranges. Each truth assignment to a column can be stored as a
bit vector.  The Cartesian product of the set of bit vectors generated this way
yields a set of truth assignments to the matrices $\M_{\ref*{hvep}},
\M_{\ref*{vvep}}, \M_{\ref*{hmp}},$ and $\M_{\ref*{vmp}}$. From this set we
want to use those truth assignments only that have a feasible sequence of
cells. We can test this for each generated matrix using \obsref{feasible-truth}
and \lemref{compute-feasible} in $O(t_s t_q (t_s+t_q))$ time.

In the second phase of the query algorithm we have a fixed truth assignment and
for each column we compute the cell of the arrangement corresponding to this
truth assignment. We now refine the cells as described in \secref{dfq},
by lifting circular ranges to $\Re^3$ and mapping back the edges of the
triangulation to $\Re^2$ in order to obtain cells that can be described by a 
constant number of polynomial inequalities of constant degree.
This generates a set of ranges for each column. Finally, we take the Cartesian
product of these ranges as done in \secref{dfq}.

Putting all of the above together, we can bound the total query time with \[
O(\sqrt{n}\log^{O(t_s^2)}n\cdot  t_q^{O(t_s^2)} ) = 
O(\sqrt{n}\log^{O(t_s^2)}n )
\] assuming $t_q = O(\log^{O(1)}
n)$.

\begin{theorem}
  Given a set $S$ of $n$ polygonal curves in $\R^d$ where each curve contains 
  $t_s$ vertices, we can store $S$ in a data structure of 
    $\O\pth{n (\log\log n)^{O(t_s^2)}}$ size such that given a query
    polygonal chain of size $t_q$ and a parameter $\rho$, it can output
    all the input curves within continuous \Frechet distance of $\rho$ to the query
    in  $\O\pth{\sqrt{n} \cdot \log^{O(t_s^2)} n }$, 
    assuming $t_q=\log^{O(1)} n$.  
\end{theorem}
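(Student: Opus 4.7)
The plan is to reduce the continuous \Frechet reporting problem to the multilevel semialgebraic range searching data structure of \thmref{SARS}, using the four matrices $\M_{\ref*{hvep}}, \M_{\ref*{vvep}}, \M_{\ref*{hmp}}, \M_{\ref*{vmp}}$ to encode the \Frechet predicates column by column. First I would build the input $\P$ of $n$ $t$-points in $\R^2$ exactly as described above: each of the $c = O(t_s^2)$ columns of the four matrices, viewed across an input curve $s$, yields one point (or scalar, padded with zero) $\tp{k}(s)$ that is a fixed algebraic function of a constant number of vertices of $s$; together with $\tp{c+1}=s_1$ and $\tp{c+2}=s_{t_s}$ (for the endpoint predicates $\ref*{ep}, \ref*{ep2}$) this gives $t = c+2 = O(t_s^2)$. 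Applying \thmref{SARS} to this $\P$ immediately yields the claimed space bound $\O(n(\log\log n)^{t-1}) = \O(n(\log\log n)^{O(t_s^2)})$.

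Next I would describe the query algorithm in three phases, mirroring the strategy of \secref{dfq}. In Phase~1, for every column of the four matrices I compute the planar arrangement of the $O(1)$ semialgebraic ranges defined by the query curve (disks, halfplanes, and their duals); each cell of this arrangement corresponds to one truth assignment to that column, and there are at most $t_q^{O(1)}$ such cells per column, hence at most $t_q^{O(t_s^2)}$ global truth assignments in the Cartesian product. In Phase~2, I iterate over these global truth assignments; for each one I first run \lemref{compute-feasible} in $O(t_q t_s(t_q+t_s))$ time to discard infeasible assignments (those that do not admit a feasible monotone path through the induced predicates, by \obsref{feasible-truth} together with the high-level equivalence \lemref{hlp-correct}). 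This guarantees, by the uniqueness of the truth assignment induced by an input curve, that the final output contains no duplicates.

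In Phase~3, for each surviving feasible truth assignment I issue a single multilevel semialgebraic range query on the data structure. The query tuple $(\rr_1,\ldots,\rr_t)$ is obtained by taking, in each column, the arrangement cell prescribed by the assignment, lifting its bounding circles/halfplanes to $\R^3$ and triangulating so that each cell becomes a semialgebraic set of constant description complexity, and finally taking the Cartesian product of the resulting pieces across columns (exactly as in \secref{dfq}). Correctness now follows because \lemref{hlp-correct} asserts that the high-level predicates determine $\distFr{s}{q} \le \rho$, the low-level lemmas \lemref{caseA}, \lemref{caseB} and \lemref{lens-predicate} reduce each high-level predicate to the bits of our matrices, and $\P_{\br_1,\ldots,\br_t}$ by construction is exactly the set of input curves inducing the selected truth assignment.

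For the query time, each Phase~3 call to the $t$-level data structure costs $O(n^{1-1/d}\log^{O(t)} n + k_i) = O(\sqrt{n}\log^{O(t_s^2)} n + k_i)$ in the planar case $d=2$, and the $k_i$ sum to the total output $k$. Multiplying by the $t_q^{O(t_s^2)}$ feasible assignments and using $t_q = \log^{O(1)} n$ folds the $t_q^{O(t_s^2)}$ factor into the existing $\log^{O(t_s^2)} n$, yielding total query time $\O(\sqrt{n}\log^{O(t_s^2)} n)$. The main obstacle, which the low-level predicate machinery of \secref{cfq} was designed to overcome, is showing that every matrix column can be realized as a 2D semialgebraic query of constant description complexity; once this is in hand, the above reduction is routine.
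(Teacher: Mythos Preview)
Your proposal is correct and follows essentially the same approach as the paper: you build the $t$-point input with $t=O(t_s^2)$ from the columns of the four predicate matrices, invoke \thmref{SARS} for the space bound, and answer queries by enumerating feasible truth assignments (via \obsref{feasible-truth} and \lemref{compute-feasible}), triangulating the per-column arrangement cells, and issuing the Cartesian-product queries exactly as in \secref{dfq}. The only minor wording slip is that Phase~3 issues \emph{several} semialgebraic queries per feasible assignment (one per element of the Cartesian product of triangulated pieces), not ``a single'' one, but you clearly intend this since you spell out the Cartesian product in the next sentence.
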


\section{Conclusions}
\seclab{conclusions}

We studied the space/query-time tradeoff of multi-level data structures for
range searching under the \Frechet distance.  The aim of our study was
two-fold. On the one hand, we wanted to answer a fundamental question related
to range searching: Do multilevel data structures need to have an exponential
dependency on the number of levels? We answer the question to
the negative by proving a lower bound on the space/query time tradeoff for a
concrete problem which we refer to as multilevel stabbing problem. In
particular, our lower bound shows that finding a general technique for removing
the exponential dependency on the number of levels is not feasible.
On the other hand, we were interested in the complexity of range searching
among polygonal curves under the \Frechet distance. Previous to our work, the
complexity of this problem was largely open.  We give upper and lower bounds on
the space/query-time tradeoff for both the discrete and continuous versions of
the \Frechet distance.  The fact that we can extend our lower bound to such a
practically relevant problem further supports our claimed
negative answer to the broader range searching question mentioned above.
Our data structures invoke semialgebraic range searching within the framework
of multilevel partition trees.  Here, the major challenge lies in the \Frechet
distance not being defined as a closed form algebraic expression. In other
words, previous to our work, it was not at all obvious how semialgebraic
range searching could be applied to range searching under the \Frechet
distance. 
Our upper bounds for this problem are in line with the lower bounds, as the number of 
levels is in the order of $t$, the complexity of the polygonal curves. For the continuous
\Frechet distance, the number of levels increases to $O(t^2)$. We think that it
can be reduced to $O(t)$ by selectively using dualization in some of the
levels. 
However, doing so requires dealing with more technical details and 
given the technical nature of the current results, we have decided to
pursue this improvement as a part of future work. 

Finally, we can identify a number of interesting open problems that remain. We close our
discussion by stating some of them in no particular order.

\begin{compactenum}
\item Can the use of semialgebraic range searching be circumvented for \Frechet queries?
  A positive answer could help us solve the following open question.

\item Can we develop data structures for (discrete or continuous) \Frechet
queries that match the lower bounds in the high-space/low-query-time regime?

\item How fast can we do approximate range searching under the (discrete or
continuous) \Frechet distance? Using the standard $(1+\eps)$-approximation of
spherical ranges, we can use simplex-range searching to get the full spectrum
of the space/query time tradeoff, but even more efficient data structures might
be possible.

\item Can we prove lower bounds for answering approximate \Frechet distance 
  queries? One particular challenge here is that the known lower bound frameworks
  cannot handle approximations. For example, this might mean that we need to
  extend the pointer machine lower bound framework of Afshani~\cite{a12}. 

\item Can we do range searching under the continuous \Frechet distance among
polygonal curves in three or higher dimensions? 
In particular, we are interested in a data structure that uses $O(n \log^{t^{O(1)}}n)$
space  and answers queries in $O(n^{1-1/d}\log^{t^{O(1)}}n)$ time where 
$t$ is the maximum complexity of a query or input curve.
We suspect the answer might be negative.

\item Can we do range searching under related distance measures such as dynamic
time warping?

\item Do our lower bounds extend to multilevel stabbing queries in the semigroup model?

\item Can we prove stronger lower bounds for polygonal curves in three or higher
dimensions? We hope that our upper bounds for the discrete \Frechet distance
could be matched.

\end{compactenum}

For resolving the last two questions, the main challenge lies in proving the
equivalent of Theorem~\ref{thm:points} on page \pageref{thm:points} and a case
analysis such as the one in \secref{finalconstruction}.

\bibliographystyle{abbrv}
\bibliography{ref}

\end{document}